\newcommand{\executeiffilenewer}[3]{%
\ifnum\pdfstrcmp{\pdffilemoddate{#1}}%
{\pdffilemoddate{#2}}>0%
{\immediate\write18{#3}}\fi%
} 
\newcommand{%
\executeiffilenewer{.svg}{.pdf}%
{inkscape -z -D --file=.svg %
--export-pdf=.pdf --export-latex}%
{\input{.pdf_tex}}}[1]{%
\executeiffilenewer{#1.svg}{#1.pdf}%
{inkscape -z -D --file=#1.svg %
--export-pdf=#1.pdf --export-latex}%
{\input{#1.pdf_tex}}}%
\newcommand{\svg}[2]{\def\svgwidth{#1}%
\executeiffilenewer{#2.svg}{#2.pdf}%
{inkscape -z -D --file=#2.svg %
--export-pdf=#2.pdf --export-latex}%
{\input{#2.pdf_tex}}}
\newtheorem{theorem}{Theorem}
\newtheorem{lemma}[theorem]{Lemma}
\newtheorem{definition}[theorem]{Definition}
\newtheorem{observation}[theorem]{Observation}
\newtheorem{proposition}[theorem]{Proposition}
\DeclareMathAlphabet{\mathcal}{OMS}{cmsy}{m}{n}
\renewcommand{\S}{\mathcal{S}}
\newcommand{\X}{\mathcal{X}}
\newcommand{\F}{\mathcal{F}}
\newcommand{\nonedge}{{\textup{\textsf{\small nonedge}}}}
\newcommand{\nondeg}{{\textup{\textsf{\small nondeg}}}}
\newcommand{\size}{\textup{\textsf{\small size}}}
\newcommand{\problemq}[3]{
\begin{center}
\fbox{\parbox{0.7\linewidth}{
#1

\noindent\begin{tabularx}{\linewidth}{rX}
Input: & #2\\
Find: & #3
\end{tabularx}}}
\end{center}
}
\newcommand{\problemqz}[3]{
\begin{center}
\fbox{\parbox{0.92\linewidth}{
#1

\noindent\begin{tabularx}{\linewidth}{rX}
Input: & #2\\
Find: & #3
\end{tabularx}}}
\end{center}
}
\newcommand{\partition}[3]{$(#1,#2,#3)$-partition}
\newcommand{\partmupq}{\partition{\mu}{p}{q}}
\newcommand{\partitionsc}[3]{$(#1,#2,#3)$-\textsc{Partition}}
\newcommand{\partmupqsc}{\partitionsc{\mu}{p}{q}}
\begin{document}
\begin{frontmatter}
\title{Clustering with Local Restrictions}
\author[dl]{Daniel Lokshtanov}
\ead{daniello@ii.uib.no}
\author[dm]{D\'aniel Marx\fnref{fn2}}
\ead{dmarx@cs.bme.hu}
\address[dl]{Department of Informatics, University of Bergen, Bergen, Norway.}
\address[dm]{Computer and Automation Research Institute, Hungarian Academy of
Sciences (MTA SZTAKI), Budapest, Hungary.}
%\fntext[fn1]{Grants...}
\fntext[fn2]{Research supported by ERC Starting Grant PARAMTIGHT.}

\begin{abstract}
  We study a family of graph clustering problems where each cluster has to satisfy a certain local requirement. Formally, let $\mu$ be a
  function on the subsets of vertices of a graph $G$. In the
  \partmupqsc\ problem, the task is to find a partition of the
  vertices into clusters where each cluster $C$ satisfies the requirements that (1)
  at most $q$ edges leave $C$ and (2) $\mu(C)\le p$. Our first result
  shows that if $\mu$ is an {\em arbitrary} polynomial-time computable
  monotone function, then \partmupqsc\ can be solved in time
  $n^{O(q)}$, i.e., it is polynomial-time solvable
  {\em for every fixed $q$}. We study in detail three concrete functions $\mu$
  (the number of vertices in the cluster, number of nonedges in the cluster, maximum number of non-neighbors a vertex has in the cluster), which correspond to
   natural clustering problems. For these functions, we show
  that \partmupqsc\ can be solved in time $2^{O(p)}\cdot n^{O(1)}$ and
  in time $2^{O(q)}\cdot n^{O(1)}$ on $n$-vertex graphs, i.e., the problem is
  fixed-parameter tractable parameterized by $p$ or by $q$.
\end{abstract}

\end{frontmatter}
\section{Introduction}

Partitioning objects into clusters or similarity classes is an important task in various
applications such as data mining, facility location, interpreting
experimental data, VLSI design, parallel computing, and many more. The partition has to
satisfy certain constraints: typically, we want to ensure that objects
in a cluster are ``close'' or ``similar'' to each other and/or objects
in different clusters are ``far'' or ``dissimilar.'' Additionally, we
may want to partition the data into a certain prescribed number $k$ of clusters,
or we may have upper/lower bounds on the size of the
clusters. Different objectives and different distance/similarity
measures give rise to specific combinatorial problems.

Correlation clustering
\cite{AilonCN08,DBLP:journals/ml/BansalBC04,DBLP:conf/stacs/MathieuSS10,mathieu-Schudy-correlation}
deals with a specific form of similarity measure: for each pair of
objects, we know that either they are similar or dissimilar. This
means that the similarity information can be expressed as an
undirected graph, where the vertices represent the objects and similar
objects are adjacent.  In the ideal situation every connected
component of the graph is a clique, in which case the components form
a clustering that completely agrees with the similarity information.
However, due to inconsistencies in the data or experimental errors,
such a perfect partitioning might not always be possible.  The goal in
correlation clustering is to partition the vertices into an arbitrary
number of clusters in a way that agrees with the similarity
information as much as possible: we want to minimize the number of
pairs for which the clustering disagrees with the input data (i.e.,
similar pairs that are put into different clusters, or dissimilar
pairs that are clustered together).

In many cases, such as in variants of the correlation clustering problem defined in
the previous paragraph, the objective is to minimize the total error of the
solution. Thus the goal is to find a solution that is good in a global sense, but
this does not rule out the possibility that the solution contains clusters that
are very bad. In this paper, the opposite approach is taken: we want
to find a partition where each cluster is ``good'' in a certain
local sense. This means that the partition has to satisfy a set of local
constraints on each cluster, but we do not try to optimize the total
fitness of clusters.

The setting in this paper is the following. We want to partition the
input $n$-vertex, $m$-edge graph into an arbitrary number of clusters such that (1) at most $q$
edges leave each cluster, and (2) each cluster induces a graph that
is ``cluster-like.'' Defining what we mean by the abstract notion of
cluster-like gives rise to a family of concrete problems.  Formally,
let $\mu$ be a function that assigns a nonnegative integer to each
subset of vertices in the graph and let us require $\mu(X)\le p$ for
every cluster $X$ of the partition. There are many reasonable choices
for the measure $\mu$ that correspond to natural problems. In
particular, in this paper we will obtain concrete results for the
following three measures:
\begin{itemize}
\item $\nonedge(X)$ is the number of nonedges induced by $X$,
\item $\nondeg(X)$ is the maximum degree of the {\em complement} of
  the graph induced by $X$ (i.e., each vertex of $X$ is adjacent to
  all but at most $\nondeg(X)$ other vertices in $X$), and
\item $\size(X)=|X|$ is the number of vertices of $X$.
\end{itemize}
The first two functions express that each cluster should induce a graph that is close to being a clique. Specifically, a vertex set $S$ such that $\nonedge(S) \leq p$ is called a $p$-defective clique, while a vertex set $S$ such that $\nondeg(S) \leq p$ is called a $p$-plex. These generalizations of cliques have been studied in the context of clustering~\cite{GuoKKU11,GuoKNU10,DBLP:journals/algorithmica/BevernMN12}, as well as in other contexts~\cite{BalasundaramBH11}. The third function only requires that each cluster is small. While this is not really a natural requirement for a clustering problem, partitioning graphs into small vertex sets such that each has few outgoing edges has applications in Field Programmable Data Array design~\cite{DBLP:journals/dm/LangstonP98} and hence is of independent interest. For a given function $\mu$ and integers $p$ and $q$, we
denote by \partmupqsc\ the problem of partitioning the
vertices into clusters such that at most $q$ edges leave each cluster
and $\mu(X)\le p$ for every cluster.
Note that by solving this
problem, we can also solve the optimization version where the goal is
to minimize $q$ subject to a fixed $p$ (or the other way around).

Our first result is very simple yet powerful. Let $\mu$ be a function
satisfying the mild technical conditions that it is polynomial-time
computable and monotone (i.e., if $X\subseteq Y$, then $\mu(X)\le
\mu(Y)$). Observe that for example all three functions defined above
satisfy these conditions. Our first result shows that for {\em every
  function $\mu$} satisfying these conditions and {\em every fixed
  integer} $q$, the problem \partmupqsc\ can be solved in polynomial
time (the value $p$ is considered to be part of the input). For
example, it can be decided in polynomial time if there is a clustering
where at most 13 edges leave each cluster and each cluster induces at
most 27 nonedges (or even the more general question, where 
%instead of 27, 
 the maximum number $p$ of nonedges is given in the input). This
might be  surprising: we believe that most people would guess
that this problem is NP-hard. The algorithm is based on a simple
application of uncrossing of posimodular functions and on the
fact that for fixed $q$ we can enumerate every (connected) cluster with at
most $q$ outgoing edges. The crucial observation is that if every
vertex can be {\em covered} by a good cluster, then the vertices can
be {\em partitioned} into good clusters. Thus the problem boils down
to checking for each vertex $v$ if it is contained in a suitable
cluster.

While the algorithm is simple in hindsight, considerable efforts have
been spent on solving some very particular special cases. For example,
Heggernes et al.~\cite{pq-clustering} gave a polynomial-time algorithm for \partitionsc{\nonedge}{1}{3} and
Langston and Plaut~\cite{DBLP:journals/dm/LangstonP98} argued that the very deep results of Robertson and
Seymour on graph minors and immersions imply that
\partitionsc{\size}{p}{q} is polynomial-time solvable for every fixed
$p$ and $q$. These results follow as straightforward corollaries from our
first result.

Although this simple algorithm is polynomial for every fixed $q$, the
running time on is about $n^{\Theta(q)}$, thus it is not efficient even for small
values of $q$. We do not hope for polynomial time algorithms for the general case, since both the $(\nonedge,p,q)$-\textsc{Partition} and $(\size,p,q)$-\textsc{Partition} problems are known to be NP-complete when both $p$ and $q$ are part of the input~\cite{pq-clustering,Go95}. Hence, to improve the running time, we consider the problem from the viewpoint of parameterized complexity. We show that for
several natural measures $\mu$, including the three defined above, the
clustering problem can be solved in time $2^{O(q)}\cdot
n^{O(1)}$, that is, the problem is fixed-parameter tractable (FPT)
parameterized by the bound $q$ on the number of edges leaving a
cluster. Moreover, the bound $p$ can be assumed to be part of the
input. Thus this algorithm can be efficient for small values of $q$
(say, $O(\log n)$) even if $p$ is large.  The problem
\partitionsc{\size}{p}{q}\ appears in the open problem list of
the 1999 monograph of Downey and Fellows \cite{MR2001b:68042} under the name ``Minimum
Degree Partition,'' where it is suggested that the problem is probably
W[1]-hard parameterized by $q$. Our result answers this question by
showing that the problem is FPT, contrary to the expectation of Downey
and Fellows.

A crucial ingredient of our parameterized algorithm is the notion of
{\em important separators,} which has been used (implicitly or
explicitly) to obtain fixed-parameter tractability results for various
cut- or separator-related problems. In particular, we use the
``randomized selection of important sets'' argument that was
introduced very recently in \cite{marx-razgon-multicut} to prove the
fixed-parameter tractability of (edge and vertex) multicut. With these
tools at hand, we can reduce \partmupqsc\ to a special case that we
call the ``Satellite Problem.'' We show that if the Satellite Problem is
fixed-parameter tractable parameterized by $q$ for a particular
function $\mu$, then \partmupqsc\ is also fixed-parameter tractable
parameterized by $q$. It seems that for many reasonable functions
$\mu$, the Satellite Problem can be solved by dynamic programming
techniques. In particular, this is true for the three functions defined
above, and this results in algorithms with running time
$2^{O(q)}\cdot n^{O(1)}$. Note that the reduction to the \textsc{Satellite
  Problem} works for every monotone $\mu$, and we need arguments
specific to a particular $\mu$ only in the algorithms for \textsc{Satellite
  Problem}.

We also investigate \partmupqsc\ parameterized by $p$ and show that for
$\mu=\size$, $\nonedge$, and $\nondeg$, the problem is FPT
parameterized by $p$: it can be solved in time $2^{O(p)}\cdot
n^{O(1)}$ (this time the value $q$ is part of the input). For these
results, we use a combination of color coding and dynamic programming.
Interestingly, these algorithms rely on the assumption that there are no parallel edges in the graph (in contrast to parameterization by $q$, where our algorithms work the same even if parallel edges are  allowed). In fact, if parallel edges are allowed, then in the case $\mu=\nonedge$ or $\nondeg$, the problem is NP-hard even for $p=0$, while for $\mu=\size$, it is W[1]-hard parameterized by $p$ (i.e., unlikely to be FPT).

Previous work on fixed-parameter tractability of clustering problems
focused mostly on parameterization by the total error. In problems
such as \textsc{Cluster Editing} and \textsc{Cluster Vertex Deletion},
the task is to modify the graph into a disjoint union of cliques by at
most $k$ edge deletions or
additions~\cite{DBLP:conf/fct/FellowsLRS07,DBLP:journals/tcs/Guo09,DBLP:journals/mst/HuffnerKMN10}.
Generalizations of the problem have been considered in
\cite{GuoKKU11,GuoKNU10,DBLP:journals/algorithmica/BevernMN12},
where the graph has to modified in such a way that every component is ``clique-like'' defined by measures similar to the ones in the current paper. It is not possible to directly compare these results
with our results as we explore a different objective: instead of
bounding the total number of operations required to turn the graph
into clusters, we have a bound on the number of operations that can
affect a each cluster. However, in general, we can say that FPT
results are more interesting for parameters that are typically
smaller. Intuitively, the number of editing operations affecting a single
cluster is much smaller than the total number of operations, thus FPT
algorithms for problems parameterized by local bounds on the clusters can be
considered more interesting than FPT algorithms for problems parameterized by the
total number of operations.

%\clearpage
\section{Clustering and uncrossing}
\label{sec:uncross}
Given an undirected graph $G$, we denote by $\Delta(X)$ the set of edges between $X$ and $V(G)\setminus
X$, and define $d(X)=|\Delta(X)|$.  We will use two well-known and 
easily checkable properties of the function $d$: for  $X,Y\subseteq
V(G)$, $d$ satisfies the {\em submodular} inequality
\[
 d(X)+d(Y)\ge d(X\cap Y)+d(Y\cup X)
 \]
 and the {\em posimodular} inequality
 \[
 d(X)+d(Y)\ge d(X\setminus Y)+d(Y\setminus X).
 \]

 Let $\mu: 2^{V(G)}\to\mathbb{Z}^+$ be a function
assigning nonnegative integers to sets of vertices of $G$. Let $p$ and
$q$ be two integers. We say that a set $C\subseteq V(G)$ is a {\em
  $(\mu,p,q)$-cluster} if $\mu(C)\le p$ and $d(C)\le q$. A {\em
\partmupq} of $G$ is a partition of $V(G)$ into
$(\mu,p,q)$-clusters. The main problem considered in this paper is
finding such a partition.
A necessary condition for the existence of \partmupq\ is that for
every vertex $v\in V(G)$ there exists a $(\mu,p,q)$-cluster that contains
$v$. Therefore, we are also interested in the problem of finding a
cluster that contains a particular vertex.

\noindent
\parbox{0.45\linewidth}{
\problemqz{\partmupqsc}{A graph $G$, integers $p$, $q$.}
{A \partmupq\ of $G$.}}
\parbox{0.55\linewidth}{
\problemqz{$(\mu,p,q)$-\textsc{cluster}}
{Graph $G$, integers $p$, $q$, vertex $v$.}
{A $(\mu,p,q)$-cluster $C$ containing $v$.}}

The main observation of this section is that if $\mu$ is {\em
  monotone} (i.e., $\mu(X)\le \mu(Y)$ for every $X\subseteq Y$), then
every vertex $v$ being in some cluster is actually a sufficient
condition. Therefore, in these cases, it is sufficient to solve
\textsc{$(\mu,p,q)$-cluster}.

\begin{lemma}\label{lem:uncrossing}
  Let $G$ be a graph, let $p,q\ge 0$ be two integers, and let
  $\mu:2^{V(G)}\to \mathbb{Z}^+$ be a 
  monotone function. If every $v\in V(G)$ is contained in some
  $(\mu,p,q)$-cluster, then $G$ has a \partmupq.
  Furthermore, given a set of $(\mu,p,q)$-clusters $C_1$, $\dots$,
  $C_n$ whose union is $V(G)$, a \partmupq\ can be found
  in polynomial time.
\end{lemma}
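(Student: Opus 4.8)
The plan is to start from the given clusters $C_1,\dots,C_n$ covering $V(G)$ and repeatedly "uncross" them until they become pairwise disjoint, while maintaining the invariant that each set in the current collection is still a $(\mu,p,q)$-cluster and that the union is still all of $V(G)$. The key algebraic tool is the posimodular inequality $d(X)+d(Y)\ge d(X\setminus Y)+d(Y\setminus X)$: if $C_i$ and $C_j$ overlap but neither contains the other, then at least one of $C_i\setminus C_j$, $C_j\setminus C_i$ has $d$-value at most $\max\{d(C_i),d(C_j)\}\le q$. Say $d(C_i\setminus C_j)\le q$; since $C_i\setminus C_j\subseteq C_i$ and $\mu$ is monotone, $\mu(C_i\setminus C_j)\le\mu(C_i)\le p$, so $C_i\setminus C_j$ is again a $(\mu,p,q)$-cluster. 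Replacing $C_i$ by $C_i\setminus C_j$ keeps the union equal to $V(G)$ (every vertex of $C_i\cap C_j$ is still covered by $C_j$) and does not create any new intersecting pair that did not already intersect before.

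The main steps, in order, are: (1) state and use the posimodular inequality to show the "shrink one of two crossing clusters" operation is always available and preserves the cluster property; (2) argue termination by exhibiting a strictly decreasing potential, e.g. $\sum_i |C_i|$ (which drops by at least one at each step, since we delete the nonempty set $C_i\cap C_j$ from $C_i$) or, cleaner for the disjointness conclusion, the number of ordered intersecting pairs together with total size; (3) observe that when no two sets cross, the nonempty sets among $C_1,\dots,C_n$ are pairwise disjoint and still cover $V(G)$, hence form the desired \partmupq; (4) note the whole process runs in polynomial time: there are at most $n$ sets, each shrink step is one $d$- and $\mu$-evaluation (polynomial since $\mu$ is assumed polynomial-time computable), and the potential is bounded by $n^2$, so at most polynomially many steps occur.

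One subtlety to handle carefully is the distinction between two clusters \emph{crossing} (overlapping with neither contained in the other) and merely \emph{nesting} ($C_i\subseteq C_j$): the uncrossing step only needs to be applied to crossing pairs, and nested pairs are harmless since at the end we simply discard sets that become empty or are contained in another. So the right potential to decrease is the number of crossing pairs, and I should check that the shrink operation does not turn a nested or disjoint pair into a crossing one — replacing $C_i$ by a subset $C_i\setminus C_j\subseteq C_i$ can only remove intersections or containments involving $C_i$, never create an element in $C_i\setminus C_j$ that was not already in $C_i$, so no new crossing pair appears. Once that monotonicity of the potential is verified, termination and correctness follow immediately.

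I expect the main obstacle to be purely bookkeeping rather than mathematical: being precise about which potential function strictly decreases and verifying that the uncrossing step never increases it, so that the polynomial bound on the number of iterations is airtight. The algebra (one application of posimodularity plus monotonicity of $\mu$) is short; the care is entirely in the termination argument and in confirming the union is preserved at every step.
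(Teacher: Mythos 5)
Your proposal is correct and follows essentially the same route as the paper: use the posimodular inequality to show that for any two crossing clusters one can be shrunk to its set difference while remaining a $(\mu,p,q)$-cluster (monotonicity of $\mu$ handles the $\mu\le p$ half), preserve the covering invariant, and terminate in polynomially many steps. The only cosmetic difference is the choice of potential function --- you suggest $\sum_i|C_i|$ or the number of crossing pairs, while the paper counts intersecting pairs --- and both work for the same reason.
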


\begin{proof}
Let us consider a collection $C_1$, $\dots$, $C_n$ of
$(\mu,p,q)$-clusters whose union is $V(G)$. If the sets are pairwise
disjoint, then they form a partition of $V(G)$ and we are done. If
$C_i\subseteq C_j$, then the union remains $V(G)$ even after throwing
away $C_i$. Thus we can assume that no set is contained in
another. Suppose that $C_i$ and $C_j$ intersect. Now either
$d(C_i)\ge d(C_i\setminus C_j)$ or $d(C_j)\ge
d(C_j\setminus C_i)$ must be true: it is not possible that both
$d(C_i)<d(C_i\setminus C_j)$ and  $d(C_j)<
d(C_j\setminus C_i)$ hold, as this would violate the
posimodularity of $d$. Suppose that $d(C_j)\ge
d(C_j\setminus C_i)$. Now the set $C_j\setminus C_i$ is also a
$(\mu,p,q)$-cluster: we have $d(C_j\setminus C_i)\le
d(C_j)\le q$ by assumption and $\mu(C_j\setminus C_i)\le
\mu(C_j)\le p$ from the monotonicity of $\mu$. Thus we can replace
$C_j$ by $C_j\setminus C_i$ in the collection: it will remain true
that the union of the clusters is $V(G)$. Similarly, if
$d(C_i)\ge d(C_i\setminus C_j)$, then we can replace $C_i$
by $C_i\setminus C_j$.

Repeating these steps (throwing away subsets and resolving
intersections), we eventually arrive at a pairwise disjoint collection
of $(\mu,p,q)$-clusters. Each step decreases the number of cluster pairs $C_i$, $C_j$ that have non-empty intersection.
%total size of clusters (note that when we replace $C_i$ by $C_i\setminus C_j$, then $C_i\cap C_j\neq \emptyset$). 
Therefore, this process terminates after a polynomial number of steps.
\end{proof}

The proof of Lemma~\ref{lem:uncrossing} might suggest that we can
obtain a partition by simply taking, for every vertex $v$, a
$(\mu,p,q)$-cluster $C_v$ that is inclusionwise minimal with respect
to containing $v$. However, such clusters can still cross. For
example, consider a graph on vertices $a$, $b$, $c$, $d$ where every
pair of vertices expect $a$ and $d$ are adjacent. Suppose that
$\mu=\size$, $p=3$, $q=2$. Then $\{a,b,c\}$ is a minimal cluster
containing $b$ (as more than two edges are going out of each of
$\{b\}$, $\{b,c\}$, and $\{a,b\}$) and $\{b,c,d\}$ is a minimal
cluster containing $c$. Thus unless we choose the minimal clusters
more carefully in a coordinated way, they are not guaranteed to form a
partition. In other words, there are two symmetric solutions
$(\{a,b,c\},\{d\})$ and $(\{a\},\{b,c,d\})$ for the problem, and the
clustering algorithm has to break this symmetry somehow.

In light of Lemma~\ref{lem:uncrossing}, it is sufficient to find a
$(\mu,p,q)$-cluster $C_v$ for each vertex $v\in V(G)$. If there is a
vertex $v$ for which there is no such cluster $C_v$, then obviously
there is no \partmupq; if we have such a $C_v$ for every
vertex $v$, then Lemma~\ref{lem:uncrossing} gives us a
\partmupq\ in polynomial time. For fixed $q$,
\textsc{$(\mu,p,q)$-Cluster} can be  solved by brute force if $\mu$ is
polynomial-time computable: enumerate every set $F$ of at most $q$
edges and check if the component of $G\setminus F$ containing $v$ is a
$(\mu,p,q)$-cluster. If $C_v$ is a $(\mu,p,q)$-cluster containing $v$, then we 
find it when $F=\Delta(C_v)$ is considered by the enumeration procedure.
\begin{theorem}\label{th:brute}
  Let $\mu$ be a polynomial-time computable monotone function. Then
  for every fixed $q$, there is an $n^{O(q)}m$ time algorithm for
  \partmupqsc.
\end{theorem}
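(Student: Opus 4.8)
The plan is to combine Lemma~\ref{lem:uncrossing} with a brute-force search for a single cluster. By that lemma it suffices to decide, for each vertex $v\in V(G)$, whether $v$ lies in some $(\mu,p,q)$-cluster and to exhibit one such cluster $C_v$ when it exists: if some $v$ lies in no $(\mu,p,q)$-cluster then there is no \partmupq\ (containment is a necessary condition), and otherwise the sets $C_1,\dots,C_n$ cover $V(G)$, so Lemma~\ref{lem:uncrossing} produces a \partmupq\ in polynomial time. Thus everything reduces to solving $(\mu,p,q)$-\textsc{cluster} for each $v$.

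To find a cluster containing $v$, I would guess its boundary. The key preliminary observation is that it is enough to look for \emph{connected} clusters: if $C$ is any $(\mu,p,q)$-cluster with $v\in C$ and $C'$ is the connected component of $G[C]$ containing $v$, then $C'\subseteq C$ gives $\mu(C')\le\mu(C)\le p$ by monotonicity, and since $G$ has no edge between $C'$ and $C\setminus C'$ we get $\Delta(C')\subseteq\Delta(C)$ and hence $d(C')\le d(C)\le q$; so $C'$ is itself a connected $(\mu,p,q)$-cluster containing $v$. The algorithm then enumerates every edge set $F\subseteq E(G)$ with $|F|\le q$, computes the connected component $C_F$ of $G\setminus F$ that contains $v$, and checks whether $\mu(C_F)\le p$ and $d(C_F)\le q$; the first $F$ for which this holds yields $C_v$. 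For correctness, whenever a connected $(\mu,p,q)$-cluster $C'\ni v$ exists, the choice $F=\Delta(C')$ (which has $|F|=d(C')\le q$) makes $C'$ exactly a connected component of $G\setminus F$, so $C_F=C'$ passes the test; conversely any $C_F$ passing the test is a genuine $(\mu,p,q)$-cluster containing $v$. Hence the search for $C_v$ succeeds if and only if $v$ is contained in some $(\mu,p,q)$-cluster.

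For the running time, there are $\sum_{i=0}^{q}\binom{m}{i}=m^{O(q)}$ candidate sets $F$; for each, computing the component of $v$ costs $O(n+m)$, evaluating $d(C_F)$ costs $O(m)$, and evaluating $\mu(C_F)$ costs $n^{O(1)}$ since $\mu$ is polynomial-time computable. Running this for all $n$ vertices and invoking the polynomial-time uncrossing of Lemma~\ref{lem:uncrossing} once at the end gives a total of $m^{O(q)}\cdot n^{O(1)}$, which is $n^{O(q)}m$ as claimed.

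I do not expect a genuine obstacle here once Lemma~\ref{lem:uncrossing} is available; the only point needing care is precisely the one flagged above, namely that a $(\mu,p,q)$-cluster need not be connected, so ``the component of $v$ in $G\setminus\Delta(C)$'' may return a proper subset of the intended cluster --- but monotonicity of $\mu$ together with $\Delta(C')\subseteq\Delta(C)$ ensures this subset is still a valid cluster, which is all we need.
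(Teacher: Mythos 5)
Your proof is correct and follows essentially the same approach as the paper: reduce to $(\mu,p,q)$-\textsc{Cluster} via Lemma~\ref{lem:uncrossing}, then brute-force over edge sets $F$ with $|F|\le q$ and test whether the component of $G\setminus F$ containing $v$ is a valid cluster. You are in fact a bit more careful than the paper's one-line justification, explicitly noting that when $F=\Delta(C)$ for a possibly disconnected cluster $C$, the component of $v$ may be a proper subset of $C$ but is still a valid $(\mu,p,q)$-cluster by monotonicity of $\mu$ and $\Delta(C')\subseteq\Delta(C)$.
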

As we have seen, an algorithm for \textsc{$(\mu,p,q)$-Cluster} gives
us an algorithm for \partmupqsc. In the rest of
the paper, we devise more efficient algorithms for \textsc{$(\mu,p,q)$-Cluster} than the
$n^{O(q)}$ time brute force method described above.

%%--------------------
%% PARA-Q
%%---------------------------
\section{Parameterization by $q$}
\label{sec:parq}
The main result of this section is that
\partmupqsc\ is  fixed-parameter tractable
parameterized by $q$ for the
three functions \nonedge, \nondeg, and \size.
\begin{theorem}\label{lem:thmainq} 
  \partitionsc{\size}{p}{q}, \partitionsc{\nonedge}{p}{q}, and
  \partitionsc{\nondeg}{p}{q} can be solved in time  $2^{O(q)}n^{O(1)}$.
\end{theorem}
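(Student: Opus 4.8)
The plan is to combine Lemma~\ref{lem:uncrossing} with a $2^{O(q)}n^{O(1)}$ algorithm for \textsc{$(\mu,p,q)$-cluster}, and to obtain the latter by a structural reduction to a ``Satellite Problem'' that is solvable by dynamic programming. By Lemma~\ref{lem:uncrossing}, for monotone $\mu$ it suffices to decide, for each vertex $v$, whether some $(\mu,p,q)$-cluster contains $v$ and to produce one if it does: if every $v$ is covered then uncrossing yields a \partmupq, and otherwise no partition exists. So I would fix $v$ and look for a cluster $C$ with $v\in C$, $d(C)\le q$, and $\mu(C)\le p$. The difficulty compared with Theorem~\ref{th:brute} is that $p$ is part of the input, so the constraint $\mu(C)\le p$ cannot be handled by enumerating all $n^{O(q)}$ candidates for $\Delta(C)$; some extra structure must be exploited.

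For the structural step, observe that deleting $\Delta(C)$ breaks $V(G)\setminus C$ into at most $q$ connected components, the \emph{satellites}, with at most $q$ edges in total between $C$ and the satellites. Each satellite is separated from $v$ by few edges, which suggests describing the satellites by \emph{important edge separators}: only $4^{O(q)}$ important separators of size at most $q$ separate $v$ from any fixed target set, and all of them can be enumerated efficiently. Since $\Delta(C)$ itself need not be important (enlarging $C$ may keep $d$ small while violating the $\mu$ bound), I would not guess $\Delta(C)$ directly but instead use the ``randomized selection of important sets'' technique of \cite{marx-razgon-multicut}: sampling the enumerated important separators independently with a suitable probability, with probability $2^{-O(q)}$ one peels off the satellites correctly and is left with a bounded-complexity region around $v$ whose inclusion in $C$ is still undetermined. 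This is exactly the \textsc{Satellite Problem}: find a $(\mu,p,q)$-cluster containing $v$ consistent with a prescribed ``core'' forced inside $C$, a prescribed set of satellites forced outside, and their attachment pattern. I would prove this reduction correct for every monotone $\mu$ and then derandomize the sampling in the standard way (via splitters), preserving the $2^{O(q)}n^{O(1)}$ bound.

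It then remains to solve the \textsc{Satellite Problem} for $\mu\in\{\size,\nonedge,\nondeg\}$. After the reduction the undecided part of $G$ has a star-like shape --- a small core together with satellites attached by a $q$-bounded number of edges in total --- so I would process the satellites one by one with a dynamic program maintaining a $q$-bounded table. For $\size$ the table only needs the number of vertices taken so far and the number of boundary edges used. For $\nonedge$ and $\nondeg$ one must additionally account for the nonedges, respectively non-neighbourhoods, created between vertices put into $C$ from different satellites and from the core; here the fact that each satellite sends only few edges outward is what keeps the amount of cross-satellite interaction that must be remembered bounded in $q$. Multiplying the $n$ choices of $v$, the $2^{O(q)}$ branching of the reduction, and the polynomial-time dynamic program gives the claimed running time.

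The main obstacle is the structural step: formulating the \textsc{Satellite Problem} so that the important-separator sampling is simultaneously \emph{complete} (some outcome is consistent with a genuine solution cluster) and leaves a residual instance that is dynamic-programming-tractable for each of the three measures, and verifying both the $2^{-O(q)}$ success probability and its derandomization. Once the Satellite Problem is set up correctly, the dynamic programs --- especially the one for $\size$ --- are essentially bookkeeping, with the $\nonedge$ and $\nondeg$ cases needing only the observation that few outgoing edges per satellite limit cross-satellite interaction.
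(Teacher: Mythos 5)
Your plan matches the paper's proof: reduce to \textsc{$(\mu,p,q)$-cluster} via Lemma~\ref{lem:uncrossing}, use the randomized-important-separator technique of \cite{marx-razgon-multicut} to reach a star-shaped \textsc{Satellite Problem}, solve that by dynamic programming for the three measures, and derandomize via splitters. Two points of imprecision worth noting: (i) in the paper's \textsc{Satellite Problem} nothing is ``forced outside'' --- the core $V_0$ is forced inside and each satellite $V_i$ is \emph{undecided}, with the sampling merely arranging that the true components of $G\setminus C$ appear among the $V_i$ and that the boundary of $C$ stays in $V_0$; the unstated ingredient making this work is Lemma~\ref{lem:impcomponent}, that for an inclusionwise-minimal cluster $C$ every component of $G\setminus C$ is an important set; and (ii) for $\nonedge$ the cross-satellite interaction is not ``bounded in $q$'' --- it is the \emph{absence} of edges between distinct satellites that lets the DP recover the interaction from $|C(S)|$ alone, while the $q$-bounded number of outgoing edges matters only in the color-coding argument for $\nondeg$.
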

% \begin{theorem}\label{lem:thmainq} There is a randomized algorithm for
%   \partitionsc{\size}{p}{q}, \partitionsc{\nonedge}{p}{q} and
%   \partitionsc{\nondeg}{p}{q} using $2^{O(q)}|V(G)|^{O(1)}$ time. If
%   the input instance is a $\textsf{\small yes}$-instance the algorithm incorrectly returns $\textsf{\small no}$ with probability less than $\frac{1}{2}$. On $\textsf{\small no}$-instances the algorithm always correctly answers $\textsf{\small no}$.
% \end{theorem}
 By
Lemma~\ref{lem:uncrossing}, all we need to show is that
\textsc{$(\mu,p,q)$-cluster} is fixed-parameter
tractable parameterized by $q$. We introduce a somewhat technical
variant of this question, the \textsc{Satellite Problem}, and show that for {\em
  every monotone function $\mu$},
% the fixed-parameter tractability of
%the \textsc{Satellite Problem} implies the fixed-parameter tractability of \textsc{$(\mu,p,q)$-cluster}.
 if \textsc{Satellite Problem} is FPT, then
 \textsc{$(\mu,p,q)$-cluster} is FPT as well.
Thus we need arguments specific to a particular $\mu$ only in solving
the \textsc{Satellite Problem}.

\problemq{\textsc{Satellite Problem}}
{A graph $G$, integers $p$, $q$, a vertex $v\in V(G)$, a partition $V_0$, $V_1$, $\ldots$,
  $V_r$ of $V(G)$ such that $v\in V_0$ and there is no edge between
  $V_i$ and $V_j$ for any $1\le i < j \le r$. }
{A $(\mu,p,q)$-cluster $C$ with $V_0\subseteq C$ such that for every $1\le i \le r$, either $C\cap V_i=\emptyset$ or $V_i\subseteq C$.}

%%%% Figure 1
\begin{figure}
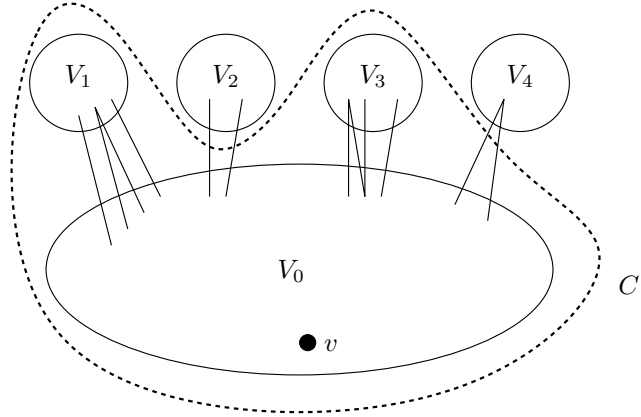

\begin{center}
\svg{0.5\linewidth}{star}
\caption{Instance of \textsc{Satellite Problem} with a solution $C$.}\label{fig:star}
\end{center}
\end{figure}
Since the sets $\{V_i\}$ form a partition of $V(G)$, we have $r \leq n$. For every $V_i$ ($1\leq i \leq r$), we have to decide whether to include or exclude it from the solution $C$ (see Fig.~\ref{fig:star}). If we exclude $V_i$ from $C$, then $d(C)$ increases by the number of edges between $V_0$ and $V_i$. If we include $V_i$ into $C$, then $\mu(C)$ increases accordingly. Thus we
need to solve the knapsack-like problem of including sufficiently many
$V_i$ such that $d(C)\le q$, but not including too many to ensure
$\mu(C)\le p$. As we shall see in
Section~\ref{sec:solving-star-problem}, in many cases this problem can
be solved by dynamic programming (and some additional arguments). The
important fact that we use is that there are no edges between $V_i$
and $V_j$, thus for many reasonable functions $\mu$, the way $\mu(C)$ increases
by including $V_i$ is fairly independent from whether $V_j$ is
included in $C$ or not.

The reduction to the \textsc{Satellite Problem} uses the concept of
important separators (Section~\ref{sec:important-separators}) and in
particular the technique of ``randomly selecting important
separators'' introduced in \cite{marx-razgon-multicut}. As the
reduction can be most conveniently described as a randomized
algorithm, we present it this way in
Section~\ref{sec:reduct-star-probl}, and then show how it can be
derandomized in Section~\ref{sec:derand}. In
Section~\ref{sec:solving-star-problem}, we show how the
\textsc{Satellite Problem} can be solved for the three functions
\nonedge, \nondeg{} and \size.

\subsection{Important separators}
\label{sec:important-separators}
The notion of {\em important separators} was introduced in
\cite{marx-separation-full} to prove the fixed-parameter tractability
of multiway cut problems. This notion turned out to be useful in other applications as well \cite{ChenLL09,dblp:journals/jacm/chenllor08,DBLP:journals/jcss/RazgonO09}. The basic idea is that in many problems where terminals need to be separated in some way, it is sufficient to consider separators that are ``as far as possible'' from one of the terminals.

Since there are some small differences between edge and vertex separators, and some of the results appear only implicitly in previous papers, we make the paper self-contained by restating all the definitions and by reproving all the required results in this section. Let $s,t$ be two vertices of a graph $G$. An {\em $s-t$ separator} is a set $S\subseteq E(G)$ of edges separating $s$ and $t$, i.e., there is no $s-t$ path in $G\setminus S$. An $s-t$ separator is {\em inclusionwise minimal} if there is an $s-t$ path in $G\setminus S'$ for every $S'\subset S$.
\begin{definition}
Let $s,t\in V(G)$ be vertices, $S\subseteq E(G)$ be an $s-t$ separator, and let $K$ be the component of $G\setminus S$ containing $s$. We say that $S$ is an {\em important $s-t$ separator} if it is inclusionwise minimal and there is no $s-t$ separator $S'$ with $|S'|\le |S|$ such that $K\subset K'$ for the component $K'$ of $G\setminus S'$ containing $s$.
\end{definition}

%\begin{comment}
Note that an important $s-t$ separator is not necessarily an important
$t-s$ separator. Intuitively, we want to minimize the size of the
$s-t$ separator and at the same time we want to maximize the set of
vertices that remain reachable from $s$ after removing the
separator. The important separators are the solutions that are
Pareto-optimal with respect to these two objectives. Note that we do
not want that the number of vertices reachable from $s$ to be maximal,
we just want that this set of vertices is {\em inclusionwise maximal}
(i.e., we have $K\subset K'$ and {\em not} $|K|<|K'|$ in the
definition). The main observation of \cite{marx-separation-full} is
that the number of important $s-t$ separators of size at most $k$ can
be bounded by a function of $k$.

\begin{theorem}\label{th:impsep}
Let $s,t\in V(G)$ be two vertices in graph $G$. For every $k\ge 0$, there are at most $4^k$ important $s-t$ separators of size at most $k$. Furthermore, these important $s-t$ separators can be enumerated in time $4^k\cdot n^{O(1)}$.
\end{theorem}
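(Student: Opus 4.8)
The plan is to bound the number of important $s$--$t$ separators of size at most $k$ by $4^k$ via a branching/recursion argument, and to read off the enumeration algorithm from the same recursion. First I would record two structural facts. (i) Among all important $s$--$t$ separators, there is a unique one whose $s$-side component $K$ is inclusionwise maximal — call it $S^\star$: if $S_1,S_2$ were both maximal with $s$-sides $K_1,K_2$, then by submodularity of $d$ applied to $K_1,K_2$ one of $d(K_1\cup K_2)\le d(K_1)$ or $\le d(K_2)$ holds, so $\Delta(K_1\cup K_2)$ is an $s$--$t$ separator of size at most $\max(|S_1|,|S_2|)$ with a strictly larger $s$-side, contradicting maximality (this also shows $S^\star$ is well defined and has size at most that of any important separator). (ii) Let $R$ be the set of vertices reachable from $s$ in $G\setminus S^\star$ (so $S^\star=\Delta(R)$, and every minimum $s$--$t$ cut's $s$-side is contained in $R$ — in particular $\lambda:=d(R)$ is the minimum $s$--$t$ cut size). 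The key claim is: every important $s$--$t$ separator $S$ of size at most $k$ satisfies $\Delta(R)\subseteq$ "closure" in the sense that its $s$-side contains $R$; hence $S\ne S^\star$ forces $S$ to contain at least one edge of $\Delta(R)$ that is ``used'' differently, which gives the branching.

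More concretely, here is the recursion I would set up. Fix an edge $e=uv\in\Delta(R)$ with $u\in R$. Any important $s$--$t$ separator $S$ either contains $e$ or does not. If $e\notin S$, then (using that $S$'s $s$-side $K\supseteq R$, which I would prove by the same submodular uncrossing as in (i): if $K\not\supseteq R$ then $\Delta(K\cup R)$ is a no-larger separator with a larger $s$-side, contradicting that $S$ is important) the vertex $v$ lies on $S$'s $s$-side, so $S$ is still an important $s$--$t$ separator after contracting $e$ (or equivalently after merging $v$ into $s$). If $e\in S$, then $S\setminus\{e\}$ is a separator of size $\le k-1$ in $G\setminus e$ (delete $e$), and $S$ restricted there is still important. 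In the first branch the parameter $k$ stays the same but the minimum cut $\lambda$ strictly increases (we have added $v$ to the forced $s$-side, and $v$ was separated by the min cut before); in the second branch $k$ drops by one. Since $0\le\lambda\le k$ always, the branch that does not decrease $k$ can be taken at most $k$ times in a row before $\lambda$ reaches $k$ and the only important separator left is forced. Charging the recursion tree this way yields at most $2^{2k}=4^k$ leaves, hence at most $4^k$ important separators. (The standard bookkeeping: define $f(k,\lambda)$ as the bound on the number of important separators; the two branches give $f(k,\lambda)\le f(k,\lambda+1)+f(k-1,\lambda)$ with $f(k,k)\le 1$, which solves to $f(k,\lambda)\le 2^{2k-\lambda}\le 4^k$.)

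For the algorithmic statement, the same recursion is directly an enumeration procedure: compute a maximum $s$--$t$ flow / minimum cut in polynomial time to find $R$ and $S^\star$; if $\lambda>k$ stop; otherwise pick $e\in\Delta(R)$ and recurse on the two instances (merge $v$ into $s$; delete $e$ and decrement $k$), collecting the separators found and discarding any returned set that is not an $s$--$t$ separator of $G$ of size $\le k$ or not inclusionwise minimal or not important (a check doable in polynomial time). The recursion tree has $4^k$ leaves and polynomial work per node, giving total time $4^k\cdot n^{O(1)}$.

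The main obstacle — the step I would spend the most care on — is proving the invariant that every important $s$--$t$ separator's $s$-side contains $R$ (equivalently, contains the $s$-side of the "extreme" minimum cut), and that this property is preserved under both the contraction and the edge-deletion operations so that "important in the reduced instance" coincides with "important and consistent with the choices made so far in the original instance." This is exactly where submodularity/posimodularity of $d$ does the work, and getting the correspondence between important separators of the modified graph and the relevant important separators of $G$ precise (in both directions, to guarantee no important separator is missed and none is double-counted across the two branches) is the delicate part; everything else is routine flow computation and counting.
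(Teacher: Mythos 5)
Your proposal follows the same route as the paper's proof of Lemma~\ref{lem:impsepsum}: uniqueness of the extreme minimum $s$--$t$ cut $S^\star$ by submodular uncrossing, the inclusion $K_{S^\star}\subseteq K_S$ for every important separator $S$ (again by submodularity), and a two-way branch on an edge $e\in S^\star$, with the quantity $2k-\lambda$ controlling the depth. The paper phrases the induction on $|E(G)|$ and proves the sharper fractional inequality $\sum_{S\in\S}4^{-|S|}\le 2^{-\lambda}$, which it re-uses in the randomized reduction (Lemma~\ref{lem:redstarrand}); for the $4^k$ count itself your budget accounting and the paper's potential-function accounting are interchangeable.

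There is, however, a real gap in the $e\notin S$ branch as you have written it. Contracting the single edge $e=uv$ is \emph{not} equivalent to merging $v$ into $s$, and by itself does not force $\lambda$ to increase: there can be minimum $s$--$t$ cuts whose $s$-side avoids both $u$ and $v$ (their $s$-side is a proper subset of $R=K_{S^\star}$), and such cuts survive contraction of $e$, so $\lambda(G/e)$ may equal $\lambda(G)$ and your measure $2k-\lambda$ need not drop. The correct operation, which is what the paper does, is to contract the whole set $K_{S^\star}\cup\{v\}$ into $s$: by maximality of $K_{S^\star}$ among $s$-sides of minimum cuts, every cut whose $s$-side strictly contains $K_{S^\star}$ has size $>\lambda$, so the contracted graph has $\lambda'>\lambda$; and since every important separator $S$ with $e\notin S$ satisfies $K_S\supseteq K_{S^\star}\cup\{v\}$, it remains an (important) separator of the contracted graph. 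A smaller slip is in the $e\in S$ branch, where deleting $e$ can decrease $\lambda$ by one, so the recurrence should read $f(k,\lambda)\le f(k,\lambda+1)+f(k-1,\lambda-1)$; your Ansatz $2^{2k-\lambda}$ still satisfies this with equality, so the $4^k$ bound is unaffected once the contraction step is repaired.
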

The following lemma clearly proves the bound in Theorem~\ref{th:impsep}: if the sum is at most 1, then there cannot be more than $4^k$ important $s-t$ separators of size at most $k$. Although this form of the statement is new, the proof follows the same ideas that appear implicitly in~\cite{dblp:journals/jacm/chenllor08,ChenLL09}.

\begin{lemma}\label{lem:impsepsum}
Let $s,t\in V(G)$. If $\S$ is the set of
all important $s-t$ separators, then $\sum_{S\in \S}4^{-|S|}\le
1$. Thus $\S$ contains at most $4^k$ $s-t$ separators of size at most $k$.
\end{lemma}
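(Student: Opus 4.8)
The claim is that $\sum_{S \in \mathcal{S}} 4^{-|S|} \le 1$, where $\mathcal{S}$ is the set of all important $s$-$t$ separators. I would prove this by induction, but the right quantity to induct on needs care: the obvious parameters (number of vertices or edges) don't shrink cleanly. The standard trick here is to induct on $k - \lambda$, where $k$ is an upper bound on the separator sizes we are tracking and $\lambda = \lambda(s,t)$ is the size of a minimum $s$-$t$ separator. More precisely, I would prove the stronger statement: for every $k \ge \lambda(s,t)$, the important $s$-$t$ separators of size at most $k$ satisfy $\sum 4^{-|S|} \le 4^{-\lambda} \cdot \text{(something)}$ — actually the clean form is to let $f(G,s,t,k) = \sum_{S} 4^{-|S|}$ over important separators of size $\le k$ and show $f \le 2^{k - \lambda}/2^k = \dots$; let me instead just aim directly for $f(G,s,t,k) \le 2^{-(2\lambda - k)}$ — hmm, the cleanest is to show $f(G,s,t,k) \le 2^{k-\lambda} \cdot 2^{-k} \cdot 2 = 2^{-\lambda+1}\cdot 2^{-1}\cdot\dots$. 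I'll fix the exact bookkeeping constant in the writeup; the key point is that the exponent $4^{-|S|}$ is chosen exactly so that the recursion closes.

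Here is the structure. Fix a minimum $s$-$t$ separator; by submodularity of $d$ there is a \emph{unique} minimum separator $S_{\min}$ whose $s$-side component $K_{\min}$ is inclusionwise maximal (take the union of the $s$-sides of all minimum separators — submodularity shows this is again a minimum separator). Pick an edge $e = (x,y) \in S_{\min}$ with $x \in K_{\min}$. Every important $s$-$t$ separator $S$ either contains $e$ or does not. If $e \notin S$, then I claim $S$ must still separate $s$ from $y$ inside the graph, and in fact (since $S$'s $s$-side is inclusionwise maximal and contains no vertex "blocked only by $e$") $S$ is an important separator in the graph $G$ where we additionally require $y$ on the $t$-side — equivalently, $\lambda$ increases by at least $1$ when we force $e$ to be un-cut, i.e. we contract $e$ (merging $x,y$, or rather we can think of it as $\lambda(G/e', s, t)$ for a suitable modification). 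If $e \in S$, then $S \setminus \{e\}$ is an important $s$-$t$ separator in $G \setminus e$ — here $\lambda$ may drop by $1$, but $|S|$ also drops by $1$, and this is exactly where the factor $4$ pays for the factor $2$ change in budget. Applying induction to the two subproblems and combining, the two contributions are each bounded by half of the target, and they sum to the target.

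**The main obstacle.** The delicate step is verifying that importance is preserved (in the right direction) under the two operations — deleting $e$ and "forcing $e$ uncut" — and that the minimum-separator sizes change as claimed. Specifically: (i) when $e \in S$ and we pass to $S \setminus \{e\}$ in $G \setminus e$, one must check $S\setminus\{e\}$ is still inclusionwise minimal and still Pareto-maximal on its $s$-side, which uses that $e$ was chosen incident to $K_{\min}$; and (ii) when $e \notin S$, one must argue $\lambda$ strictly increases — this is where choosing $e$ on the boundary of the \emph{maximal} minimum-side $K_{\min}$ is essential, since any separator avoiding $e$ and keeping its $s$-side inclusionwise maximal cannot equal $S_{\min}$, hence (by uniqueness of the maximal minimum separator) must have size $> \lambda$. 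Getting these two claims airtight, and in particular the submodularity/uncrossing argument establishing the unique inclusionwise-maximal minimum separator, is the technical heart; the arithmetic that $4^{-|S|}$ makes the recursion $f_k \le \tfrac12 f_{k}^{(\text{contract})} + \tfrac12 f_{k-1}^{(\text{delete})}$ close to $\le 1$ is then routine.
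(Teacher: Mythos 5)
Your framework matches the paper's: submodularity yields a unique minimum $s$-$t$ separator $S^*$ with inclusionwise-maximal $s$-side $K_{S^*}$, you pick an edge $e\in S^*$, and you branch on $e\in S$ versus $e\notin S$, with $4^{-|S|}$ chosen so the two contributions each come in at half the budget. However, your proposed induction on $k-\lambda$ does not terminate: in the delete branch ($e\in S$), passing to $G\setminus e$ can drop $\lambda$ by $1$ while the size budget also effectively drops by $1$ (you track $S\setminus\{e\}$ of size $\le k-1$), so $k-\lambda$ need not strictly decrease. The paper resolves this by inducting on $|E(G)|$, which strictly decreases in both branches (deleting $e$ obviously; the contraction also destroys $e$), with the strengthened claim $\sum_{S\in\mathcal{S}}4^{-|S|}\le 2^{-\lambda}$ over \emph{all} important separators, not just those of size $\le k$. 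This is exactly the quantity you were searching for but left as a placeholder. Under that hypothesis the delete branch contributes at most $\tfrac14\cdot 2^{-(\lambda-1)}=\tfrac12\cdot2^{-\lambda}$ and the contract branch at most $2^{-(\lambda+1)}=\tfrac12\cdot2^{-\lambda}$, summing to $2^{-\lambda}$, with base case $\lambda=0$ giving the single empty separator.

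The second issue is the contract branch itself. Contracting only $e=(x,y)$ does not immediately force $\lambda$ to increase, nor does it obviously preserve importance, because $K_{S^*}$ need not be absorbed into $s$ in $G/e$. The paper first proves, by a second submodularity/uncrossing step, that $K_{S^*}\subseteq K_S$ for \emph{every} important $s$-$t$ separator $S$; combined with $e\notin S$ this gives $K_{S^*}\cup\{y\}\subseteq K_S$, so the right operation is to contract the entire set $K_{S^*}\cup\{y\}$ into $s$. With that contraction, $\lambda$ strictly increases (a $\lambda$-size separator of the contracted graph would lift to a $\lambda$-size separator of $G$ with $s$-side strictly containing $K_{S^*}$, contradicting maximality of $S^*$), and every important separator avoiding $e$ corresponds cleanly to an important separator of the contracted graph. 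You gesture at both of these points in your ``obstacle'' paragraph, but the proposal as written leaves the induction measure and the precise contraction as genuine holes to fill.
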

\begin{proof}
  Let $\lambda$ be the size of the smallest $s-t$ separator. We prove
  by induction on the number of edges of $G$ that $\sum_{S\in
    \S}4^{-|S|}\le 2^{-\lambda}$. If $\lambda=0$, then there is a
  unique important $s-t$ separator of size at most $k$: the empty set.
  Thus we can assume that $\lambda>0$.

  For an $s-t$ separator $S$, let $K_S$ be the component of
  $G\setminus S$ containing $s$ (e.g., if $S$ is an inclusionwise
  minimal $s-t$ separator, then $S=\Delta(K)$). First we show the well-known fact
  that there is a unique $s-t$ separator $S^*$ of size $\lambda$ such
  that $K_{S^*}$ is inclusionwise maximal, i.e., there is no other
  $s-t$ separator $S$ of size $\lambda$ with $K_{S^*}\subset K_S$.
  Suppose that there are two separators $S'$ and $S''$ with
  $|S'|=|S''|=\lambda$ that are
  inclusionwise maximal in this sense. By the submodularity of
  $d$, we have
\begin{align*}
  \underbrace{d(K_{S'})}_{=\lambda}+\underbrace{d(K_{S''})}_{=\lambda}\ge
d(K_{S'}\cup K_{S''})+
\underbrace{d(K_{S'}\cap K_{S''})}_{\ge \lambda}.
\end{align*}
The left hand side is exactly $2\lambda$, while the second term of the
right hand side is at least $\lambda$ (as $\Delta(K_{S'}\cap
K_{S''})$ is an $s-t$-separator). Therefore,
$d(K_{S'} \cup K_{S''})\le \lambda$. This means that
$\Delta(K_{S'}\cup K_{S''})$ is also a minimum $s-t$ cut, contradicting the
maximality of $S'$ and $S''$.

Next we show that for every important $s-t$ separator $S$, we have
$K_{S^*}\subseteq K_S$. Suppose this is not true for some $S$. We use
submodularity again:
\[
\underbrace{d(K_{S^*})}_{=\lambda}+d(K_{S})\ge
d(K_{S^*}\cup K_{S})+
\underbrace{d(K_{S^*}\cap K_{S})}_{\ge \lambda}.
\]
By definition, $d(K_{S^*})=\lambda$, and $\Delta(K_{S^*}\cap K_{S})$
is an $s-t$ separator, hence $d(K_{S^*}\cap K_{S})\ge
\lambda$. This means that $d(K_{S^*}\cup K_{S})\le
d(K_{S})$. However this contradicts the assumption that $S$ is an
important $s-t$ separator: $\Delta(K_{S^*}\cup K_{S})$ is an $s-t$
separator not larger than $S$, but $K_{S^*}\cup K_{S}$ is a proper
superset of $K_S$ (as $K_{S^*}$ is not a subset of $K_S$ by
assumption).

We have shown that for every important separator $S$, the set $K_S$
contains $K_{S^*}$. Let $e\in S^*$ be an arbitrary edge of $S^*$ (note
that $\lambda>0$, hence $S^*$ is not empty) and let $v$ be the
endpoint of $e$ not in $K_{S^*}$. An important $s-t$ separator $S$
either contains $e$ or not. We will bound the contributions of these
two types of separators to the sum.

Let $S$ be an important $s-t$ separator containing $e$. Then $S\setminus e$ is an $s-t$ separator in $G\setminus
e$; in fact, it is an important $s-t$
separator of $G\setminus e$. 
Therefore, if $\S'$ is the set of all important $s-t$ separators in
$G\setminus e$, then the set $\S_1=\{S'\cup e\mid S'\in \S'\}$ contains every
important $s-t$ separator of $G$ containing $e$. 
Obviously, the size $\lambda'$ of the
minimum $s-t$ separator in $G\setminus e$ is at least $\lambda-1$.
As $G\setminus e$ has fewer edges than $G$, the induction statement shows that $\sum_{S'\in
  \S'}4^{-|S'|}\le  2^{-\lambda'}\le 2^{-(\lambda-1)}$ and therefore $\sum_{S\in
  \S_1}4^{-|S|}=\sum_{S'\in \S'}4^{-|S'|-1}\le 2^{-(\lambda-1)}/4=2^{-\lambda}/2$.

Let us consider now the important $s-t$ separators not containing $e$.
We have seen that $K_{S^*}\subseteq K_S$ for every such $s-t$
separator $S$. As $e\not \in S$, even $K_{S^*}\cup\{v\}\subseteq K_S$
is true. Let us obtain the graph $G'$ from $G$ by removing
$(K_{S^*}\cup\{v\})\setminus \{s\}$ and making $s$ adjacent to the
neighborhood of $K_{S^*}\cup \{v\}$ in $G$ (or equivalently, by
contracting $K_{S^*}\cup \{v\}$ into $s$).  Note that $G'$ has
strictly fewer edges than $G$.  There is no $s-t$ separator $S$ of
size $\lambda$ in $G'$: such a set $S$ would be an $s-t$ separator of
size $\lambda$ in $G$ as well, with $K_{S^*}\cup \{v\}\subseteq K_S$,
contradicting the maximality of $S^*$. Thus the minimum size
$\lambda'$ of an $s-t$ separator in $G'$ is strictly greater than
$\lambda$.  Let $\S_2$ contain all the important $s-t$ separators of
$G$ not containing $e$. We have seen that $K_{S^*}\cup \{v\}\subseteq
K_S$ for every separator $S\in \S_2$, thus such an $S$ is an $s-t$
separator of $G'$ and in fact every such $S$ is an important $s-t$
separator in $G'$ as well. Therefore, by the induction hypothesis,
$\sum_{S\in \S_2}4^{-|S|}\le 2^{-\lambda'} \le 2^{-\lambda}/2$.
Adding the bounds in the two cases, we get the required bound $2^{-\lambda}$.
\end{proof}
Note that the proof of Lemma~\ref{lem:impsepsum} gives a branching
procedure for enumerating all the important separators of a certain
size. This proves the algorithmic claim in Theorem~\ref{th:impsep}: as
each branching step can be performed in polynomial time, the bound on
the running time follows from the bound on the number of important
separators.

\subsection{Reduction to the Satellite Problem}
\label{sec:reduct-star-probl}

In this section we show how to reduce $(\mu,p,q)$-\textsc{Cluster} to
the \textsc{Satellite Problem} by a randomized reduction
(Lemma~\ref{lem:redstar}). Section~\ref{sec:derand} shows that the
same result can be obtained by a deterministic algorithm as well
(Lemmas \ref{lem:redstarderand} and
\ref{lem:redstarderand2}). However, the randomized version is
conceptually simpler, thus we present it first and then discuss the
derandomization in the next section.
\begin{lemma}\label{lem:redstar}
If \textsc{Satellite Problem} can be solved in time $f(q)\cdot n^{O(1)}$
for some monotone $\mu$, then there is a randomized $2^{O(q)}\cdot f(q)\cdot
n^{O(1)}$ algorithm with constant error probability that finds a
$(\mu,p,q)$-cluster containing $v$ (if one exists).
\end{lemma}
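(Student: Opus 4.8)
The plan is as follows. By Lemma~\ref{lem:uncrossing} it is enough to find \emph{some} $(\mu,p,q)$-cluster containing $v$, so I would design a randomized procedure that, on a ``yes'' instance, produces with probability $2^{-O(q)}$ an instance of \textsc{Satellite Problem} that has a solution, while \emph{every} solution of \emph{any} instance it can produce is a $(\mu,p,q)$-cluster containing $v$. Running the assumed $f(q)\cdot n^{O(1)}$ algorithm on the produced instance, returning its output, and repeating $2^{O(q)}$ times then gives the claimed running time with constant error probability. We may assume $G$ is connected.

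\emph{Structure of a target cluster.} First I would fix a well-behaved cluster to aim for. If a $(\mu,p,q)$-cluster contains $v$, then so does a \emph{connected} one: replacing a cluster $C$ by the component $C'$ of $v$ in $G[C]$ preserves $\mu(C')\le\mu(C)$ (monotonicity) and $d(C')\le d(C)$, since there are no edges between $C'$ and $C\setminus C'$, so $\Delta(C')\subseteq\Delta(C)$. Fix such a connected cluster $C^*$ with $|C^*|$ minimum, let $D_1,\dots,D_s$ be the components of $G-C^*$, and let $B\subseteq C^*$ consist of the endpoints in $C^*$ of the edges of $\Delta(C^*)$. Each $D_i$ and each vertex of $B$ is incident to an edge of $\Delta(C^*)$, so $s\le q$ and $|B|\le q$; moreover the $\Delta(D_i)$ partition $\Delta(C^*)$, hence $\sum_i d(D_i)=d(C^*)\le q$. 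Minimality of $|C^*|$ forces each $D_i$ to be pushed maximally towards $v$: there is no connected $K\supsetneq D_i$ with $v\notin K$ and $d(K)\le d(D_i)$. Indeed, posimodularity gives $d(C^*\setminus K)\le d(C^*)+d(K)-d(K\setminus C^*)\le d(C^*)$, because every edge of $\Delta(D_i)$ runs from $D_i\subseteq K\setminus C^*$ into $C^*$ and so leaves $K\setminus C^*$; hence the component of $v$ in $C^*\setminus K$ would be a strictly smaller valid cluster, a contradiction. A short argument then upgrades this to: $\Delta(D_i)$ is an important $t-v$ separator with far side $D_i$, for every $t\in D_i$.

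\emph{The randomized reduction.} Following the ``random selection of important separators'' idea of \cite{marx-razgon-multicut}, enumerate, for each $t\in V(G)$, the important $t-v$ separators of size $\le q$; by Theorem~\ref{th:impsep} this produces a family $\mathcal{K}$ of at most $n\cdot 4^q$ candidate far sides, among which are $D_1,\dots,D_s$. Now sample $\mathcal{R}\subseteq\mathcal{K}$ by keeping each $K\in\mathcal{K}$ independently with probability $4^{-d(K)}$, and set $Z:=\bigcup\mathcal{R}$, $V_0:=V(G)\setminus Z$, and $V_1,\dots,V_r:=$ the components of $G[Z]$. This is a legal \textsc{Satellite Problem} instance --- $v\in V_0$ since no candidate contains $v$, and there are no edges between distinct $V_j$'s --- and any solution $C$ of it satisfies $v\in V_0\subseteq C$, $d(C)\le q$, $\mu(C)\le p$; so the procedure never errs. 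Conversely, on the event that $D_1\cup\dots\cup D_s\subseteq Z$ and $Z\cap B=\emptyset$, we have $B\subseteq V_0\subseteq C^*$, and since no vertex of $C^*\setminus B$ has a neighbour outside $C^*$, the set $C^*\setminus V_0$ is a union of some $V_j$'s; thus $C^*$ is a feasible solution of this instance and the \textsc{Satellite} algorithm returns one.

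\emph{The crux: the probability of that event.} Trying to sample exactly $\{D_1,\dots,D_s\}$ is hopeless, since there may be $\Omega(n)$ candidate far sides cutting into $C^*$, far too many to avoid at once. The point is that the event above only needs us to avoid far sides meeting the $\le q$-element set $B$: if $K\in\mathcal{K}$ contains some $w\in B$, then $\Delta(K)$ certifies $K$ as the far side of an important $w-v$ separator of size $\le q$, and by Theorem~\ref{th:impsep} there are at most $4^q$ of those for each $w$. Hence, by Lemma~\ref{lem:impsepsum}, the sum of $4^{-d(K)}$ over all such ``forbidden'' $K$ is at most $|B|\le q$, so the probability that none of them is kept is at least $\prod(1-4^{-d(K)})\ge e^{-2q}$ (here $d(K)\ge 1$ as $G$ is connected). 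The probability that all of $D_1,\dots,D_s$ are kept is $\ge\prod_i 4^{-d(D_i)}=4^{-\sum_i d(D_i)}\ge 4^{-q}$; these two events are independent because the $D_i$ avoid $B\subseteq C^*$ and so are not forbidden. Thus the event has probability $\ge 4^{-q}e^{-2q}=2^{-O(q)}$. The two places I expect to need the most careful writing are the uncrossing step that makes each $D_i$ an important-separator far side and the identification of the forbidden far sides; both are short submodular/posimodular arguments that must be stated precisely. Finally, $2^{O(q)}$ independent repetitions boost the success probability to a constant, for total running time $2^{O(q)}\cdot f(q)\cdot n^{O(1)}$; the derandomization is postponed to Section~\ref{sec:derand}.
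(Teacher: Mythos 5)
Your proposal is correct and follows essentially the same approach as the paper: fix a well-chosen target cluster, show that the components of its complement are ``important sets,'' randomly keep each candidate important set with probability $4^{-d(K)}$, and bound the probability of the good event (all complement components kept, boundary untouched) by $2^{-O(q)}$. The only minor deviations are that the paper works with an inclusionwise-minimal cluster and proves the component-importance lemma by a direct edge-counting argument ($\Delta(C')\subseteq(\Delta(C)\setminus\Delta(X))\cup\Delta(Y)$), whereas you fix a minimum-cardinality connected cluster and derive the same fact via posimodularity; both are valid.
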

The crucial definition of this section is the following:
\begin{definition}\label{def:impset}
We say that a set $X\subseteq V(G)$, $v\not\in X$ is {\em important} if
\begin{enumerate}
\item $d(X)\le q$,
\item $G[X]$ is connected,
\item there is no $Y\supset X$, $v\not\in Y$ such that $d(Y)\le d(X)$ and
  $G[Y]$ is connected.
\end{enumerate}
\end{definition}
It is easy to see that $X$ is an important set if and only if
$\Delta(X)$ is an important $u-v$ separator of size at most $q$ for
every $u\in X$. Thus we can use Theorem~\ref{th:impsep} to enumerate
every important set, and Lemma~\ref{lem:impsepsum} to give an upper
bound the number of important sets. Lemma~\ref{lem:impcomponent}
establishes the connection between important sets and finding
$(\mu,p,q)$-clusters: we can assume that the components of $G\setminus
C$ for the solution $C$ are important sets. In
Lemma~\ref{lem:redstarrand}, we show that by randomly choosing
important sets, with some probability we can obtain an instance of the
\textsc{Satellite Problem} where $V_1$, $\dots$, $V_r$ includes all the
components of $G\setminus C$. This gives us the reduction %to the
%\textsc{Satellite Problem}
 stated in Lemma~\ref{lem:redstar} above.

\begin{lemma}\label{lem:impcomponent}
Let $C$ be an inclusionwise minimal $(\mu,p,q)$-cluster containing
$v$. Then every component of $G\setminus C$ is an important set.
\end{lemma}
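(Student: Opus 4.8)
The plan is to argue by contradiction using the minimality of $C$. Fix a component $D$ of $G\setminus C$. Since $v\in C$ we have $v\notin D$, and $G[D]$ is connected by definition, so condition~(2) of Definition~\ref{def:impset} holds automatically. For condition~(1), I would observe that every edge incident to $D$ has its other endpoint in $D$ or in $C$ (it cannot reach another component of $G\setminus C$), so $\Delta(D)\subseteq\Delta(C)$ and hence $d(D)\le d(C)\le q$. It remains to check condition~(3): no connected $Y\supsetneq D$ with $v\notin Y$ satisfies $d(Y)\le d(D)$.

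Suppose such a $Y$ exists. I would first note that $Y\cap C\neq\emptyset$: since $G[Y]$ is connected and $D\subsetneq Y$ with $D\neq\emptyset$, some edge of $G[Y]$ has exactly one endpoint in $D$, and its endpoint outside $D$ is adjacent to $D$ yet not contained in $D$, hence lies in $C$ (as $D$ is a component of $G\setminus C$). Therefore $C':=C\setminus Y$ is a proper subset of $C$ that still contains $v$. The goal is to show that $C'$ is a $(\mu,p,q)$-cluster, which contradicts the minimality of $C$.

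The bound $\mu(C')\le\mu(C)\le p$ is immediate from monotonicity. For $d(C')\le q$, the crucial estimate is $d(Y\setminus C)\ge d(D)$: writing $Y\setminus C = D\,\sqcup\,R$ with $R=(Y\setminus C)\setminus D$, the set $R$ is contained in the union of the components of $G\setminus C$ other than $D$, so there are no edges between $D$ and $R$ and $\Delta(Y\setminus C)$ is the disjoint union of $\Delta(D)$ and $\Delta(R)$; hence $d(Y\setminus C)=d(D)+d(R)\ge d(D)$. Applying the posimodular inequality to $C$ and $Y$,
\[
d(C\setminus Y)\ \le\ d(C)+d(Y)-d(Y\setminus C)\ \le\ d(C)+d(D)-d(D)\ =\ d(C)\ \le\ q .
\]
Thus $C'=C\setminus Y$ is a $(\mu,p,q)$-cluster containing $v$ and strictly smaller than $C$, a contradiction.

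I expect the only genuine subtlety to be the inequality $d(Y\setminus C)\ge d(D)$: the point is that $Y\setminus C$ decomposes into pieces, one inside each component of $G\setminus C$, with no edges across the pieces, so that $d$ is additive over them and in particular at least the contribution $d(D)$ of the single piece $D$. The rest is a routine combination of posimodularity, monotonicity, and the elementary inclusion $\Delta(D)\subseteq\Delta(C)$.
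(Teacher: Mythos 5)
Your proof is correct and follows the same contradiction scheme as the paper's: take the hypothetical better set $Y$, pass to $C'=C\setminus Y$, observe it is a proper subset of $C$ still containing $v$, show it is a $(\mu,p,q)$-cluster, and contradict the minimality of $C$. The one place where you diverge is in bounding $d(C')$. The paper (writing $X$ for your $D$) establishes the direct set inclusion $\Delta(C')\subseteq(\Delta(C)\setminus\Delta(X))\cup\Delta(Y)$, which together with $\Delta(X)\subseteq\Delta(C)$ gives $d(C')\le d(C)-d(X)+d(Y)$ in one line. You instead apply the posimodular inequality $d(C)+d(Y)\ge d(C\setminus Y)+d(Y\setminus C)$ and add the observation that $\Delta(Y\setminus C)$ splits as the disjoint union $\Delta(D)\sqcup\Delta(R)$, where $R=(Y\setminus C)\setminus D$ lies in other components of $G\setminus C$ and so has no edges to $D$, giving $d(Y\setminus C)\ge d(D)$. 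Both routes yield the same numerical bound $d(C')\le d(C)-d(D)+d(Y)\le d(C)\le q$. Your version has the conceptual virtue of reusing posimodularity, the tool already introduced in Section~\ref{sec:uncross} and used in Lemma~\ref{lem:uncrossing}; the paper's inclusion argument is slightly more direct and avoids the extra step of establishing the decomposition of $\Delta(Y\setminus C)$. Either route is perfectly sound.
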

\begin{proof}
  Let $X$ be a component of $G\setminus C$. It is clear that $X$
  satisfies the first two properties of Definition~\ref{def:impset}
  (note that $\Delta(X)\subseteq \Delta(C))$. Thus let us suppose that
  there is a $Y\supset X$, $v\not\in Y$ such that $d(Y)\le
  d(X)$ and $G[Y]$ is connected. Let $C':=C\setminus Y$. Note
  that $C'$ is a proper subset of $C$: every neighbor of $X$ is in
  $C$, thus a connected superset of $X$ has to contain at least one vertex of
  $C$. It is easy to see that $C'$ is a $(\mu,p,q)$-cluster: we have
  $\Delta(C')\subseteq(\Delta(C)\setminus \Delta(X))\cup \Delta(Y)$ and
  therefore $d(C')\le d(C)-d(X)+d(Y)\le
  d(C)\le q$ and $\mu(C')\le \mu(C)\le p$ (by the monotonicity of
  $\mu$). This contradicts the minimality of $C$.
\end{proof}

\begin{lemma}\label{lem:redstarrand}
Given a graph $G$, vertex $v\in V(G)$, integers $p$, $q$, and a monotone function
$\mu:2^{V(G)}\to \mathbb{Z}^+$, we can construct in time
$2^{O(q)}\cdot n^{O(1)}$ an instance $I$ of the
\textsc{Satellite Problem} such that
\begin{itemize}
\item If some $(\mu,p,q)$-cluster contains $v$, then $I$ is a
  yes-instance with probability $2^{-O(q)}$,
\item If there is no $(\mu,p,q)$-cluster containing $v$, then $I$ is a no-instance.
\end{itemize}
\end{lemma}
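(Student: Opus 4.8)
The plan is to apply the ``random selection of important sets'' technique of \cite{marx-razgon-multicut}. Suppose a $(\mu,p,q)$-cluster containing $v$ exists, and fix an inclusionwise minimal one $C$. By Lemma~\ref{lem:impcomponent} the components $X_1,\dots,X_t$ of $G\setminus C$ are all important sets; since they lie in distinct components of $G\setminus C$ they are pairwise disjoint, no edge of $G$ joins two of them, and $\Delta(C)$ is the disjoint union of the sets $\Delta(X_i)$, so $\sum_{i}d(X_i)=d(C)\le q$. Write $B$ for the set of endpoints \emph{in $C$} of the edges of $\Delta(C)$; then $|B|\le q$ and $B$ is disjoint from $W:=\bigcup_i X_i=V(G)\setminus C$. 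The key point will be to produce, with probability $2^{-O(q)}$, a vertex set $U$ with $W\subseteq U\subseteq V(G)\setminus(\{v\}\cup B)$: if we then let $V_0:=V(G)\setminus U$ and let $V_1,\dots,V_r$ be the connected components of $G[U]$, we obtain a valid Satellite instance (the $V_i$ partition $V(G)$, $v\in V_0$, and there is no edge between distinct $V_i,V_j$ with $i,j\ge1$) in which $C$ is a feasible solution: $V_0\subseteq C$, and since $U$ avoids $B$ no edge of $G[U]$ crosses $\Delta(C)$, so every $V_i$ lies entirely inside or entirely outside $C$.

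First I would enumerate the family $\mathcal{I}$ of all important sets $X$ with $d(X)\le q$. For each $u\in V(G)\setminus\{v\}$ one lists, via Theorem~\ref{th:impsep}, the at most $4^q$ important $u$--$v$ separators of size at most $q$ and keeps the component of the graph minus such a separator that contains $u$; a short check shows these are exactly the important sets of boundary at most $q$, so $|\mathcal{I}|\le n\cdot 4^q$ and $\mathcal{I}$ is computable in $2^{O(q)}n^{O(1)}$ time. I would then form a random subfamily $\mathcal{F}\subseteq\mathcal{I}$ by placing each $X\in\mathcal{I}$ into $\mathcal{F}$ independently with probability $4^{-d(X)}$, set $U:=\bigcup_{X\in\mathcal{F}}X$, and output the Satellite instance $I$ determined by $U$ as above. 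No important set contains $v$, so $v\in V_0$, and the whole construction runs in $2^{O(q)}n^{O(1)}$ time.

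For correctness, if no $(\mu,p,q)$-cluster contains $v$ then $I$ is a no-instance, because a Satellite solution $C'$ satisfies $v\in V_0\subseteq C'$, $\mu(C')\le p$ and $d(C')\le q$, i.e.\ it would be such a cluster. Conversely, with $C$ as above, consider the event $E$ that (i) $X_i\in\mathcal{F}$ for every $i$, and (ii) no $X\in\mathcal{I}$ with $X\cap B\neq\emptyset$ lies in $\mathcal{F}$. When $E$ occurs, $U\supseteq\bigcup_i X_i=W$ and $U\cap B=\emptyset$, so by the first paragraph $I$ is a yes-instance. Hence it suffices to show $\Pr[E]\ge 2^{-O(q)}$.

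This last estimate is the main obstacle, and it is where the weighting $4^{-d(X)}$ and Lemma~\ref{lem:impsepsum} are essential. Because every $X_i$ is disjoint from $B$, conditions (i) and (ii) depend on disjoint sets of coin flips, so
\[
\Pr[E]\;=\;\Bigl(\prod_{i=1}^{t}4^{-d(X_i)}\Bigr)\!\!\prod_{X\in\mathcal{I}:\,X\cap B\neq\emptyset}\!\!\bigl(1-4^{-d(X)}\bigr)\;\ge\;4^{-\sum_i d(X_i)}\cdot\exp\!\Bigl(-2\!\!\sum_{X\in\mathcal{I}:\,X\cap B\neq\emptyset}\!\!4^{-d(X)}\Bigr).
\]
The first factor is at least $4^{-q}$. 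For the exponent, the important sets containing a fixed vertex $b$ are in bijection with the important $b$--$v$ separators, so Lemma~\ref{lem:impsepsum} yields $\sum_{X\in\mathcal{I}:\,b\in X}4^{-d(X)}\le 1$ for every $b$, whence $\sum_{X\in\mathcal{I}:\,X\cap B\neq\emptyset}4^{-d(X)}\le|B|\le q$ and $\Pr[E]\ge 4^{-q}e^{-2q}=2^{-O(q)}$. The reason a cruder choice fails is instructive: if each important set were included with probability $1/2$ then (i) would hold with probability $\ge 2^{-q}$ but (ii) would be hopeless, since many important sets can meet the bounded set $B$; the role of the $4^{-d(X)}$ weighting is exactly to keep the total mass of those sets at most $|B|\le q$ through the important-separator counting bound, so that all of them can be avoided at once with probability $2^{-O(q)}$.
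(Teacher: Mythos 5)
Your proof is correct and takes essentially the same route as the paper: enumerate all important sets via Theorem~\ref{th:impsep}, include each independently with probability $4^{-d(X)}$, and use Lemma~\ref{lem:impcomponent} together with Lemma~\ref{lem:impsepsum} to lower-bound the probability that every component of $G\setminus C$ is selected while no important set meeting the boundary of $C$ is selected, with independence coming from the disjointness of those two subfamilies. The only cosmetic differences are notational (your $B$ is the paper's $S$) and your use of the elementary bound $1-y\ge e^{-2y}$ for $0\le y\le 1/4$ in place of the paper's $1+x\ge e^{x/(1+x)}$, both of which give $2^{-O(q)}$.
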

\begin{proof}
For every $u\in V(G)$, $u\neq v$, let us use the algorithm of 
%Lemma~\ref{lem:impsepsum} 
Theorem~\ref{th:impsep} to enumerate every important
$u-v$ separator of size at most $q$. For every such separator $S$, let
us put the component $K$ of $G\setminus S$ containing $u$ into the
collection $\X$. Note that the same component $K$ can be obtained for more
than one vertex $u$, but we put only one copy into $\X$.
%, i.e., $\X$
%does not contain duplicates.

Let $\X'$ be a subset of $\X$, where each member $K$ of $\X$ is chosen
with probability $4^{-d(K)}$ independently at random. Let $Z$ be
the union of the sets in $\X'$, let $V_1$, $\dots$, $V_r$ be the
connected components of $G[Z]$, and let $V_0=V(G)\setminus Z$. It is
clear that $V_0$, $V_1$, $\dots$, $V_r$ give an instance $I$ of the
\textsc{Satellite Problem}, and a solution for $I$ gives a
$(\mu,p,q)$-cluster containing $v$. Thus we only need to show that if
there is a $(\mu,p,q)$-cluster $C$ containing $v$, then $I$ is a
yes-instance with probability $2^{-O(q)}$.

Let $C$ be an inclusionwise minimal $(\mu,p,q)$-cluster containing $v$.
Let $S$ be the set of vertices on the boundary of $C$, i.e., the vertices of
$C$ incident to $\Delta(C)$. Let $K_1$, $\dots$, $K_t$ be the
components of $G\setminus C$. Note that every edge of $\Delta(C)$
enters some $K_i$, thus $\sum_{i=1}^t d(K_i)=d(C)\le q$. By
Lemma~\ref{lem:impcomponent}, every $K_i$ is an important set, and
hence it is in $\X$. Consider the following two events:
\begin{enumerate}
\item[(E1)] Every component $K_i$ of $G\setminus C$ is in $\X'$ (and
  hence $K_i\subseteq Z$).
\item[(E2)] $Z\cap S=\emptyset$.
\end{enumerate}
The probability that (E1) holds is $\prod_{i=1}^t
4^{-d(K_i)}=4^{-\sum_{i=1}^td(K_i)}\ge 4^{-q}$. Event (E2)
holds if for every $w\in S$, no set $K\in \X$ with $w\in K$ is selected into
$\X'$. It follows directly from the definition of important separators that for every $K\in \X$ with $w\in K$, $\Delta(K)$ is
an important $w-v$ separator. Thus by Lemma~\ref{lem:impsepsum},
$\sum_{K\in \X, w\in K}4^{-|d(K)|}\le 1$. The probability that
$Z\cap S=\emptyset$ can be bounded by 
{\small \begin{multline*}
\prod_{K\in \X,K\cap S\neq\emptyset}(1-4^{-d(K)})\ge
\prod_{w\in S}\prod_{K\in \X, w\in K} (1-4^{-d(K)})\ge 
\prod_{w\in S}\prod_{K\in \X, w\in K} \exp\left(\frac{-
4^{-d(K)}}{(1-4^{-d(K)})}\right)\\\ge
\prod_{w\in S}\prod_{K\in \X, w\in K} \exp\left(-\frac{4}{3}\cdot 4^{-d(K)}\right)
=\prod_{w\in S}\exp\left(-\frac{4}{3}\cdot \sum_{K\in \X, w\in K}4^{-d(K)}\right)\ge 
(e^{-\frac{4}{3}})^{|S|}\ge e^{-4q/3}.
\end{multline*}} In the first inequality, we use that every term is
less than 1 and every term on the right hand side appears at least
once on the left hand side; in the second inequality, we use that
$1+x\ge \exp(x/(1+x))$ for every $x>-1$. Events (E1) and (E2) are
independent: (E1) is a statement about the selection of a subcollection
$A\subseteq \X$ of at most $q$ sets that are disjoint from $S$, while
(E2) is a statement about not selecting any member of a subcollection
$B\subseteq \X$ of at most $|S|\cdot 4^q$ sets intersecting $S$. Thus by
probability $2^{-O(q)}$, both (E1) and (E2) hold.

Suppose that both (E1) and (E2) hold, we show that instance $I$ of the
\textsc{Satellite Problem} is a yes-instance. In this case, every component
$K_i$ of $G\setminus C$ is a component $V_j$ of $G[Z]$:
$K_i\subseteq Z$ by (E1) and every neighbor of $K_i$ is outside $Z$. Thus $C$ is
a solution of $I$, as it can be obtained as the union of $V_0$ and
some components of $G[Z]$.
\end{proof}

%%% New proof goes here!
%%%%----------------------------------------------------------------------

\subsection{Derandomization of the Reduction to the Satellite Problem}
\label{sec:derand}

To derandomize the proof of Lemma~\ref{lem:redstarrand} and obtain a
deterministic version of Lemma~\ref{lem:redstar}, we use the standard
technique of splitters. A {\em $(n,k,k^2)$-splitter} is a
family of functions from $[n]$ to $[k^2]$ such that for any subset
$X\subseteq [n]$ with $|X|=k$, one of the functions in the family is
injective on $X$. Naor, Schulman, and Srinivasan \cite{NSS95} gave an
explicit construction of an $(n,k,k^2)$-splitter of size $O(k^6\log k
\log n)$.

First we present a simpler version of the derandomization
(Lemma~\ref{lem:redstarderand}), where the dependence on $q$ is
$2^{O(q^2)}$ (instead of the $2^{O(q)}$ dependence of the randomized
algorithm). The derandomization is along the same lines as the
analogous proof in \cite{marx-razgon-multicut}. Then we improve the
dependence to $2^{O(q)}$ by a somewhat more complicated scheme and
analysis (Lemma~\ref{lem:redstarderand2}). 

\begin{lemma}\label{lem:redstarderand}
If \textsc{Satellite Problem} can be solved in time $f(q)\cdot n^{O(1)}$
for some monotone $\mu$, then there is a $2^{O(q^2)}\cdot f(q)\cdot
n^{O(1)}$ algorithm for \textsc{$(\mu,p,q)$-Cluster}.
\end{lemma}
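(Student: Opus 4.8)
The plan is to derandomize the argument of Lemma~\ref{lem:redstarrand} using an $(n,k,k^2)$-splitter. Recall that in the randomized reduction there are two families of important sets that matter: the (at most $q$) components $K_1,\dots,K_t$ of $G\setminus C$ that we want to select into $\X'$, and the (at most $|S|\cdot 4^q$, but more importantly the ones that actually hit $S$) important sets we want to avoid. The key combinatorial fact is that the whole ``configuration'' we must get right is controlled by a bounded number of elements: the components $K_i$ together use at most $q$ boundary edges, and the boundary $S$ of $C$ has size at most $q$ as well. So the event we want to force is determined by a set of at most $O(q)$ ``witness'' elements — for instance, one representative vertex from each $K_i$ together with the vertices of $S$, or more robustly the set $\Delta(C)$ of at most $q$ edges. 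I would fix such a witness set $W$ with $|W| = k = O(q)$ and use a splitter to isolate it.

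\smallskip

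Concretely, first I would enumerate, using Theorem~\ref{th:impsep}, the collection $\X$ of all important sets (there are $2^{O(q)}\cdot n$ of them, computable in time $2^{O(q)}n^{O(1)}$). Then, for each function $h$ in an $(n,k,k^2)$-splitter family (with $n = |V(G)|$ or $n=|E(G)|$, as appropriate, and $k=O(q)$), and for each subset $A \subseteq [k^2]$ of the color classes, I would build one candidate instance of the \textsc{Satellite Problem}: let $\X_{h,A}$ be the sub-collection of those $K \in \X$ that are entirely contained in the vertices (or touch only edges) colored by $A$ under $h$ in the ``right'' way — i.e., we keep exactly those important sets whose witnesses all fall into the color classes designated ``inside,'' and we discard the rest. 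Taking $Z$ to be the union of the kept sets and $V_0 = V(G)\setminus Z$, with $V_1,\dots,V_r$ the components of $G[Z]$, yields a \textsc{Satellite Problem} instance, and any solution to any of these instances is a genuine $(\mu,p,q)$-cluster containing $v$ (this direction is exactly as in Lemma~\ref{lem:redstarrand} and needs no randomness). For completeness, suppose a minimal $(\mu,p,q)$-cluster $C$ containing $v$ exists; by Lemma~\ref{lem:impcomponent} each component $K_i$ of $G\setminus C$ lies in $\X$. Since $|W|=k$, the splitter guarantees some $h$ in the family is injective on $W$; then choosing $A$ to be precisely the colors of the ``inside'' witnesses makes $\X_{h,A}$ contain every $K_i$ and exclude every important set meeting $S$, so the corresponding instance is a yes-instance with $C$ as a solution. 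Running the assumed \textsc{Satellite Problem} algorithm on every candidate instance and returning any cluster found completes the reduction.

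\smallskip

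The running time is the splitter size $O(k^6 \log k \log n) = q^{O(1)}\log n$ times the number of color subsets $2^{k^2} = 2^{O(q^2)}$ times $f(q)\cdot n^{O(1)}$ for the \textsc{Satellite Problem} calls (plus the $2^{O(q)}n^{O(1)}$ for enumerating $\X$), giving the claimed $2^{O(q^2)}\cdot f(q)\cdot n^{O(1)}$ bound. Since the whole procedure either finds a $(\mu,p,q)$-cluster containing $v$ or correctly reports none exists, combining with Lemma~\ref{lem:uncrossing} this also gives a deterministic algorithm for \partmupqsc\ whenever \textsc{Satellite Problem} is FPT.

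\smallskip

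The main obstacle, I expect, is pinning down the right notion of ``witness'' so that a single $(n,k,k^2)$-splitter with $k=O(q)$ simultaneously (a) captures each of the up-to-$q$ components $K_i$ we want to keep and (b) excludes \emph{all} important sets that intersect $S$ — not just a bounded number of them. The clean way to handle (b) is to observe that an important set $K$ avoids $Z$ as soon as its colored witnesses are not all assigned ``inside'' colors, and that every important set meeting $S$ has a witness vertex in $S$; since $|S|\le q$ and $h$ is injective on $W \supseteq S$, designating the colors of $S$ as ``outside'' kills all of them at once. One must be a little careful that discarding an important set from $\X_{h,A}$ never destroys a component $K_i$ we wanted — this is immediate because the $K_i$ are pairwise non-adjacent and vertex-disjoint, so keeping/discarding decisions are independent across them. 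The more refined $2^{O(q)}$ bound of Lemma~\ref{lem:redstarderand2} will require replacing the crude ``try all $2^{k^2}$ color subsets'' step with a smarter enumeration, but for Lemma~\ref{lem:redstarderand} the straightforward scheme above suffices.
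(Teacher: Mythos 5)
The high-level strategy (use a splitter, enumerate over functions and color subsets, call the assumed \textsc{Satellite Problem} algorithm on each resulting instance, observe that one direction of correctness is automatic) is the right one and matches the paper. The running-time accounting is also in the right shape. But there is a genuine gap in what the splitter acts on, and I do not think your scheme can be made to work as stated.

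You apply an $(n,k,k^2)$-splitter with $k=O(q)$ to $V(G)$ (or $E(G)$) and try to select each $K\in\X$ based on the colors of some ``witnesses'' inside $K$. The trouble is that there is no bounded-size witness set in $V(G)$ or $E(G)$ that simultaneously controls all the decisions required by (E1) and (E2). To kill every important set meeting $S$, the natural rule is: discard $K$ as soon as some vertex of $K$ receives an ``outside'' color, and declare $h(S)$ to be the outside palette. But then keeping a component $K_i$ requires \emph{every} vertex of $K_i$ to avoid the colors in $h(S)$, and $K_i$ can be arbitrarily large; the splitter is only injective on a $k$-element set $W$, so vertices of $K_i$ outside $W$ collide freely with colors of $S$. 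Switching to edge witnesses $\Delta(K)$ does not help: for an important set $K$ that meets $S$, there is no guarantee that $\Delta(K)$ touches $\Delta(C)$ or any other bounded set of edges, so you cannot force such $K$ to be discarded by making a bounded set of edges ``outside.'' Likewise, a single per-set witness vertex $w(K)$ would have to lie in $K\cap S$ exactly for those $K$ that meet $S$, but $S$ depends on the unknown solution $C$, so the witness function cannot be fixed in advance. In short, the ``configuration we must get right'' is not a bounded subset of $V(G)$ or $E(G)$.

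The fix, and what the paper actually does, is to let the splitter act directly on the universe $\X$ of important sets. Set $s=|\X|\le 4^q n$, $a=q$ (the components $K_1,\dots,K_t$ that must be selected, by Lemma~\ref{lem:impcomponent}) and $b=q\cdot 4^q$ (an upper bound on the number of important sets meeting $S$, via Lemma~\ref{lem:impsepsum} applied to each of the $\le q$ vertices of $S$). Use an $(s,a+b,(a+b)^2)$-splitter $\F$ and, for each $f\in\F$, enumerate subsets $F\subseteq[(a+b)^2]$ of size $a$, selecting $K\in\X$ iff $f(K)\in F$. For the correct $f$ (injective on $A\cup B$) and $F=f(A)$, all of $A$ is selected and all of $B$ is avoided, so the constructed instance is a yes-instance. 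The splitter family has size $(a+b)^{O(1)}\log s=2^{O(q)}\log n$ and the number of size-$a$ subsets is $\binom{(a+b)^2}{a}=2^{O(q^2)}$, giving the claimed $2^{O(q^2)} f(q) n^{O(1)}$. Your independence observation (the kept/discarded decisions are independent across the $K_i$) is correct but unnecessary once the splitter is on $\X$; and the observation that the other direction of correctness needs no randomness is right and matches the paper.
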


\begin{proof}
  In the algorithm of Lemma~\ref{lem:redstarrand}, a random subset of
  a universe $\X$ of size $s=|\X|\le 4^q\cdot n$ is selected. If
  the \textsc{$(\mu,p,q)$-Cluster} problem has a solution $C$, then
  there is a collection $A\subseteq \X$ of at most $a=q$ sets and a
  collection $B\subseteq \X$ of at most $b=q\cdot 4^q$ sets such that
  if every set in $A$ is selected and no set in $B$ is selected, then
  (E1) and (E2) hold.  Instead of selecting a random subset, we
  try every function $f$ in an $(s,a+b,(a+b)^2)$-splitter family and
  every subset $F\subseteq [(a+b)^2]$ of size $a$ (there are
  $\binom{(a+b)^2}{a}=2^{O(q^2)})$ such sets $F$). For a particular
  choice of $f$ and $F$, we select those sets $S\in \X$ into $\X'$ for
  which $f(S)\in F$.  The size of the splitter family is $2^{O(q)}\log
  n$ and the number of possibilities for $F$ is $2^{O(q^2)}$.
  Therefore, we construct $2^{O(q^2)}\cdot\log n$ instances of
  the \textsc{Satellite Problem}.

  By the definition of the splitter, there will be a function $f$ that
  is injective on $A\cup B$, and there is a subset $F$ such that
  $f(S)\in F$ for every set $S$ in $A$ and $f(S)\not\in F$ for every
  set $S$ in $B$. For such an $f$ and $F$, the selection will ensure
  that (E1) and (E2) hold.  This means that the constructed instance of
  the \textsc{Satellite Problem} corresponding to $f$ and $F$ has a
  solution as well. Thus solving every constructed instance of the
  \textsc{Satellite Problem} with the assumed $f(q)\cdot n^{O(1)}$ time
  algorithm gives a $2^{O(q^2)}\cdot f(q)\cdot n^{O(1)}$ algorithm for
  \textsc{$(\mu,p,q)$-Cluster}.
\end{proof}

The key modification that we need in order to improve the dependence on $q$ is
to do the selection of sets with different boundary sizes separately,
and use a separate splitter for each boundary size. This modification
makes the analysis of the running time much more complicated.
\begin{lemma}\label{lem:redstarderand2}
If \textsc{Satellite Problem} can be solved in time $f(q)\cdot n^{O(1)}$
for some monotone $\mu$, then there is a $2^{O(q)}\cdot f(q)\cdot
n^{O(1)}$ algorithm for \textsc{$(\mu,p,q)$-Cluster}.
\end{lemma}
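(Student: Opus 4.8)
The plan is to rerun the randomized argument of Lemma~\ref{lem:redstarrand}, but to derandomize it more carefully by bucketing the family $\X$ according to the boundary size $d(K)$ of each member. Recall that in the randomized reduction, a set $K\in\X$ is selected into $\X'$ with probability $4^{-d(K)}$, and we need two events to hold: (E1) the at most $q$ components $K_1,\dots,K_t$ of $G\setminus C$ (with $\sum_i d(K_i)\le q$) are all selected, and (E2) no member of $\X$ meeting the boundary set $S$ of $C$ is selected. The key quantitative fact from Lemma~\ref{lem:impsepsum} is that, for each fixed $w\in S$, $\sum_{K\in\X,\,w\in K}4^{-d(K)}\le 1$; summing over $w\in S$ and letting $\X_j$ denote the members of $\X$ of boundary size exactly $j$, this gives $\sum_j |\{K\in\X_j: K\cap S\ne\emptyset\}|\cdot 4^{-j}\le |S|\le q$, i.e.\ the number $b_j$ of "bad" sets of boundary size $j$ that we must avoid satisfies $b_j\le q\cdot 4^{j}$. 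Meanwhile, among the at most $q$ "good" sets of (E1), the number $a_j$ of those with boundary size $j$ satisfies $\sum_j j\cdot a_j\le q$, so $a_j\le q/j$ (and $a_0$ may be up to $q$, but sets of boundary size $0$ are connected components of $G$ with no outgoing edges and can be handled trivially, or absorbed into $V_0$ directly).

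The derandomization itself: for each boundary size $j$ from $1$ to $q$ independently, apply an $(s_j, a_j+b_j, (a_j+b_j)^2)$-splitter to the sub-universe $\X_j$ (of size $s_j\le 4^{j} n$), and for each function $f_j$ in the splitter family choose a subset $F_j\subseteq[(a_j+b_j)^2]$ of size $a_j$ that tells us which sets of $\X_j$ to put into $\X'$. Since the splitter family is guaranteed to contain a function injective on the relevant $(a_j+b_j)$-element set (the good sets of size $j$ together with the bad sets of size $j$), and since we try all size-$a_j$ subsets $F_j$, one combined choice across all $j$ will select exactly the good sets and none of the bad sets, so (E1) and (E2) both hold and the resulting instance of \textsc{Satellite Problem} is a yes-instance. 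Running the assumed $f(q)\cdot n^{O(1)}$ algorithm on every constructed instance then solves \textsc{$(\mu,p,q)$-cluster}, and by Lemma~\ref{lem:uncrossing} this suffices for \partmupqsc.

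The main obstacle — and the reason the analysis "is much more complicated" — is bounding the total number of constructed instances by $2^{O(q)}\cdot n^{O(1)}$ rather than $2^{O(q^2)}$. The number of instances is the product over $j\in\{1,\dots,q\}$ of (size of the $j$-th splitter family) times (number of choices of $F_j$), i.e.\ of roughly $(a_j+b_j)^{6}\log(a_j+b_j)\log(s_j)\cdot\binom{(a_j+b_j)^2}{a_j}$. The delicate point is the product of the binomial factors: using $b_j\le q\,4^{j}$ and $a_j\le q/j$ (so $a_j$ is tiny once $j$ is moderately large), one estimates $\binom{(a_j+b_j)^2}{a_j}\le (a_j+b_j)^{2a_j}\le (O(q)\,4^{j})^{2a_j} = 2^{O(a_j(j+\log q))}$, and then $\sum_{j=1}^{q} a_j(j+\log q) \le \sum_j a_j j + \log q\sum_j a_j \le q + q\log q = O(q\log q)$, which is unfortunately $2^{O(q\log q)}$, not $2^{O(q)}$. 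To get the clean $2^{O(q)}$ bound one must be sharper: replace the crude $\binom{M}{a}\le M^{a}$ with $\binom{M}{a}\le (eM/a)^{a}$, so that $\binom{(a_j+b_j)^2}{a_j}\le (e(a_j+b_j)^2/a_j)^{a_j}$, and exploit that when $a_j$ is small the base is not of order $4^{j a_j}$ but much smaller relative to $a_j$; carefully, $\log\binom{(a_j+b_j)^2}{a_j} \le a_j\log(e(a_j+b_j)^2/a_j) = O(a_j\log(q\,4^j/a_j))$, and one checks that $\sum_j a_j\log(4^j/a_j)$ telescopes against the constraint $\sum_j j a_j\le q$ to give $O(q)$ (the worst case is $a_j\sim q/2^j$ geometrically decaying, where each term contributes $O(q/2^j\cdot j)$ and the sum is $O(q)$). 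Handling the $\log s_j=\log(4^j n)=O(j+\log n)$ factors contributes only an extra $\prod_j (j+\log n) = n^{O(1)}\cdot 2^{O(q\log q)}$ in the naive reading, so one must instead observe that the splitter sizes multiply to at most $2^{O(q)}\cdot (\log n)^{q}$ — still too large — forcing a further refinement: only $O(q/j)$ buckets contribute a nontrivial $F_j$ (for $j > $ some threshold, $a_j=0$ and no splitter is needed for selection, only for exclusion, where $F_j=\emptyset$ and we can use a single disperser/splitter), so the number of nontrivial factors is $O(\log q)$ and the product of their $\log n$ contributions is $(\log n)^{O(\log q)} = n^{o(1)}\cdot 2^{O(q)}$ by the standard $(\log n)^{O(\log q)}\le 2^{O(q)} + n^{o(1)}$ win/win split (if $q\ge \log\log n$ the first term dominates, else $q$ is tiny). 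Assembling these estimates yields the claimed $2^{O(q)}\cdot f(q)\cdot n^{O(1)}$ bound; the bookkeeping of which buckets need a full splitter versus only an exclusion-guarantee is exactly the complication alluded to in the paragraph preceding the lemma.
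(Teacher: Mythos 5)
Your high-level plan --- guess the profile $(a_j)_j$ with $\sum_j j a_j\le q$, bucket $\X$ by boundary size $j=d(K)$, and run a separate $(|\X_j|,a_j+b_j,(a_j+b_j)^2)$-splitter on each bucket --- is exactly the paper's approach. However, the quantitative analysis you sketch does not deliver the claimed $2^{O(q)}$, and the self-corrections at the end of your proposal contain concrete errors. Consider the product of binomials. You compute with $\binom{(a_j+b_j)^2}{a_j}$ (the honest count of size-$a_j$ subsets of the splitter's range $[(a_j+b_j)^2]$), and you correctly observe that the naive bound is $2^{\Theta(q\log q)}$. The refinement you then propose does \emph{not} close the gap: with $a_j+b_j\le 2q\cdot 4^j$ one gets $\log\binom{(a_j+b_j)^2}{a_j}=\Theta\bigl(a_j\log q + a_j j + a_j\log(q/a_j)\bigr)$, and the term $\bigl(\sum_j a_j\bigr)\log q$ can be as large as $q\log q$. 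Concretely, take $a_1=q$ and $a_j=0$ for $j\ge 2$; this satisfies $\sum_j ja_j\le q$, and the \emph{single} factor $\binom{(a_1+b_1)^2}{a_1}\ge\bigl((a_1+b_1)^2/a_1\bigr)^{a_1}\ge (25q)^q=2^{\Theta(q\log q)}$ already exceeds the budget. The paper's own computation bounds $\binom{a_j+b_j}{a_j}$ (a range linear, not quadratic, in $a_j+b_j$), and it is precisely the absence of that extra $\log q$ per unit of $a_j$ that lets the two-case argument on $\prod_j(q/a_j)^{a_j}$ (splitting on whether $q/a_j\le e^j$, using monotonicity of $x^{1/x}$) produce $2^{O(q)}$. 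You should reconcile the splitter's $[(a_j+b_j)^2]$ range with the count of $F_j$'s you are multiplying, rather than glossing over it.

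Your treatment of the $\log n$ factors is also wrong in a specific place: you claim ``only $O(\log q)$ buckets contribute a nontrivial $F_j$,'' but the constraint $\sum_j j a_j\le q$ only forces $a_j\ne 0$ for at most $O(\sqrt q)$ values of $j$ (if $t$ distinct positive indices have $a_j\ge 1$, then $\sum_j j a_j\ge 1+2+\cdots+t=\Theta(t^2)$). So the product of the splitter-family sizes contributes $(\log n)^{O(\sqrt q)}$, which the paper disposes of by a clean win/win split: if $\log n\le 2^{\sqrt q}$ this is $2^{O(q)}$, and otherwise $\sqrt q<\log\log n$, hence $(\log n)^{O(\sqrt q)}=2^{O((\log\log n)^2)}=O(n)$. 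The additive expression ``$(\log n)^{O(\log q)}\le 2^{O(q)}+n^{o(1)}$'' you wrote is not a well-formed bound for a single quantity. Finally, you should state explicitly that the algorithm must branch over the $2^{O(q)}$ candidate profiles $(a_1,\dots,a_q)$ at the outset, since those integers are not known in advance but are needed both to choose the splitter parameters and to fix $|F_j|=a_j$.
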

\begin{proof}
Let the universe $\X$, the fixed solution $C$, and the collections $A$
and $B$ be as in the proof of Lemma~\ref{lem:redstarderand}. Let $\X_i=\{K\in \X\mid d(K)=i\}$
and let $a_i=|\X_i\cap A|$, i.e., the number of sets $K\in A$ that
have $i$ edges on its boundary. Observe that $a_i=0$ for $i>q$ and
$\sum_{i=1}^{q}a_i\cdot i=d(C)\le q$. In the first step of the
algorithm, we guess the values $a_1$, $\dots$, $a_q$ that correspond
to the fixed hypothetical solution $C$. The number of possibilities for
these values can be bounded by $2^{O(q)}$ (this is already true if we
have only the weaker requirement $\sum_{i=1}^{q}a_i\le q$). Therefore,
the algorithm branches into $2^{O(q)}$ directions and this guess
introduces only a factor of $2^{O(q)}$ into the running time. From now
on, we assume that we have the correct values of $a_i=|\X_i\cap A|$
corresponding to $C$. We do not know the size of $\X_i\cap B$, but
$b_i=q\cdot 4^i$ is an upper bound on $|\X_i\cap B|$: the set $C$ has
at most $q$ boundary vertices, and each vertex is contained in at most $4^i$
sets of $\X$ (see the proof of Theorem~\ref{lem:redstarrand}).

We perform the selection separately for each $\X_i$ for which $a_i\neq
0$ (if $a_i=0$, then it is safe not to select any member of
$\X_i$). For a particular $\X_i$, we proceed similarly to the 
simplified proof of Lemma~\ref{lem:redstarderand}. That is, for every $1\le i \le q$, we construct an
$(|\X_i|,a_i+ b_i,(a_i+ b_i)^2)$-splitter family $\F_i$ and try every
choice of a function $f_i\in \F_i$ and a subset $F_i\subseteq
[(a_i+ b_i)^2]$ of size $a_i$. For a given choice of $f_1$, $\dots$,
$f_q$ and $F_1$, $\dots$, $F_q$, we select a set $K\in \X_i$ if and
only if $f_i(K)\in F_i$. As in the previous proof, it is clear that at
least one choice of the $f_i$'s and $F_i$'s leads to the selection of
every member of $A$ without selecting any member of $B$.

To bound the running time of the algorithm, we need to bound the total
number of possibilities for $f_i$'s and $F_i$'s. The family $\F_i$ has
size $(a_i+ b_i)^{O(1)}\log n$ and the number of possibilities for $F_i$ is $\binom{a_i+ b_i}{a_i}$. Therefore, we need to show that
\begin{equation}
\prod_{\substack{1\le i \le q\\a_i\neq 0}}(a_i+ b_i)^{O(1)}\cdot \log n\cdot \binom{a_i+ b_i}{a_i}
\label{eq:derand}
\end{equation}
can be bounded by $2^{O(q)}\cdot n^{O(1)}$.

We bound the product of the three factors in \eqref{eq:derand}
separately. Note that it follows from $\sum_{i=1}^qa_i\cdot i\le q$
that $a_i$ can be nonzero for at most $O(\sqrt{q})$ values of $i$.
Therefore, the product of the first factor in \eqref{eq:derand} can
be bounded by
\begin{multline*}
\prod_{\substack{1\le i \le q\\a_i\neq 0}}(a_i+ b_i)^{O(1)}\le 
\prod_{\substack{1\le i \le q\\a_i\neq 0}}(2a_i b_i)^{O(1)}\le 
\prod_{\substack{1\le i \le q\\a_i\neq 0}}(2\cdot 2^{a_i}\cdot q\cdot 4^i)^{O(1)}\\\le 
2^{O(\sqrt{q})}\cdot q^{O(\sqrt{q})}\cdot \prod_{1\le i \le q} (2^{a_i\cdot i}\cdot 4^{a_i\cdot i})\le 2^{O(q)}
\end{multline*}
(in the last inequality, we used $\sum_{1}^qa_i\cdot i\le q$).
To bound the product of the second factor in \eqref{eq:derand}, we consider two cases. If $\log n\le 2^{\sqrt{q}}$, then
\[
\prod_{\substack{1\le i \le q\\a_i\neq 0}}\log n \le \log^{O(\sqrt{q})} n \le 2^{O(q)}.
\]
Otherwise, if $\log n>2^{\sqrt{q}}$, then $\sqrt{q}<\log\log n$, and hence $\log^{O(\sqrt{q})}n=2^{O(\sqrt{q}\cdot \log \log n)}<2^{O( (\log \log n)^2)}=O(n)$.

Finally, let us bound the products of the last factor in \eqref{eq:derand}. Note that $a_i \le q \le b_i$. Therefore, we have 
\begin{multline*}
\prod_{\substack{1\le i \le q\\a_i\neq 0}}\binom{a_i+b_i}{a_i}\le 
\prod_{\substack{1\le i \le q\\a_i\neq 0}}\binom{2b_i}{a_i}\le 
\prod_{\substack{1\le i \le q\\a_i\neq 0}} \left(\frac{2eb_i}{a_i}\right)^{a_i}=
\\
\prod_{\substack{1\le i \le q\\a_i\neq 0}} \left(2e\cdot 4^i\cdot \frac{q}{a_i}\right)^{a_i}
=2^{O(q)}\cdot
4^{\sum_{i=1}^q a_i\cdot i}\cdot
\prod_{\substack{1\le i \le q\\a_i\neq 0}}(q/a_i)^{a_i}
=2^{O(q)}\cdot
\prod_{\substack{1\le i \le q\\a_i\neq 0}}(q/a_i)^{a_i}.
\end{multline*}
Thus we need to bound only $\prod_{i=1}^{q}(q/a_i)^{a_i}$.  For
notational convenience, let $x_i=q/a_i$ whenever $a_i\neq 0$. We bound
separately the product of terms with $x_i\le e^{i}$ and $x_i>e^i$. In
the first case,
\[
\prod_{\substack{1\le i \le q\\ x_i\le e^i}} (q/a_i)^{a_i}=\prod_{\substack{1\le i \le q\\ x_i\le e^i}}x_i^{a_i}
\le \prod_{\substack{1\le i \le q\\ x_i\le e^i}} e^{a_i\cdot i}
\le \exp\left(\sum_{i=1}^q {a_i\cdot i}\right)
=2^{O(q)}.
\]
We use the fact that the function $x^{1/x}$ is monotonically decreasing for $x\ge e$. Therefore, if $x_i>e^i$, then
\[
\prod_{\substack{1\le i \le q\\ x_i> e^i}} (q/a_i)^{a_i}=\prod_{\substack{1\le i \le q\\ x_i> e^i}}\left(x_i^{1/x_i}\right)^{q}
< \prod_{\substack{1\le i \le q\\ x_i> e^i}} \left((e^i)^{1/e^i}\right)^{q}
\le
\exp\left(q\cdot \sum_{i=1}^q i/e^i\right)=2^{O(q)}.
\]
\end{proof}

%%%%----------------------------------------------------------------------

\subsection{Solving the Satellite Problem}
\label{sec:solving-star-problem}
In this section, we give efficient algorithms for solving the {\sc Satellite Problem} when the function $\mu$ is \size, \nonedge{} and \nondeg. We describe the three algorithms by increasing difficulty. In the case when $\mu$ is \size{}, solving the {\sc Satellite Problem} turns out to be equivalent to the classical {\sc Knapsack} problem with polynomial bounds on the values and weights of the items. 

Recall that the input to the {\sc Satellite Problem} is a graph $G$,
integers $p$, $q$, a vertex $v\in V(G)$, a partition $V_0$, $V_1$,
$\dots$, $V_r$ of $V(G)$ such that $v\in V_0$ and there is no edge
between $V_i$ and $V_j$ for any $1\le i < j \le r$. We denote by $n$
and $m$ the number of vertices and edges of $G$, respectively. The
task is to find a vertex set $C$, such that $C = V_0 \cup \bigcup_{i
  \in S} V_i$ for a subset $S$ of $\{1,\ldots,r\}$ and $C$ satisfies
$d(C) \leq q$ and $\mu(C) \leq p$. For a subset $S$ of
$\{1,\ldots,r\}$ we define $C(S) = V_0 \cup \bigcup_{i \in S} V_i$.

\begin{lemma}\label{lem:sizeSatellite} The {\sc Satellite Problem} for \size{} can be solved in $O(qn\log n+m)$ time. \end{lemma}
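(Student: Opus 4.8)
The plan is to show that the \textsc{Satellite Problem} for $\mu=\size$ reduces to a \textsc{Knapsack}-type computation, which can then be solved in the claimed time. For each $i\in\{1,\dots,r\}$, let $w_i$ denote the number of edges between $V_0$ and $V_i$, and let $s_i=|V_i|$; also let $d_0$ be the number of edges of $\Delta(V_0)$ that go to some $V_i$ with $i\ge 1$ (this is all of $\Delta(V_0)$, since the $V_i$ partition $V(G)$). Observe that for a subset $S\subseteq\{1,\dots,r\}$, because there are no edges between distinct $V_i$ and $V_j$, the only edges leaving $C(S)=V_0\cup\bigcup_{i\in S}V_i$ are exactly the edges between $V_0$ and the sets $V_i$ with $i\notin S$; hence $d(C(S))=\sum_{i\notin S}w_i = d_0-\sum_{i\in S}w_i$. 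Likewise $\size(C(S))=|V_0|+\sum_{i\in S}s_i$. So the task is: choose $S$ maximizing $\sum_{i\in S}w_i$ subject to $\sum_{i\in S}s_i\le p-|V_0|$ (and declare failure if $|V_0|>p$), and then check whether the maximum satisfies $d_0-\sum_{i\in S}w_i\le q$. This is exactly the $0/1$ \textsc{Knapsack} problem with item weights $s_i$ and values $w_i$, capacity $p-|V_0|$.

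First I would dispose of the degenerate cases. If $d(V_0)\le q$, then $S=\{1,\dots,r\}$ already gives a valid cluster $C=V(G)$ provided $\size(V(G))\le p$ — but more usefully, since only the sets $V_i$ we exclude contribute to $d(C(S))$ and only the ones we include contribute to the size, we never need this special-casing beyond the bound $|V_0|\le p$. The main point is that the naive $O(rp)$ dynamic program for \textsc{Knapsack} is not good enough, since $p$ may be large; we need $O(qn\log n+m)$. The key observation that saves us is that only $q+1$ distinct values of $d(C(S))$ are relevant: since we need $\sum_{i\notin S}w_i\le q$, the relevant target total value $\sum_{i\in S}w_i$ lies in the range $[d_0-q,\,d_0]$, an interval of length $q$. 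Moreover the $w_i$ are nonnegative integers summing to $d_0$, and there can be at most $d_0\le m$ of them that are nonzero, but more to the point at most $q$ of them can be nonzero in any candidate solution $S^c$... actually the cleaner route: sort the $V_i$ and run the knapsack DP indexed by \emph{value} rather than by \emph{weight}, tracking for each achievable total value $t\in\{d_0-q,\dots,d_0\}$ the minimum total size. Since we only care about values $t\ge d_0-q$, and any $V_i$ with $w_i=0$ should simply be thrown in for free whenever it helps (i.e.\ never helps the size constraint, so only include a zero-weight $V_i$ if forced — in fact it never helps, so discard all $w_i=0$ sets first), we are left with at most $q$ sets having $w_i\ge 1$ that we could possibly include while keeping $d(C(S))\le q$.

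Concretely, I would: (1) in $O(m)$ time compute all $w_i$, $s_i$, $d_0$, and $|V_0|$; discard every $V_i$ with $w_i=0$ (including it only worsens the size budget without reducing outgoing edges, so it is never beneficial), and discard $V_i$ with $w_i>q$ (such a set can never be excluded, contradiction — wait, excluding it is bad; rather such $V_i$ \emph{must} be included, so add it to $V_0$ outright, adjusting $|V_0|$, $d_0$; handle the resulting infeasibility if $|V_0|$ exceeds $p$); (2) after this cleanup there are at most $q$ remaining sets, since each has $w_i\ge 1$ and we may include arbitrarily many but excluding any one costs $w_i\ge1$ toward the budget $q$ — more simply, among the remaining sets, in any feasible $S$ the \emph{excluded} ones have total weight $\le q$, so at most $q$ are excluded; thus we can afford to try including all but at most $q$ of them; (3) run the standard \textsc{Knapsack} DP over these $O(q)$ remaining items, with the DP table indexed by total value in the window of width $q+1$, storing the minimum achievable size, in time $O(q^2)$, or sort by value and handle it in $O(q\log q)$; then pick any achievable value $t\ge d_0-q$ whose stored minimum size is $\le p-|V_0|$. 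The $n\log n$ and $m$ terms absorb the preprocessing, sorting, and bookkeeping for the up-to-$n$ sets $V_i$.

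The main obstacle is getting the running time down from the textbook $O(rp)$ \textsc{Knapsack} bound to something with no dependence on $p$: the crucial structural fact is that the outgoing-edge budget $q$ caps how much ``value'' ($=$ weight $w_i$ of edges between $V_0$ and $V_i$) we are allowed to leave on the table, so only an $O(q)$-sized window of total-value levels is ever relevant and only $O(q)$ of the sets $V_i$ can be non-trivially excluded. Once this reduction of the effective instance size to $O(q)$ items and $O(q)$ relevant value-levels is in place, the rest is a routine \textsc{Knapsack} dynamic program plus $O(qn\log n+m)$ preprocessing, and I would be careful to argue correctness of the ``force-include $w_i>q$'' and ``discard $w_i=0$'' reductions and the final feasibility check $d_0-\sum_{i\in S}w_i\le q$ and $|V_0|+\sum_{i\in S}s_i\le p$.
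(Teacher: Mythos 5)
Your reduction of the \textsc{Satellite Problem} with $\mu=\size$ to a \textsc{Knapsack} instance is exactly the paper's approach: set $w_i=d(V_i)$ (the number of edges between $V_0$ and $V_i$), $s_i=|V_i|$, observe $d(C(S))=d(V_0)-\sum_{i\in S}w_i$ and $\size(C(S))=|V_0|+\sum_{i\in S}s_i$, and look for $S$ with $\sum_{i\in S}w_i\ge d(V_0)-q$ and $\sum_{i\in S}s_i\le p-|V_0|$. The paper then runs the classical value-indexed \textsc{Knapsack} DP over all $r\le n$ items with the effective value budget bounded by $q$ (equivalently, the ``excluded'' weight $\sum_{i\notin S}w_i$ is capped at $q$), getting $O(qn\log n)$ plus $O(m)$ preprocessing. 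That much of your proposal is fine.

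The gap is your step~(2)--(3): you claim that after discarding sets with $w_i=0$ and force-including sets with $w_i>q$ you are left with ``at most $q$ remaining sets,'' and you then run the DP over ``these $O(q)$ remaining items'' in $O(q^2)$ or $O(q\log q)$. That count is wrong. After your cleanup you only know $1\le w_i\le q$ for every surviving $i$; there can still be $\Theta(n)$ such sets (e.g.\ $n$ satellites each contributing a single edge to $V_0$). The true and useful fact is only that in any \emph{feasible} solution at most $q$ of them are excluded, not that there are only $q$ of them; you must still decide \emph{which} $\le q$ to exclude among up to $n$ candidates, so the DP has to scan all remaining items. If you simply drop the ``$O(q)$ items'' claim and run the value-windowed DP over all (up to $n$) remaining items, you get $O(nq)$ (or $O(qn\log n)$ with bookkeeping), which is precisely the paper's bound and makes the rest of your argument go through. (Your two preprocessing rules, discard $w_i=0$ and force-include $w_i>q$, are correct but unnecessary for the bound; the paper's DP handles them implicitly. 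If you did want to genuinely shrink the item count, you would need an extra idea, e.g.\ within each value class $w$ keep only the $\lfloor q/w\rfloor$ largest-$s_i$ candidates for exclusion, but nothing like that appears in your proposal.)
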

\begin{proof}
%Recall that the input to the problem is a graph $G$, integers $p$, $q$, a vertex $v\in V(G)$, a partition $V_0$, $V_1$, $\dots$, $V_n$ of $V(G)$ such that $v\in V_0$ and there is no edge between $V_i$ and $V_j$ for any $1\le i < j \le n$. The task is to find a vertex set $C$, such that $C = V_0 \cup \bigcup_{i \in S} V_i$ for a subset $S$ of $\{1,\ldots,n\}$ and $C$ satisfies $d(C) \leq q$ and $|C| \leq p$. 
  Notice that $d(C(S)) = d(V_0) - \sum_{i \in S} d(V_i)$. Hence, we
  can reformulate the {\sc Satellite Problem} with $\mu = \size{}$ as
  finding a subset $S$ of $\{1, \ldots, r\}$ such that $\sum_{i \in S}
  d(V_i) \geq d(V_0)-q$ and $\sum_{i \in S} |V_i| \leq p-|V_0|$. Thus,
  we can associate with every $i$ an item with value $d(V_i)$ and
  weight $|V_i|$. The objective is to find a set of items with total
  value at least $d(V_0)-q$ and total weight at most $p-|V_0|$. This
  problem is known as {\sc Knapsack} and can be solved in $O(rv\log
  w)$ time by a classical dynamic
  programming~\cite{Bellman62,cormen2001introduction} algorithm, where
  $r$ is the number of items, $v$ is the value we seek to attain and
  $w$ is the weight limit. It is easy to compute the values and
  weights of all items in time $O(m)$. Since the number $r$ of items
  is at most $n$, the value is bounded from above by $q$ and the
  weight by $n$, the statement of the lemma follows.
\end{proof}

The case that $\mu = \nonedge{}$ is slightly more complicated, however we can still solve it using a polynomial-time dynamic programming algorithm. 

\begin{lemma}\label{lem:nonedgeSatellite} The {\sc Satellite Problem} for \nonedge{} can be solved in $O(pn^2m)$ time. \end{lemma}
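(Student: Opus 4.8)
The plan is to solve the Satellite Problem for $\mu=\nonedge$ by dynamic programming over the indices $1,\dots,r$, just as in the \size\ case, but now tracking two quantities simultaneously: the number of boundary edges still crossing out of the current partial cluster, and the number of nonedges accumulated so far. The key structural fact we will exploit is that, since there are no edges between $V_i$ and $V_j$ for $i\neq j$ in $\{1,\dots,r\}$, a pair of vertices $\{x,y\}$ with $x\in V_i$, $y\in V_j$ and $i\neq j$ is \emph{always} a nonedge of $G$. Consequently, once we decide which of the $V_i$ are included in $C$, the number of nonedges inside $C=C(S)$ splits cleanly: it equals the nonedges with both endpoints in $V_0$, plus (for each included $V_i$) the nonedges inside $V_i$ together with the nonedges between $V_i$ and $V_0$, plus — and this is the cross term that makes the problem harder than \size\ — for each pair of included indices $i<j$, the full product $|V_i|\cdot|V_j|$ of cross nonedges.

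The first step is to precompute, for each $i\in\{1,\dots,r\}$, the quantity $w_i := (\text{number of nonedges inside } V_i) + (\text{number of nonedges between } V_i \text{ and } V_0)$, as well as $d(V_i)$ and $|V_i|$; and also the base value $w_0 := \nonedge(G[V_0])$. This takes $O(n^2+m)$ time (or $O(nm)$ with a crude count of nonedges). Then, as in Lemma~\ref{lem:sizeSatellite}, note $d(C(S))=d(V_0)-\sum_{i\in S}d(V_i)$, so the constraint $d(C(S))\le q$ becomes $\sum_{i\in S}d(V_i)\ge d(V_0)-q$. The second step is to handle the cross term $\sum_{i<j,\ i,j\in S}|V_i|\cdot|V_j|$. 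The trick is that this symmetric-function term depends only on the total $\sigma:=\sum_{i\in S}|V_i|$: indeed $\sum_{i<j}|V_i||V_j| = \tfrac12\big(\sigma^2-\sum_{i\in S}|V_i|^2\big)$. So if we also track in the DP the running sum $\sum_{i\in S}|V_i|^2$, we can reconstruct the cross term at the end from $\sigma$ and this running sum. But $\sigma$ can be as large as $n$ and $\sum|V_i|^2$ as large as $n^2$, which is too much state. Instead I would incorporate $\tfrac12 |V_i|^2$ and $\tfrac12\sigma^2$ differently: process indices in order, and when adding $V_i$ to the current set, the cross term grows by exactly $|V_i|\cdot(\text{sum of }|V_j|\text{ over already-included }j)$. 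So the natural DP state is the pair $(\text{current index } i,\ \tau)$ where $\tau=\sum_{j\in S,\ j\le i}|V_j|$ is the running vertex count of included sets; the transition "include $V_{i+1}$" adds $\tau\cdot|V_{i+1}|+w_{i+1}$ to the nonedge count and $d(V_{i+1})$ to the boundary-reduction count, and updates $\tau\mapsto\tau+|V_{i+1}|$. For each state we store the minimum total nonedge count achievable subject to a given boundary-reduction value — or, dually, the maximum boundary reduction subject to nonedge count $\le p-w_0$.

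Concretely, the DP table is indexed by $i\in\{0,\dots,r\}$ and by the accumulated nonedge surplus $e\in\{0,\dots,p\}$ (everything is capped at $p+1$ since exceeding $p$ is infeasible), and the stored value is the maximum of $\sum_{j\in S}d(V_j)$ over all $S\subseteq\{1,\dots,i\}$ whose induced nonedge contribution is exactly $e$ — but to make the transition well-defined we must \emph{also} know the running vertex total $\tau$, since that is what multiplies $|V_{i+1}|$. Here is the resolution that keeps the size polynomial: observe that whenever we include a set $V_{i+1}$ with $|V_{i+1}|\ge1$, and $\tau\ge1$ already (i.e.\ at least one earlier set was included), the nonedge count jumps by at least $\tau\ge 1$; more usefully, including \emph{any} $V_i$ with $|V_i|\ge 2$ alone already contributes $\binom{|V_i|}{2}\ge 1$ unless $V_i$ is a clique, and two disjoint included sets contribute $|V_i||V_j|$. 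A careful accounting shows that the number of \emph{distinct included sets} with $|V_i|\ge 1$ is $O(\sqrt p)$: if $k$ sets of sizes $s_1,\dots,s_k\ge1$ are included, the cross nonedges number $\sum_{a<b}s_as_b\ge \binom k2$, so $k=O(\sqrt p)$; moreover $\sum s_a=O(\sqrt p)$ unless many $s_a=1$... this line needs tightening, and that is exactly the main obstacle. The clean way out, which I would adopt: make the DP state $(i,\tau,e)$ with $\tau\le p+1$ \emph{artificially capped} — since if $\tau> p$ then any further inclusion of a nonempty $V_j$ forces $e>p$ (as the cross term alone exceeds $p$), so we may freeze the DP at $\tau=\min(\tau,p+1)$ without losing any feasible solution. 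This bounds the state space by $O(r\cdot p\cdot p)=O(np^2)$, each transition is $O(1)$, giving $O(np^2)$ overall, well within the claimed $O(pn^2m)$; the extra slack in the stated bound simply absorbs the precomputation of the $w_i$'s. The main thing to get right is this capping argument — verifying that once $\tau$ exceeds $p$, no infeasible state can masquerade as feasible and no feasible solution is discarded — and the correctness of the cross-term recurrence, i.e.\ that processing indices in a fixed order and adding $\tau\cdot|V_{i+1}|$ on each inclusion reproduces exactly $\sum_{i<j\in S}|V_i||V_j|$.
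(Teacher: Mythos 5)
Your proof is correct and rests on the same idea as the paper's: a dynamic program over the indices $1,\dots,r$ in which the effect of including $V_i$ on $\nonedge(C(S))$ is computed via a cross term proportional to the running size of $C(S)$. Indeed, your incremental update $e \mapsto e + w_{i+1} + \tau\cdot|V_{i+1}|$ with $w_{i+1} = \nonedge(V_{i+1}) + |V_0|\cdot|V_{i+1}| - d(V_{i+1})$ is exactly the paper's Equation~\eqref{eqn:nonedge1},
$\nonedge(C(S\cup\{i\})) = \nonedge(C(S)) + \nonedge(V_i) + |C(S)|\cdot|V_i| - d(V_i)$,
after the change of variable $\tau = |C(S)| - |V_0|$. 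Where you genuinely diverge is in the shape of the table. The paper uses a Boolean table $T[i,j,k,\ell]$ with $j = |C(S)| \le n$, $k = d(C(S)) \le m$, $\ell = \nonedge(C(S)) \le p$, of size $O(pn^2m)$, and reads off feasibility at the end. You add two refinements: (a) you cap the running-size coordinate $\tau$ at $p+1$, observing that once $\tau > p$, including any further nonempty $V_j$ makes the cross term alone exceed $p$, so such a state is terminal --- this is a correct observation, and the cap loses no feasible solution and creates no spurious ones, since the only remaining admissible transitions from a capped state are ``skip''; and (b) instead of carrying the boundary count $d(C(S))$ as a fourth table dimension, you store the maximum achievable boundary reduction $\sum_{j\in S}d(V_j)$ as the cell value, which is valid because for fixed $(i,\tau,e)$ a larger boundary reduction strictly dominates. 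These shrink the table to $O(np^2)$, comfortably within the stated $O(pn^2m)$ bound. The one thing to clean up is the digression in the middle of your argument attempting to bound the number of included nonempty sets by $O(\sqrt{p})$: you are right that this does not directly bound $\tau$ (a single included $V_i$ can have $|V_i|$ as large as $n$ while contributing zero cross nonedges), and you correctly abandon that route in favor of the cap; the final write-up should simply drop it.
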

\begin{proof}
Consider the set C(S) for a subset $S$ of $\{1, \ldots, i-1\}$. We investigate what happens to $\nonedge(C(S))$ and $d(C(S))$ when $i$ is inserted into $S$. For \nonedge{} we have the following equation.
\begin{equation}\label{eqn:nonedge1}
\nonedge(C(S \cup \{i\})) = \nonedge(C(S)) + \nonedge(V_i) + |C(S)| \cdot |V_i| - d(V_i)
\end{equation}
Furthermore, $d(C(S \cup \{i\})) = d(C(S)) - d(V_i)$. Define $T[i,j,k,\ell]$ to be true if there is a subset $S$ of $\{1, \ldots i\}$ such that $|C(S)|=j$, $d(C(S))=k$ and $\nonedge(C(S))=\ell$. If such a set $S$ exists, then either $i \in S$ or $i \notin S$. Together with Equation~\ref{eqn:nonedge1} this yields the following recurrence for $T[i,j,k,\ell]$. 
\begin{equation}\label{eqn:nonedge2}
T[i,j,k,\ell] = T[i-1,j,k,\ell] \vee T[i-1,j-|V_i|,k + d(V_i),\ell - \nonedge(V_i) - (j-|V_i|) \cdot |V_i| + d(V_i) ]
\end{equation}

The size of the table $T$ is $O(pn^2m)$ since $1 \leq i \leq r \leq n$, $0 \leq j \leq n$, $0 \leq k \leq m$, and $0 \leq \ell \leq p$, as it makes no sense to add more sets to $C$ after the threshold $p$ of non-edges in $C$ has been exceeded. 
%The size of the table $T$ is $O(pn|E(G)||V(G)|)$ since $i$ ranges from $1$ to $n$, $j$ from $0$ to $|V(G)|$, $k$ from $0$ to $|E(G)|$ and $\ell$ from $0$ to $p$ as it makes no sense to add more sets to $C$ after the threshold $p$ of non-edges in $C$ has been exceeded. 
We initialize the table to \textsf{\small true} in $T[0,|V_0|,d(V_0),\nonedge(V_0)]$
%and $T[1,|V_0 \cup V_1|,d(V_0 \cup V_1),\nonedge(V_0 \cup V_1)]$ 
 and \textsf{\small false} everywhere else. Then we compute the values of the table using Equation~\ref{eqn:nonedge2}, treating every time we go out of bounds as a \textsf{\small false} entry. The algorithm returns \textsf{\small true} if there is an entry of $T$ which is \textsf{\small true} for $i=r$, $k \leq q$ and $\ell \leq p$. The running time bound is immediate, while correctness follows from Equations~\ref{eqn:nonedge1} and~\ref{eqn:nonedge2}. 
\end{proof}

%In Lemma~\ref{lem:nonedgeSatellite} we made no attempt to optimize the
%running time or space consumption of the algorithm. Probably, a faster
%implementation of the presented algorithm should be achievable.
For the version of {\sc Satellite Problem} when $\mu = \nondeg$ we do not have a polynomial time algorithm. Instead, we give an algorithm with running time $(3e)^{q+o(q)}n^{O(1)}$ based on dynamic programming and the {\em color coding} technique of Alon et al.~\cite{AlonYZ95}. 
When using color coding, it is common to give a randomized algorithm first, and then derandomize it using appropriate hash functions. In our case, existing hash functions are sufficient to give a deterministic algorithm, and our deterministic algorithm is not conceptually more difficult than the randomized version. Therefore, 
%Our application of color coding is quite simple, so
we only present the deterministic version. For this we will need the following proposition.
%When using color coding, it is common to give a randomized algorithm
%first, and then derandomize it using appropriate hash functions. Our
%application of color coding is quite simple, so we directly give the
%deterministic version. For this we will need the following
%proposition.
\begin{proposition}[\cite{NSS95}]\label{prop:hashFam1} For every $n$, $k$ there is a family of functions ${\cal F}$ of size $O(e^k \cdot k^{O(\log k)} \cdot \log n)$ such that every function $f \in {\cal F}$ is a function from $\{1, \ldots, n\}$ to $\{1, \ldots, k\}$ and for every subset $S$ of $\{1, \ldots, n\}$ of size $k$ there is a function $f \in {\cal F}$ that is bijective when restricted to $S$. Furthermore, given $n$ and $k$, ${\cal F}$ can be computed in time $O(e^k \cdot k^{O(\log k)} \cdot \log n)$.
\end{proposition}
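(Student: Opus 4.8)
The plan is to decouple the two difficulties in the statement — the large universe $[n]$ and the small target range $[k]$ — and resolve them by composition. First I would invoke the explicit $(n,k,k^2)$-splitter of Naor, Schulman and Srinivasan recalled in Section~\ref{sec:derand}: a family $\mathcal{H}$ of functions $[n]\to[k^2]$ of size $O(k^6\log k\log n)$ such that some $h\in\mathcal{H}$ is injective on any prescribed $k$-set. If I then had a second family $\mathcal{G}$ of functions $[k^2]\to[k]$, of size $e^k k^{O(\log k)}$, such that for every $k$-subset $T$ of $[k^2]$ some $g\in\mathcal{G}$ is bijective on $T$, the composed family $\mathcal{F}=\{\,g\circ h : h\in\mathcal{H},\ g\in\mathcal{G}\,\}$ would do the job: given a $k$-set $S\subseteq[n]$, choose $h$ injective on $S$, put $T=h(S)$ (a $k$-subset of $[k^2]$), choose $g$ bijective on $T$, and then $g\circ h$ is bijective on $S$. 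The size multiplies to $e^k k^{O(\log k)}\log n$ and all functions can be listed within the same bound. So the whole statement reduces to constructing $\mathcal{G}$ over the small universe $[k^2]$.

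For $\mathcal{G}$ I would argue probabilistically first and then derandomize. A uniformly random function $g\colon[k^2]\to[k]$ is bijective (equivalently, injective) on a fixed $k$-set with probability exactly $k!/k^k=\Theta(\sqrt{k}\,e^{-k})$ by Stirling's formula. Since there are $\binom{k^2}{k}\le (ek)^k$, hence $\log\binom{k^2}{k}=O(k\log k)$, candidate $k$-sets, a union bound shows that a family of $O\big(e^k k^{-1/2}\cdot k\log k\big)=e^k\,k^{O(1)}$ independent random functions is, with positive probability, simultaneously bijective on all of them. This already gives existence of $\mathcal{G}$ of the required size.

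The remaining — and genuinely technical — point is to make $\mathcal{G}$ \emph{explicit} within time $e^k k^{O(\log k)}$. The naive derandomization (greedy set cover, picking at each step a function bijective on the largest number of still-uncovered $k$-sets, the existence of a good choice following from the probabilistic bound) would be correct but far too slow, since the ground set of $k$-sets already has size $\approx k^k=2^{\Theta(k\log k)}$. I expect this to be the main obstacle, and it is precisely the point that requires the construction of \cite{NSS95}: one replaces the one-shot random function by an $O(\log k)$-level recursive scheme that repeatedly splits the subset still to be separated into smaller parts while only mildly enlarging the range, using at each level a splitter of the type above, and applies the method of conditional expectations with a suitable pessimistic estimator on the $O(\log k)$ base instances, whose parameters are small enough that the estimator can be evaluated quickly. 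Each of the $O(\log k)$ levels contributes a factor $k^{O(1)}$ to the family size and the base contributes the $e^k$, yielding $e^k k^{O(\log k)}$; combined with the outer splitter this gives the claimed family over $[n]$, constructible in the stated time.
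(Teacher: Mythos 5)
The paper does not prove this statement; it is imported as a black box from \cite{NSS95} (indeed the $(n,k,k^2)$-splitter that you invoke from Section~\ref{sec:derand} and this perfect-hash-family are both results of that same paper). So there is no ``paper's own proof'' to compare against, only your reconstruction of the cited construction.

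Your reconstruction is sound as far as it goes. The reduction to the universe $[k^2]$ by composing with the $(n,k,k^2)$-splitter is exactly right, it correctly peels the $\log n$ dependence off and leaves a problem whose parameters are purely in $k$, and the size bookkeeping (the $k^6\log k$ of the splitter disappears into the $k^{O(\log k)}$) works. The probability calculation is correct: for a fixed $T$ of size $k$, a uniform $g:[k^2]\to[k]$ is bijective on $T$ with probability exactly $k!/k^k=\Theta(\sqrt{k}\,e^{-k})$, and since $\ln\binom{k^2}{k}=O(k\log k)$, the union bound gives existence of a family of size $e^k k^{O(1)}$. You are also right that the naive conditional-expectations derandomization over all $\binom{k^2}{k}\approx(ek)^k$ target sets is far too slow, and that this is precisely where the work of \cite{NSS95} goes.

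Where the proposal stops being a proof is in the last paragraph. The phrase ``$O(\log k)$-level recursive scheme'' is a hand-wave rather than a construction, and it is not an accurate paraphrase of what \cite{NSS95} actually do: their scheme is a shallow composition in which a $(k^2,k,\ell)$-splitter with $\ell\approx k/\log k$ first cuts the target $k$-set into $\ell$ pieces of size $O(\log k)$, and then one perfectly hashes those small pieces into disjoint sub-ranges; the $k^{O(\log k)}$ factor in the final bound is exactly the cost of derandomizing, by conditional expectations, perfect hashing of $O(\log k)$-size subsets of $[k^2]$ — of which there are only $k^{O(\log k)}$. For the purposes of this paper none of this matters, since Proposition~\ref{prop:hashFam1} is used as a citation; but if you intend your argument as a self-contained proof, the final derandomization paragraph needs to be replaced by the actual two-level splitter composition and the explicit pessimistic estimator, not a description of what such a scheme would have to look like.
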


\begin{lemma}\label{lem:nondegSatellite} 
There is a $(3e)^{q+o(q)}n^{O(1)}$ time algorithm for \nondeg{}-{\sc Satellite Problem}.
\end{lemma}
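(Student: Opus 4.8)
## Proof Plan for Lemma~\ref{lem:nondegSatellite}

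The plan is to adapt the dynamic programming scheme used for $\nonedge$ in Lemma~\ref{lem:nonedgeSatellite}, but to handle the fact that $\nondeg(C) \le p$ is a constraint on the \emph{maximum} complement-degree over all vertices of $C$, rather than a sum. The key difficulty is that whether a vertex $u \in V_0$ ends up satisfying $\nondeg(\{u\}\text{-constraint})$ depends on how many of the selected $V_i$ contain non-neighbors of $u$; since $p$ may be large, we cannot afford to track, for each vertex, its exact number of non-neighbors inside $C(S)$. Instead, I would exploit the fact that the interesting vertices are few: a vertex $u$ with more than $q$ non-neighbors \emph{outside} $V_0$ forces us to include essentially all the blocks $V_i$ containing those non-neighbors (otherwise $d(C)$ blows up), so the ``problematic'' structure is governed by a bounded-size object. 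This is where color coding enters.

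First I would set up notation: for each vertex $u$, let $B(u)$ be the set of blocks $V_i$ ($i\ge 1$) that contain at least one non-neighbor of $u$, and let the ``cost'' of \emph{excluding} block $V_i$ be $e_i := |\Delta_G(V_0,V_i)|$, the number of edges between $V_0$ and $V_i$, so that $d(C(S)) = d(V_0) - \sum_{i\in S} e_i$. A block with $e_i = 0$ contributes nothing to $d$ when excluded, so we may always exclude such blocks unless including them is needed to fix some vertex's complement-degree --- but including a block with $e_i=0$ can only increase complement-degrees, never decrease the number of non-neighbors, hence we may assume \emph{every selected block has $e_i \ge 1$}. Consequently $|S| \le d(V_0) - q \le d(V_0)$, and more importantly the number of selected blocks with positive boundary is at most $q$ once we also observe $\sum_{i\in S} e_i \ge d(V_0)-q$; combined with the structure this will give a bound of $O(q)$ on the number of blocks we ever need to reason about simultaneously.

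Next, the color coding step. Guess (by enumeration, $2^{O(q)}$ choices) the multiset of boundary sizes $e_i$ of the selected blocks, as in Lemma~\ref{lem:redstarderand2}; this fixes the number $t \le q$ of selected blocks and their boundary contributions. Now apply Proposition~\ref{prop:hashFam1} with $k = t \le q$ to color the blocks $V_1,\dots,V_r$ with colors $\{1,\dots,t\}$ so that the $t$ blocks of the hypothetical solution receive distinct colors; there is a hash family of size $e^{q+o(q)}n^{O(1)}$ for which some function works. Having fixed a coloring, I would run a dynamic program over colors $1,\dots,t$: the state tracks, for each already-processed color class, which single block was picked (or a bounded summary thereof). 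The subtlety is checking the $\nondeg \le p$ constraint: for each vertex $u \in V_0$, its complement-degree in $C(S)$ equals (non-neighbors of $u$ in $V_0$) plus $\sum_{i\in S}$ (non-neighbors of $u$ in $V_i$); to verify this is $\le p$ for all $u$ simultaneously, I would observe that for each selected block $V_i$, the set of vertices of $V_0$ having a non-neighbor in $V_i$ is ``charged'' through the $e_i$ boundary edges in a way that lets us restrict attention to the $O(q)$ boundary vertices of $V_0$ --- a non-boundary vertex of $V_0$ is adjacent to all of $V(G)\setminus V_0$ that it could be adjacent to, so its complement-degree is determined solely inside $V_0$ plus the non-neighbors contributed by included blocks, and we can precompute these counts. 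Wait --- a cleaner route: since there are no edges among $V_1,\dots,V_r$, a vertex of $V_0$ adjacent to \emph{some} vertex of $V_i$ need not be adjacent to all of $V_i$, so I must track per-block, per-vertex non-neighbor counts, but I can aggregate them: the DP state after processing colors $1,\dots,j$ records the partial sum $d(C(S))$ so far plus, for the (at most $q$) boundary vertices of the current partial cluster, their running non-neighbor count capped at $p+1$. Since the total number of boundary vertices is $\le q$ and each count is in $\{0,\dots,p\}$, naively this is $(p+1)^q$ states, which is too much --- so instead I cap and note that only $O(q)$ vertices are ever boundary vertices across the whole process, and I track the \emph{single} worst vertex plus a potential argument, or alternatively fold the $\nondeg$ check into a final verification once $C(S)$ is fully determined by the color-class choices (each color class contributes one block, so $S$ is determined by a function from $\{1,\dots,t\}$ to block-indices, and I enumerate these via DP where the state is just $d(C(S))$ so far, checking $\nondeg \le p$ only at the end using the fact that $|S| \le q$ makes the final check $O(qn)$ per candidate).

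The main obstacle, and where I expect to spend the real effort, is the running-time bookkeeping to land exactly at $(3e)^{q+o(q)}n^{O(1)}$ rather than a weaker bound like $2^{O(q^2)}$ or $q^{O(q)}$. The factor $3e$ should arise as the product of a $3^{q+o(q)}$ factor --- coming from guessing, for each of the $q$ units of boundary budget, one of a constant number of ``types'' of extension in the DP, reminiscent of the $3^k$ bound in subset-sum-style DP over a budget of size $q$ --- and the $e^{q+o(q)}$ factor from the hash family of Proposition~\ref{prop:hashFam1}. I would structure the DP so that each transition (processing one color class, equivalently spending the boundary budget $e_i$ of one selected block) branches in a way whose total multiplicative cost telescopes to $3^{\sum e_i} \le 3^q$ up to subexponential factors, and then multiply by the $e^{q+o(q)}$ hash-family size and the $2^{O(q)} = 2^{o(q)}\cdot(\text{const})^q$ guessing overhead, absorbing all lower-order contributions into the $o(q)$ in the exponent. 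Verifying that the $\nondeg$ constraint can be enforced within this budget --- i.e.\ that capping non-neighbor counts at $p+1$ and/or deferring the check does not blow up the state space beyond $3^{q+o(q)}n^{O(1)}$ --- is the delicate point that the argument must nail down.
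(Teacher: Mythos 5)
Your plan has a genuine gap that breaks the color-coding step. You claim that after the boundary-size guesses, the number $t$ of \emph{selected} blocks satisfies $t\le q$, and you then apply the splitter to the blocks themselves. But the constraint $d(C(S)) = d(V_0) - \sum_{i\in S} e_i \le q$ gives a \emph{lower} bound $\sum_{i\in S} e_i \ge d(V_0)-q$, not an upper bound on $|S|$; the number of included blocks can easily be $\Omega(n)$ (e.g., $V_0$ touches every block with one edge and only $q$ of those edges are allowed to leave $C$, so almost all blocks must be included). With $|S|$ unbounded in $q$, a hash family injective on the selected blocks has no $e^{q+o(q)}$-size construction, and a ``DP over color classes, one block per class'' state space is not of the claimed size. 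You half-sense this and try to back off to a capped-count DP, but $(p+1)^q$ states is, as you note yourself, too many, and the alternatives you gesture at (tracking the single worst vertex, or a final $\nondeg$ check) again rest on $|S|\le q$.

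The paper sidesteps this by not color-coding the blocks at all. Instead it first enumerates the target size $c=|C|$. Then the condition $\nondeg\le p$ becomes entirely local: a vertex $u\in V_i$, $i\ge 1$, placed in $C$ has exactly $c-1-d(u)$ non-neighbors, so blocks containing a vertex of degree below $c-p-1$ are forcibly excluded (and $q$ decremented); a vertex $v\in V_0$ has $c-1-d(v)+|\Delta(v)\cap\Delta(C)|$ non-neighbors, which gives a per-vertex \emph{capacity} $\textup{cap}(v)=p+d(v)-c+1$ on boundary edges incident to $v$. Crucially, the total number of boundary edges is $\le q$, so the paper introduces $\textup{cap}(v)$ ``bins'' per $v\in V_0$ and color-codes the $q$-element set of \emph{used bins} with $q$ colors via Proposition~\ref{prop:hashFam1}; this is the structure of bounded size, not the selected blocks. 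The DP then runs over all $r$ blocks with state $(i,j,R)$ (block index, current cluster size, set of used colors $R\subseteq\{1,\dots,q\}$), checking membership of a candidate color set in $\mathcal{R}_i$ by bipartite matching between $\Delta(V_i)$ and colors. The $3^q$ comes from $\sum_{R}\sum_{R'\subseteq R}1 = 3^q$ in filling the table and the $e^{q+o(q)}$ from the hash family, multiplying to $(3e)^{q+o(q)}n^{O(1)}$. So while your intuition about where the $3^q$ and $e^q$ factors should come from is roughly right, the object you propose to color is wrong, and the fix---fixing $c$ to localize the $\nondeg$ constraint and coloring the boundary bins instead of the blocks---is the missing idea.
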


\begin{proof}
In Lemma~\ref{lem:sizeSatellite}, the set $S$ described which sets $V_i$ went into $C$. For this lemma, it is more convenient to let $S$ describe the sets $V_i$ which are {\em not} in $C$. Define $U=\{1, \ldots, r\}$, the task is to find a subset $S$ of $U$ such that $d(C(U \setminus S)) \leq q$ and $\nondeg(C(U \setminus S)) \leq p$. We iterate over all possible values $c \geq |V_0|$ of $|C(S)|$, and for each value of $c$ we will only look for sets $S$ such that $|C(U \setminus S)|=c$. This gives us the following advantage: for every vertex $v \in V_i$ for $i \geq 1$ if we choose to put $V_i$ into $C$ then $v$ will have exactly $c-d(v)-1$ non-neighbors in $C$. Hence for any $i$ such that $V_i$ contains a vertex $v$ with degree less than $c-p-1$ we know that $i \in S$. In other words, such a component $V_i$ should not be in the solution $C$, hence we can remove $V_i$ from the graph and decrease $q$ by $d(V_i)$ (as the edges $\Delta(V_i)\subseteq \Delta(V_0)$ will leave $C$ in any solution). Therefore, we can assume that every vertex $v\not\in V_0$ has degree at least $c-p-1$.

From now on, we only need to worry about $d(C(U \setminus S))\le q$ and about the non-degrees of vertices in $V_0$. A vertex $v \in V_0$ will have exactly $c-1-d(v)+|\Delta(v) \cap \Delta(C(U \setminus S))|$ non-neighbors. In particular, we need to make sure that no vertex $v \in V_0$ will have more than $p+d(v)-c+1$ neighbors outside of $C(U\setminus S)$. For every $v \in V_0$ we define the capacity of $v$ to be $\textup{cap}(v) = p+d(v)-c+1$. If any vertex has negative capacity, we discard the choice of $c$, as it is infeasible.

%Every vertex $v \in V_0$ gets $\textup{cap}(v)$ {\em bins}. At this point we construct using Proposition~\ref{prop:hashFam1} a family ${\cal F}$ of colorings of the bins with colors from $\{1,\ldots,q\}$ such that for any set $X$ of $q$ bins there is a coloring $f \in {\cal F}$ that colors the bins in $X$ with different colors. The size of ${\cal F}$ is bounded from above by $O(e^q \cdot q^{O(\log q)} \cdot \log(|V(G)|)) \leq O(e^{q+o(q)}\log(|V(G)|))$. The algorithm has an outer loop in which it goes over all the colorings in ${\cal F}$. In the remainder of the proof we will assume that the bins are colored with a color from $\{1,\ldots, q\}$. 

Every vertex $v \in V_0$ gets $\textup{cap}(v)$ {\em bins}. At this point we construct using Proposition~\ref{prop:hashFam1} a family ${\cal F}$ of colorings of the bins with colors from $\{1,\ldots,q\}$ such that for any set $X$ of $q$ bins there is a coloring $f \in {\cal F}$ that colors the bins in $X$ with different colors. The size of ${\cal F}$ is bounded from above by $O(e^q \cdot q^{O(\log q)} \cdot \log(n)) \leq O(e^{q+o(q)}\log(n))$. The algorithm has an outer loop in which it goes over all the colorings in ${\cal F}$. Every vertex $v$ in $V_0$ is assigned a set of colors, namely all the colors of the bins that belong to $v$. In the remainder of the proof we will assume that each vertex $v \in V_0$ has a set of colors attached to it. This set of colors is denoted by $\textup{colors}(v)$. Since $v$ had $\textup{cap}(v)$ bins assigned to it, we have that $|\textup{colors}(v)| \leq \textup{cap}(v)$.

In each iteration of the outer loop we will search for a special kind a solution: A map $\gamma$ that colors a set of edges in $\Delta(V_0)$ with colors from $\{1,\ldots,q\}$ is called {\em good} if the following two conditions are satisfied: (i) all edges that are colored by $\gamma$ receive different colors, and (ii) if an edge $e$ is colored by $\gamma$ and is incident to $v \in V_0$, then the color of $e$ is one of the colors of $v$. In other words, $\gamma(e) \in \textup{colors}(v)$. A subset $S \subseteq U$ is called {\em colorful} if the edges in $\Delta(C(U\setminus S))$ have a good coloring $\gamma$. What the algorithm will look for is a colorful set $S$ such that $|C(U \setminus S)|=c$ and a good coloring $\gamma$ of $\Delta(C(U\setminus S))$. Observe that since there are only $q$ different colors available and each edge of $\Delta(C(U\setminus S))$ must have a different color, a colorful solution automatically satisfies $d(C(U \setminus S)) \leq q$. Furthermore since every vertex $v \in V_0$ satisfies $|\textup{colors}(v)| \leq \textup{cap}(v)$, any colorful solution satisfies $\nondeg(C(U \setminus S)) \leq p$.

Conversely, consider a subset $S$ of $U$ such that $|C(U \setminus S)|=c$, $d(C(U \setminus S)) \leq q$ and $\nondeg(C(U \setminus S)) \leq p$. For each edge $e \in \Delta(C(U \setminus S))$, select a bin that belongs to the vertex $v \in V_0$ which is incident to $e$. Since each vertex $v \in V_0$ is incident to at most $\textup{cap}(v)$ edges in $\Delta(C(U\setminus S))$, we can select a different bin for each edge $e \in \Delta(C(U \setminus S))$. In total at most $q$ bins are selected, and hence there is an iteration of the outer loop where all of these bins are colored with different colors. Let $\gamma$ be a coloring of the edges in $\Delta(C(U \setminus S))$ that colors each edge with the color of the bin that the edge is assigned to. By construction, $\gamma$ is a good coloring of $\Delta(C(U \setminus S))$ in this iteration of the outer loop, and hence $S$ is colorful.

%, if the bins are assigned random colors from $\{1, \ldots, q\}$ then the probability that all the non-empty bins are colored with different colors is at least $q!/q^q \geq e^{-q}$. Hence, the probability that a set $S$ such that  that $|C(U \setminus S)|=c$, $d(C(U \setminus S)) \leq q$ and $\nondeg(C(U \setminus S)) \leq p$ becomes colorful when the bins are randomly colored is at least $e^{-q}$.

%For every vertex $v \in V_0$, define $d(S,v) = |\Delta(C(U \setminus S)) \cap \Delta(v)|$. 

To complete the proof, we need an algorithm that decides whether there exists a colorful set $S \subseteq U$ such that $|C(U \setminus S)|=c$. 
%For every $i$ such that $\Delta(V_i)$ contains more than one edge of the same color, we know that $V_i \subseteq C$ and hence $i \notin S$. Let $F_c$ be the set of {\em forced} indices, that is, $F_c$ contains all $i \in U$ such that $\Delta(V_i)$ contains more than one edge of the same color. 
%For a vertex set $X$, define $\textsf{\small colors}(X)$ to be the set of colors on the edges of $\Delta(X)$. For every $0 \leq i \leq n$, $0\leq j \leq |V(G)|$ and $R \subseteq \{1, \ldots, q\}$, we define $T[i,j,R]$ to be \textsf{\small true} if there is a subset $S$ of $\{1, \ldots, i\}$ such that $|C(U \setminus S)|=j$, all edges of $\Delta(C(U \setminus S))$ have different colors and $\textsf{\small colors}(C(U \setminus S)) \subseteq R$. 
For every $0 \leq i \leq r$, $0\leq j \leq n$ and $R \subseteq \{1, \ldots, q\}$, we define $T[i,j,R]$ to be \textsf{\small true} if there is a subset $S$ of $\{1, \ldots, i\}$ such that $|C(U \setminus S)|=j$, and a good coloring $\gamma$ of $\Delta(C(U\setminus S))$ with colors from $R$.
%all edges of $\Delta(C(U \setminus S))$ have different colors and $\textsf{\small colors}(C(U \setminus S)) \subseteq R$.
Suppose that such a set $S$ and map $\gamma$ exists. We have that either $i \in S$ or $i \notin S$. If $i \notin S$, then $S$ is a subset of $\{1, \ldots,i-1\}$ and hence $T[i-1,j,R]$ is \textsf{\small true}. If on the other hand $i \in S$, then let $S'=S \setminus \{i\}$ and $R_i$ be the set of colors of edges in $\Delta(V_i)$. In this case, we have that $|C(U \setminus S')|=j+|V_i|$, and $\gamma$ colors the edges of $\Delta(C(U\setminus S'))$ with colors from $R \setminus R_i$, so $T[i-1,j+|V_i|,R \setminus R_i]$ is true. Define ${\cal R}_i$ to be a family of sets of colors such that $R^* \in {\cal R}_i$ if there exists a good coloring of $\Delta(V_i)$ with colors from $R^*$. Clearly $R_i \in {\cal R}_i$. This yields the following recurrence for $T[i,j,R]$.
\begin{equation}\label{eqn:nondeg}
T[i,j,R] = 
%\begin{cases}
%T[i-1,j,R] & \text{if } i\in F_c \text{ or } \textsf{\small colors}(V_i) \nsubseteq R,\\
%T[i-1,j,R] \vee T[i-1,j+|V_i|,R \setminus \textsf{\small colors}(V_i)] & \text{if } i \notin F_c.
%\end{cases}
T[i-1,j,R] \vee \bigvee_{\substack{R_i \in {\cal R}_i\\R_i \subseteq R}} T[i-1,j+|V_i|,R \setminus R_i] 
\end{equation}
Using Equation~\ref{eqn:nondeg} we can find a colorful $C$ in $3^qn^{O(1)}$ time as follows. We initialize the table to \textsf{\small true} in $T[0,|C(U)|,R]$ for all $R \subseteq \{1, \ldots, q\}$. Then we use Recurrence~\ref{eqn:nondeg} to fill the table for $T[i,j,R]$. The algorithm returns $\textsf{\small true}$ if $T[r,c,R]$ is true for some subset $R$ of $\{1,\ldots,q\}$. 
%Here $n$ is the largest index of a set $V_i$ in the input to the \textsc{Satellite Problem}. 
The running time of the algorithm for finding a colorful set $S$ is upper bounded by the size of the table, which is $2^qn^2$, times the time it takes to use Equation~\ref{eqn:nondeg} to fill a single table entry. To fill a table entry we go through all subsets $R_i \subset R$ and check whether $R_i \in {\cal R}_i$ in polynomial time by using a maximum matching algorithm. Specifically, we can build a bipartite graph with edges in $\Delta(V_i)$ on one side and elements of $R_i$ on the other. In this graph there is an edge between $e \in \Delta(V_i)$ and a color $r \in R_i$ if $e$ is incident to a vertex $v \in V_0$ such that $r \in \textup{colors}(v)$. Matchings in this graph that match all edges in $\Delta(V_i)$ to a color correspond exactly to good colorings of $\Delta(V_i)$ with colors from $R_i$. Thus the total running time is bounded by $O(\sum_{R \subseteq \{1,\ldots,q\}}\sum_{R' \subset R}n^{O(1)}) = O(3^qn^{O(1)})$. Correctness of the algorithm follows from Equation~\ref{eqn:nondeg}. The total runtime of the algorithm is bounded by $O(3^qn^{O(1)})$ times the number of iterations of the outer loop, which is $O(e^{q+o(q)}\log(n))$. This completes the proof of the lemma.
\end{proof}

%Of course, one can trade time for success probability by repeating the algorithm of Lemma~\ref{lem:nondegSatellite} multiple times. In particular, repeating the algorithm $O(e^{2q})$ times will decrease the probability of false negatives from $1-e^{-2q}$ to $\frac{1}{2}$.
Lemmata~\ref{lem:redstarderand2}, \ref{lem:sizeSatellite}, \ref{lem:nonedgeSatellite} and \ref{lem:nondegSatellite} give Theorem~\ref{lem:thmainq}.% stated at the beginning of Section~\ref{sec:parq}.

%%--------------------
%% PARA-P
%%---------------------

\section{Parameterization by $p$}\label{sec:parp}
We prove in Section~\ref{sec:parpalg} that the \partitionsc{\mu}{p}{q}
is fixed-parameter tractable parameterized by $p$ for $\mu=\size$,
$\nonedge$, or $\nondeg$. Our algorithms work only on simple graphs,
i.e, graphs without parallel edges. In fact, as we show in
Section~\ref{sec:hardness-results}, the problem becomes hard if
parallel edges are allowed.
\subsection{Algorithms}\label{sec:parpalg}
In this section, we give algorithms with running time $2^{O(p)}n^{O(1)}$:
%\begin{theorem}\label{thm:algp} There is a $8e^{p+o(p)}|V(G)|^{O(1)}$ time algorithm for \partitionsc{\size}{p}{q} and a $8e^{3p+o(p)}|V(G)|^{O(1)}$ time algorithm for \partitionsc{\nonedge}{p}{q} and \partitionsc{\nondeg}{p}{q}.
%\end{theorem}
\begin{theorem}\label{thm:algp} There is a $2^{O(p)}n^{O(1)}$ time algorithm for \partitionsc{\size}{p}{q}, for \partitionsc{\nonedge}{p}{q} and for \partitionsc{\nondeg}{p}{q}.
\end{theorem}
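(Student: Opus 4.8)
The plan is to reduce, via Lemma~\ref{lem:uncrossing}, from \partitionsc{\mu}{p}{q} to $(\mu,p,q)$-\textsc{cluster}: since $\size$, $\nonedge$ and $\nondeg$ are all monotone, a $2^{O(p)}n^{O(1)}$ algorithm that finds a $(\mu,p,q)$-cluster containing a prescribed vertex $v$ (or reports that none exists) immediately yields a $2^{O(p)}n^{O(1)}$ algorithm for the partition problem. In all three cases we may assume the sought cluster $C$ is connected and inclusionwise minimal, since a connected component of a $(\mu,p,q)$-cluster is again one, with no larger $d$-value or $\mu$-value.

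\emph{The case $\mu=\size$.} Here $|C|\le p$, so color coding applies directly: a hash family as in Proposition~\ref{prop:hashFam1} of size $2^{O(p)}\log n$ colouring $V(G)$ with $p$ colours makes $C$ rainbow for at least one colouring. I would then run a dynamic program over colour subsets $R$ that contain the colour of $v$, building a connected rainbow cluster colour by colour and keeping track of enough statistics (the number of induced edges, which is at most $\binom{p}{2}$ since $G$ is simple, together with the degree sum) to recover $d(C)=\sum_{x\in C}\deg(x)-2e(G[C])$, accepting if some $R$ yields a connected rainbow cluster with $d(C)\le q$. The point where care is needed is that $d$ is not additive under union, so the newly created internal edges must be counted correctly each time the cluster is extended; the simple-graph assumption is what keeps the relevant counts polynomially bounded.

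\emph{The cases $\mu=\nonedge$ and $\mu=\nondeg$.} Now the cluster need not be small: it can be an arbitrarily large near-clique with only few edges leaving it. What \emph{is} bounded is the ``non-adjacency part'' of $C$: if $\nonedge(C)\le p$ then at most $2p$ vertices of $C$ are incident to a non-edge of $G[C]$ and the remaining vertices of $C$ form a clique completely joined to $C$; if $\nondeg(C)\le p$ then the complement of $G[C]$ has maximum degree at most $p$. I would colour-code this bounded structure --- a palette of $O(p)$ colours for $\nonedge$, and a more elaborate colouring for $\nondeg$, taken from hash families of total size $2^{O(p)}n^{O(1)}$ --- so that the relevant vertices (and non-edges) get pairwise distinct colours, and then run a dynamic program that constructs $C$ while storing only a coloured sketch of the complement $\overline{G[C]}$ together with the boundary size $d(C)$ and the size $|C|$. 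For $\nonedge$, once the at most $2p$ ``special'' vertices are fixed, the clique-like part of $C$ lies in their common neighbourhood and can be selected by a polynomial-time subroutine (in the same spirit as the $p=0$ special case, which is already polynomial).

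The main obstacle is exactly that this large clique-like portion of $C$ is not kept explicitly in the DP state, so its contribution to $d(C)$ --- and, when a vertex is forced into $C$, the further vertices dragged in with it --- must be reconstructed from the coloured sketch; this is where the absence of parallel edges is essential. Indeed, if parallel edges are allowed the problem becomes NP-hard already for $p=0$ in the $\nonedge$ and $\nondeg$ cases (and W[1]-hard parameterized by $p$ for $\size$), so some such hypothesis is unavoidable, and simplicity is precisely what makes the clique-like part of a cluster combinatorially tame enough for the sketch to carry the information the dynamic program needs. Assembling the three dynamic programs with Lemma~\ref{lem:uncrossing} then gives Theorem~\ref{thm:algp}.
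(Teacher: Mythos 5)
Your reduction via Lemma~\ref{lem:uncrossing} to $(\mu,p,q)$-\textsc{Cluster} matches the paper, and your treatment of $\mu=\size$ is essentially the paper's: colour-code $V(G)\setminus\{v\}$ with $O(p)$ colours and run a subset-sum style dynamic program (the paper organizes this slightly differently, first enumerating connected $v$-minimal pieces via Lemma~\ref{lem:enumsmall} and then combining them, but this is a cosmetic difference). The reduction to the connected case also needs a small caveat: $C\cup\{v\}$ may be assumed connected, but $C\setminus\{v\}$ typically is not, which is exactly why the paper decomposes it into connected $v$-minimal pieces before the knapsack step.

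For $\nonedge$ and $\nondeg$, however, your sketch has a genuine gap. You correctly observe that only a bounded ``non-adjacency part'' of $C$ has interesting structure and that the rest of $C$ is a clique joined to all of $C$, but you then wave at a ``coloured sketch'' DP together with ``a polynomial-time subroutine'' for filling in the clique-like part, without saying how to choose that part. This is the crux: adding or removing a vertex $w$ from the clique-like part changes $d(C)$ by roughly $\deg(w)-2|C|$, which can be of either sign, so ``selecting the clique-like part'' is itself an optimization intertwined with the $d(C)\le q$ constraint, and it is not obvious that it reduces to anything polynomial. The paper resolves this with a specific structural lemma you don't have: a $v$-minimal set $C$ satisfies $\overline{d}(C,v)<d(C,v)\le|C|$ (Observation~\ref{obs:cutsize}, where simplicity enters concretely via $d(C,v)\le|C|$), and from this one shows (Lemma~\ref{lem:enumlarge}) that there are only $O(2^p n)$ $v$-minimal sets $C$ with $\nondeg(C\cup\{v\})\le p$ and $|C|\ge 3p$: such a $C$ contains $N[u]\setminus\{v\}$ for some $u$, the vertices outside $N[u]$ number at most $p$, and the pool from which they can be drawn has size at most $p+3$, else $\overline{d}(C,v)\ge|C|$. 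The paper then splits on $|C|\ge 3p$ (direct enumeration) versus $|C|<3p$ (colour coding as in the $\size$ case). Without Observation~\ref{obs:cutsize} and the large/small dichotomy, your ``sketch''-based DP is not an algorithm, and your invocation of ``the $p=0$ special case'' being polynomial is precisely what Lemma~\ref{lem:enumlarge} proves, not something one gets for free.
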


 Because of Lemma~\ref{lem:uncrossing}, it is sufficient to solve the corresponding $(\mu,p,q)$-\textsc{Cluster} problem within the same time bound.  The setting is as follows. We are given a graph $G$, integers $p$ and $q$ and a vertex $v$ in $G$. The objective is to find a set $C$ {\em not} containing $v$ such that $d(C \cup \{v\}) \leq q$ and, depending on which problem we are solving, either $|C \cup \{v\}| = \size(C \cup \{v\}) \leq p$, $\nonedge(C \cup \{v\}) \leq p$ or $\nondeg(C \cup \{v\}) \leq p$.
% The reason for the somewhat
%roundabout definition of the problem will become clear after the next
%observation.

For a set $S$ and vertex $v$, define $\Delta(S,v)$ to be the set of edges with one endpoint in $S$ and one in $\{v\}$. Define $\overline{\Delta}(S,v)$ to be $\Delta(S) \setminus {\Delta}(S,v)$, and let  $d(S,v)=|\Delta(S,v)|$ and $\overline{d}(S,v)=|\overline{\Delta}(S,v)|$. We will say that a set $C$ is $v$-{\em minimal} if $v \notin C$ and $d(C' \cup \{v\}) > d(C \cup \{v\})$ for every $C' \subset C$. As %since the functions
 \size{},\nonedge{} and \nondeg{} are monotone we can focus on $v$-minimal sets: if there is a solution for the cluster problem, then there is a solution of the form $C\cup\{v\}$ for some $v$-minimal set $C$.
%The following is a very useful fact about such sets.
The following fact uses that there are no parallel edges:
\begin{observation}\label{obs:cutsize} Let $C$ be a $v$-minimal set. Then $\overline{d}(C,v) < d(C,v) \leq |C|$. \end{observation}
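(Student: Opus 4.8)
The key point is to exploit $v$-minimality: since no proper subset $C'\subset C$ has $d(C'\cup\{v\})\le d(C\cup\{v\})$, removing any single vertex $w$ from $C$ must strictly increase the size of the cut around the cluster. The plan is to make this quantitative by applying it to $C'=C\setminus\{w\}$ and tracking exactly which edges enter and leave the cut $\Delta(C\cup\{v\})$ when $w$ is deleted.

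First I would set $A=C\cup\{v\}$ and, for a vertex $w\in C$, compare $\Delta(A)$ with $\Delta(A\setminus\{w\})$. Deleting $w$ from $A$ removes from the cut all edges between $w$ and $V(G)\setminus A$, and adds to the cut all edges between $w$ and $A\setminus\{w\}=(C\setminus\{w\})\cup\{v\}$. Writing $a_w$ for the number of edges from $w$ to $V(G)\setminus A$ and $b_w$ for the number of edges from $w$ to $(C\setminus\{w\})\cup\{v\}$, we get $d(A\setminus\{w\})=d(A)-a_w+b_w$. Since $C$ is $v$-minimal and $C\setminus\{w\}\subsetneq C$, we must have $d(A\setminus\{w\})>d(A)$, hence $b_w>a_w$, i.e. $b_w\ge a_w+1$. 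In particular every $w\in C$ has at least one edge to $(C\setminus\{w\})\cup\{v\}$; the crucial consequence is obtained by choosing $w$ to be a vertex of $C$ incident to an edge of $\overline{\Delta}(C,v)$ — wait, I should instead sum or choose more carefully, so let me restructure.

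The cleanest route: consider a vertex $w\in C$ that is an endpoint, inside $C$, of some edge of $\overline\Delta(C,v)$; such $w$ exists whenever $\overline d(C,v)>0$ (and if $\overline d(C,v)=0$ the first claimed inequality $\overline d(C,v)<d(C,v)$ holds trivially once we show $d(C,v)\ge 1$, which follows from $b_w\ge1$ applied to any $w$, using that with no parallel edges $v$ contributes at most one of the $b_w$ edges). For the main inequality $d(C,v)\le |C|$: every edge of $\Delta(C,v)$ has a distinct endpoint in $C$ because there are no parallel edges, so $d(C,v)\le |C|$ immediately. It remains to show $\overline d(C,v)<d(C,v)$. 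Here I would argue that $\overline\Delta(C,v)$ is exactly $\Delta(C)\cap\binom{C}{1}$-type edges... more precisely, I would use the posimodular/submodular bookkeeping on $\Delta(\{v\})$ versus $\Delta(C)$: note $\Delta(C)\cup\Delta(\{v\})\supseteq$ edges leaving $C\cup\{v\}$, and $\overline\Delta(C,v)=\Delta(C)\setminus\Delta(C,v)$, while $\Delta(C,v)$ lies in both $\Delta(C)$ and $\Delta(\{v\})$. Applying $v$-minimality with $C'=C\setminus\{w\}$ for the specific $w$ incident to an edge of $\overline\Delta(C,v)$ gives $b_w\ge a_w+1$ where now $a_w$ counts edges of $\overline\Delta(C,v)$ at $w$ that leave — hmm, this still needs the global count.

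The honest main obstacle is turning the per-vertex inequality $b_w\ge a_w+1$ into the global statement $\overline d(C,v)<d(C,v)$; a single vertex deletion is not obviously enough, so I expect the real proof to either (i) choose $w$ very cleverly (e.g. a vertex of $C$ minimizing something, or an endpoint realizing $\overline d$), or (ii) run an exchange/uncrossing-type argument: repeatedly, as long as $\overline d(C,v)\ge d(C,v)$, find a vertex whose removal does not increase $d(C\cup\{v\})$, contradicting $v$-minimality. Concretely I would try: if $\overline d(C,v)\ge d(C,v)$, then $\sum_{w\in C}(\text{edges of }\overline\Delta(C,v)\text{ at }w) = 2\overline d(C,v) \ge d(C,v)+\overline d(C,v) = \sum_{w\in C}(\text{edges of }\Delta(C)\text{ at }w \text{ counting the }v\text{-edge once})$, which forces some $w\in C$ with $a_w\le b_w$, i.e. $d(A\setminus\{w\})\le d(A)$, contradicting $v$-minimality. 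Fixing the exact counting (each internal edge of $C$ contributes $0$ to both sides, each $\overline\Delta(C,v)$ edge contributes to $a_w$ at one endpoint, each $\Delta(C,v)$ edge contributes to $b_w$ at its $C$-endpoint and, by simplicity, $v$ is hit at most once per such $w$) is the routine-but-delicate part I would carry out carefully, and simplicity of the graph is exactly what makes $d(C,v)\le|C|$ and the double-counting line up.
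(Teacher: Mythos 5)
Your handling of the second inequality $d(C,v)\le|C|$ is fine --- simplicity of the graph gives it immediately, exactly as in the paper. The gap is in your argument for $\overline d(C,v)<d(C,v)$. Two problems. First, the double-counting at the end is simply wrong: every edge of $\overline\Delta(C,v)$ goes from $C$ to $V(G)\setminus(C\cup\{v\})$, so it has exactly one endpoint in $C$, and $\sum_{w\in C}(\text{edges of }\overline\Delta(C,v)\text{ at }w)=\overline d(C,v)$, not $2\overline d(C,v)$. Once corrected, the displayed inequality no longer yields a vertex $w$ with $a_w\le b_w$ (recall $\sum_w b_w=2|E(G[C])|+d(C,v)$, which can be much larger than $\sum_w a_w=\overline d(C,v)$ even when $\overline d(C,v)\ge d(C,v)$). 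Second, and more fundamentally, the per-vertex deletion approach you committed to --- comparing $C$ with $C\setminus\{w\}$ --- is the wrong subset to plug into $v$-minimality for this statement, as you yourself half-suspected (``a single vertex deletion is not obviously enough'').

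The missing idea is that $v$-minimality applies to \emph{every} proper subset, in particular to $C'=\emptyset$. Since $d(C\cup\{v\}) = d(v) - d(C,v) + \overline d(C,v)$ (edges from $v$ out of $C\cup\{v\}$ are $d(v)-d(C,v)$; edges from $C$ out of $C\cup\{v\}$ are exactly $\overline\Delta(C,v)$), $v$-minimality with $C'=\emptyset$ gives $d(v)=d(\emptyset\cup\{v\})>d(C\cup\{v\})=d(v)-d(C,v)+\overline d(C,v)$, i.e.\ $d(C,v)>\overline d(C,v)$. This is exactly the paper's one-line argument and completely sidesteps the local/global bookkeeping you were struggling with.
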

In particular, if $\overline{d}(C,v) \geq d(C,v)$, then $d(v) \leq d(C \cup \{v\})$, contradicting that $C$ is $v$-minimal. Since $\overline{d}(C,v) < |C|$, it follows that $C$ must contain a vertex $u$ such that $N[u] \subseteq C \cup \{v\}$. Now we show that there are not too many $v$-minimal sets $C$ of size at most $p$ such that $G[C]$ is connected.

\begin{lemma}\label{lem:enumsmall} For any graph $G$, vertex $v$ and integer $p$, there are at most $4^pn$ $v$-minimal sets $C$ such that $|C| \leq p$ and $G[C]$ is connected. Furthermore, all such sets can be listed in time $O(4^pn)$.
\end{lemma}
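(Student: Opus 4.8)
The plan is to enumerate the desired sets by a two–way branching procedure rooted at a single vertex of $C$, and to keep the search tree small by exploiting the fact that a $v$-minimal set has a boundary comparable in size to the set itself. First I would record the two structural facts that make everything work. Write $H=G-v$. If $C$ is $v$-minimal then by Observation~\ref{obs:cutsize} we have $\overline{d}(C,v)<d(C,v)\le|C|$, and since $\Delta_H(C)=\overline{\Delta}(C,v)$ this gives $|\Delta_H(C)|\le|C|-1\le p-1$; in particular the set $N_G(C)\setminus C$ of vertices outside $C$ adjacent to $C$ has size at most $p$, because it consists of $v$ together with at most $p-1$ endpoints of $\overline{\Delta}(C,v)$. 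The second fact is elementary: since $G[C]$ is connected, $C$ is exactly the connected component containing any prescribed $u\in C$ of the graph $G-\Delta_G(C)$, so $C$ is reconstructible from one of its vertices together with a correct classification of the edges leaving it.

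Next I would set up the branching. For each vertex $u\in V(G)$ run a search that maintains a connected set $Z$ with $u\in Z$ (vertices committed to $C$) and a set $O$ with $v\in O$ (vertices committed to lie outside $C$), starting from $Z=\{u\}$, $O=\{v\}$; the two sets stay disjoint. While there is an edge $zy$ with $z\in Z$ and $y\notin Z\cup O$, pick one and branch: in one branch add $y$ to $Z$ (this keeps $G[Z]$ connected), in the other add $y$ to $O$. Edges from $Z$ to a vertex already in $Z$ (resp.\ in $O$) are simply recorded as internal (resp.\ boundary) edges without branching. Prune a branch as soon as $|Z|>p$, or $|O|>p$, or we would add to $Z$ a vertex of degree at least $2p$. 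When no branchable edge remains, $Z$ is a connected set all of whose neighbours lie in $Z\cup O$; we output $Z$ after checking that it is $v$-minimal, which can be done in polynomial time. A straightforward argument shows that if $C$ is a $v$-minimal connected set with $|C|\le p$ and $u\in C$, then the run started at $u$ reaches exactly one leaf with $Z=C$, namely the one obtained by adding $y$ to $Z$ precisely when $y\in C$; along the path to this leaf none of the prunings fires, because $Z\subseteq C$ keeps $|Z|\le|C|\le p$, every vertex ever placed in $O$ lies in $\{v\}\cup(N_G(C)\setminus C)$ so $|O|\le p$, and (using $v$-minimality again) every vertex of $C$ has degree below $2p$ since more than half of its neighbours lie in $C\cup\{v\}$.

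Finally I would count. On any root-to-leaf path of a surviving branch, each branching step either enlarges $Z$ — at most $|Z|-1\le p-1$ times, since surviving nodes have $|Z|\le p$ — or enlarges $O$ — at most $p-1$ times, since $O$ starts as $\{v\}$ and surviving nodes have $|O|\le p$. Hence for a fixed $u$ the search tree has at most $2^{(p-1)+(p-1)}=4^{p-1}$ leaves, and fewer than $4^p$ nodes. Charging each $v$-minimal connected set $C$ with $|C|\le p$ to a fixed one of its vertices (say the smallest-indexed), the sets charged to $u$ are among the outputs of the run started at $u$, of which there are fewer than $4^p$; summing over the $n$ choices of $u$ yields the bound $4^pn$. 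Running the $n$ searches, each exploring an $O(4^p)$-node tree with polynomial work per node (the degree prune keeps the total edge work per branch polynomial in $p$), lists all the sets in time $4^pn^{O(1)}$, and with careful incremental bookkeeping this improves to $O(4^pn)$.

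The step I expect to need the hypotheses, and hence the one to get right, is the bound $|N_G(C)\setminus C|\le p$: it is exactly this consequence of Observation~\ref{obs:cutsize} (which in turn rests on $v$-minimality and on the absence of parallel edges) that lets a depth-$2p$ binary branching suffice. The remaining ingredients — that the branching reconstructs each target $C$, that the prunings never discard a valid $C$, and that $v$-minimality can be tested in polynomial time on the candidate outputs — are routine.
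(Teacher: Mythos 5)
Your proof is correct and follows essentially the same route as the paper's: both use Observation~\ref{obs:cutsize} to bound the external boundary of a $v$-minimal $C$ by $p$ and then run, rooted at each vertex $u$, a two-way include/exclude branching whose two counters (your $|Z|$ and $|O|$; the paper's $|S|$ and $|F|$) are each bounded by $p$, giving at most $4^{p-1}$ leaves per root. The only slip is the aside that ``more than half'' of a vertex's neighbours lie in $C\cup\{v\}$, which need not hold; the degree bound you actually need, $\deg(w)\le(|C|-1)+1+\overline{d}(C,v)\le 2p-1$ for $w\in C$, follows directly from $\overline{d}(C,v)\le|C|-1\le p-1$.
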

\begin{proof}
By Observation~\ref{obs:cutsize}, any $v$-minimal set $C$ of size at
most $p$ satisfies $\overline{d}(C,v) < p$. Let $S$ be a set such that
$|S| \leq p$ and $G[S]$ is connected. Let $F$ be a subset of $N(S)
\setminus \{v\}$ of size at most $p-1$. We prove by downward induction
on $|S|$ and $|F|$ that there are at most $2^{2p-|S|-|F|-1}$
$v$-minimal sets such that $|C| \leq p$ , $G[C]$ is connected, $S \subseteq C$, and $F \cap C = \emptyset$. If $|S|=p$ then the only possibility for $C$ is $S$, while $2^{2p-|S|-|F|-1} \geq 1$. Similarly, consider the case that $|F|=p-1$. Now, every vertex of $F$ has at least one edge into $C$ and hence $F \cup \{v\} \subseteq N(C)$ and $\overline{d}(C,v) = p-1$. By Observation~\ref{obs:cutsize}, we have that $N(C)=F \cup \{v\}$ and the only possibility for $C$ is the connected component of $G \setminus (F \cup \{v\})$ that contains $S$. Hence there is at most one possibility for $C$ and $2^{2p-|S|-|F|-1} \geq 1$.

For the inductive step, consider a set $S$ such that $|S| \leq p$, and
$G[S]$ is connected and a subset $F$ of $N(S) \setminus \{v\}$ of size
at most $p-1$. We want to bound the number of $v$-minimal sets such
that $|C| \leq p$ and $G[C]$ is connected, $S \subseteq C$ and $F \cap
C = \emptyset$. If $N(S) \setminus (F \cup \{v\})$ is empty, then there is only one choice for $C$, namely $S$, and $2^{2p-|S|-|F|-1} \geq 1$. Otherwise, consider a vertex $u \in N(S) \setminus (F \cup \{v\})$. By the induction hypothesis, the number of $v$-minimal sets such that $|C| \leq p$ and $G[C]$ is connected, $S \cup \{u\} \subseteq C$ and $F \cap C = \emptyset$ is at most $2^{2p-|S|-|F|-2}$. Similarly, the number of $v$-minimal sets such that $|C| \leq p$ and $G[C]$ is connected, $S \subseteq C$ and $(F \cup \{u\}) \cap C = \emptyset$ is at most $2^{2p-|S|-|F|-2}$. Since either $u \in C$ or $u \notin C$, the two cases cover all possibilities for $C$ and hence there are at most $2 \cdot 2^{2p-|S|-|F|-2} = 2^{2p-|S|-|F|-1}$ possibilities for $C$.

For a fixed $S$ and $F$, the above proof can be translated into a procedure which lists all $v$-minimal sets such that $|C| \leq p$ and $G[C]$ is connected, $S \subseteq C$ and $F \cap C = \emptyset$. We run the procedure for $S=\{u\}$ and $F=\emptyset$ for every possible choice of $u$. Hence, there are at most $4^pn$ $v$-minimal sets $C$ such that $|C| \leq p$ and $G[C]$ is connected, and the sets can be efficiently listed. 
\end{proof}

The following observation is handy for using Lemma~\ref{lem:enumsmall}.

\begin{observation}\label{obs:unionminimal} Let $C$ be a $v$-minimal set of $G$ and $G[S]$ be a connected component of $G[C]$. Then $S$ is a $v$-minimal set. \end{observation}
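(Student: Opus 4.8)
The goal is to verify the definition of $v$-minimality for $S$: first, $v \notin S$ because $S \subseteq C$ and $v \notin C$; second, $d(S' \cup \{v\}) > d(S \cup \{v\})$ for every proper subset $S' \subsetneq S$. The plan is to reduce the second condition to the $v$-minimality of $C$. Fix an arbitrary $S' \subsetneq S$ and put $C' := (C \setminus S) \cup S'$. Since $S \setminus S'$ is nonempty, $C'$ is a proper subset of $C$ not containing $v$, so $v$-minimality of $C$ yields $d(C' \cup \{v\}) > d(C \cup \{v\})$. It therefore suffices to prove the identity
\[
d(C' \cup \{v\}) - d(C \cup \{v\}) = d(S' \cup \{v\}) - d(S \cup \{v\}),
\]
as this makes the right-hand side positive, and $S'$ was arbitrary.

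To prove the identity I would use the elementary fact that for disjoint vertex sets $X$ and $Y$, $d(X \cup Y) = d(X) + d(Y) - 2e(X,Y)$, where $e(X,Y)$ is the number of edges of $G$ with one endpoint in $X$ and one in $Y$ (each such edge lies in $\Delta(X)$ and in $\Delta(Y)$ but not in $\Delta(X \cup Y)$). Set $W := (C \setminus S) \cup \{v\}$; it is disjoint from $S$ and from $S'$. Because $S$ is a connected component of $G[C]$, there are no edges between $S$ and $C \setminus S$, and consequently $e(W, S) = e(v, S)$ and $e(W, S') = e(v, S')$. Applying the identity twice,
\[
d(C \cup \{v\}) = d(W) + d(S) - 2e(v,S), \qquad d(C' \cup \{v\}) = d(W) + d(S') - 2e(v,S'),
\]
so $d(C' \cup \{v\}) - d(C \cup \{v\}) = d(S') - d(S) + 2e(v, S \setminus S')$. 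Applying the identity directly to $S \cup \{v\}$ and $S' \cup \{v\}$ gives $d(S \cup \{v\}) = d(S) + d(v) - 2e(v,S)$ and $d(S' \cup \{v\}) = d(S') + d(v) - 2e(v,S')$, hence $d(S' \cup \{v\}) - d(S \cup \{v\}) = d(S') - d(S) + 2e(v, S \setminus S')$ as well. The two differences coincide, which proves the identity and thus the observation.

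I do not anticipate a real obstacle: this is a routine cut-counting argument. The one place that requires attention is the treatment of the edges between $v$ and $S$, which are internal to both $C \cup \{v\}$ and $S \cup \{v\}$ but surface when one expands the unions; phrasing everything through $e(\cdot,\cdot)$ keeps this bookkeeping transparent. One could equally well argue by partitioning $\Delta(C\cup\{v\})$, $\Delta(C'\cup\{v\})$, $\Delta(S\cup\{v\})$, $\Delta(S'\cup\{v\})$ explicitly into the edges incident to $S'$, to $S\setminus S'$, to $v$, and to $C\setminus S$, and then checking that the passage from $S$ to $S'$ changes the cut by the same amount in both settings.
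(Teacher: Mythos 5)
Your proof is correct and follows essentially the same route as the paper's, which gives a one-sentence sketch: replace the component $S$ inside $C$ by a proper subset $S'$ and observe that, because $S$ has no edges to $C\setminus S$, this cut exchange cannot increase $d(\cdot\cup\{v\})$. You present the contrapositive rather than the contradiction form and make the cut-counting identity $d(C'\cup\{v\})-d(C\cup\{v\})=d(S'\cup\{v\})-d(S\cup\{v\})$ explicit where the paper writes ``it is easy to see,'' but the underlying argument is the same.
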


In particular, if $S$ is not a $v$-minimal set, then it contains a
$v$-minimal set $S' \subset S$ and it is easy to see that $d(\{v\}
\cup (C \setminus S) \cup S') \leq d(\{v\} \cup C)$, contradicting the
minimality of $C$. Observation~\ref{obs:unionminimal} tells us that
any $v$-minimal set is the union of connected $v$-minimal sets. This
makes it possible to use Lemma~\ref{lem:enumsmall}.  We are now ready
to give an algorithm for \textsc{$(\size,p,q)$-Cluster}, the easiest
of the three clustering problems. Our algorithm is based on a
combination of color coding~\cite{AlonYZ95} with dynamic
programming.

%\begin{proposition}[\cite{NSS95}]\label{prop:hashFam1} For every $n$, $k$ there is a family of functions ${\cal F}$ of size $O(e^k \cdot k^{O(\log k)} \cdot \log n)$ such that every function $f \in {\cal F}$ is a function from $\{1, \ldots, n\}$ to $\{1, \ldots, k\}$ and for every subset $S$ of $\{1, \ldots, n\}$ there is a function $f \in {\cal F}$ that is bijective when restricted to $S$. Furthermore, given $n$ and $k$, ${\cal F}$ can be computed in time $O(e^k \cdot k^{O(\log k)} \cdot \log n)$.
%\end{proposition}

%\begin{lemma}\label{lem:algpsize} There is a $2^{O(p)}|V(G)|^{O(1)}$ time algorithm for the \textsc{$(\size,p,q)$-Cluster} problem. \end{lemma}
\begin{lemma}\label{lem:algpsize}\textsc{$(\size,p,q)$-Cluster} can be
  solved in time $2^{O(p)}n^{O(1)}$.
\end{lemma}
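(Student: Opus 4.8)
The plan is to combine the enumeration of connected $v$-minimal sets from Lemma~\ref{lem:enumsmall} with a color-coding step and a dynamic program over colors. First I would invoke Lemma~\ref{lem:uncrossing}, so that it suffices to solve \textsc{$(\size,p,q)$-Cluster}: given $G$, $v$, $p$, $q$, find a set $C$ with $v\notin C$, $|C\cup\{v\}|\le p$, and $d(C\cup\{v\})\le q$; by monotonicity of \size\ we may assume $C$ is $v$-minimal. By Observation~\ref{obs:unionminimal}, such a $C$ is a disjoint union of connected $v$-minimal sets $S_1,\dots,S_t$, each of size at most $p$, and by Lemma~\ref{lem:enumsmall} there are only $4^pn$ such connected pieces, all listable in time $O(4^pn)$. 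So the task reduces to selecting a subcollection of these connected pieces that are pairwise \emph{vertex-disjoint}, whose total size is at most $p-|\{v\}|=p-1$, and for which $d\bigl(\{v\}\cup\bigcup S_j\bigr)\le q$.

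The obstacle is the disjointness constraint: naive selection could reuse vertices. This is exactly what color coding is designed for. Since $|C|\le p$, I would use a family of hash functions (Proposition~\ref{prop:hashFam1}, or the splitter construction, applied with parameter $p$) coloring $V(G)$ with $p$ colors so that some function in the family is injective on the target set $C$; fixing such a coloring, we only look for a union of connected $v$-minimal pieces that is \emph{colorful}, i.e., uses every vertex color at most once. Now two pieces can be combined only if their color sets are disjoint, which makes the selection amenable to a subset dynamic program. Concretely, after enumerating all connected $v$-minimal sets $S$ with $|S|\le p$ and $G[S]$ connected, I would discard any $S$ that is not colorful under the current coloring, and record for each remaining $S$ its color set $\chi(S)\subseteq\{1,\dots,p\}$ together with $d(\{v\}\cup S)-d(v)$ — the net change it contributes to the cut when added to $\{v\}$ (note that since there are no edges between different $S_j$'s inside the cluster, by the same reasoning as in Lemma~\ref{lem:sizeSatellite} these cut contributions add up, up to the shared boundary at $v$, which needs a small bookkeeping argument).

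For the dynamic program, define $D[R]$ for $R\subseteq\{1,\dots,p\}$ to be the minimum of $d(\{v\}\cup\bigcup_{j} S_j)$ over all choices of pairwise color-disjoint colorful connected $v$-minimal pieces whose colors exactly union to $R$; the recurrence processes the enumerated pieces one at a time, and for piece $S$ with color set $\chi(S)$ updates $D[R\cup\chi(S)]$ from $D[R]$ whenever $R\cap\chi(S)=\emptyset$, being careful to account correctly for the edges of $\Delta(\{v\}\cup\bigcup S_j)$ that run between $v$ and the pieces versus those leaving the pieces elsewhere. After processing all pieces, we accept if some $D[R]\le q$ with $|R|\le p-1$. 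The table has $2^p$ entries, the number of pieces is $4^pn$, and the hash family has size $e^{p+o(p)}\log n$, so the total running time is $2^{O(p)}n^{O(1)}$. Finally, for correctness: if a genuine solution $C$ exists, pick a minimal one; it is covered by some coloring in the family on which it is injective, its connected components are connected $v$-minimal sets of size $\le p$ by Observation~\ref{obs:unionminimal}, each is colorful, and their colors are pairwise disjoint, so the DP finds it; conversely any colorful color-disjoint union returned is automatically a set of size $\le p$ with the required cut bound. The main thing to get right, and the part I expect to be fiddly rather than deep, is the exact accounting of $d(\{v\}\cup\bigcup S_j)$ in terms of the per-piece quantities, since the edges incident to $v$ are shared across pieces while the other boundary edges are not.
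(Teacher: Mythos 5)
Your proposal is correct and follows essentially the same route as the paper's proof: reduce to $(\size,p,q)$-\textsc{Cluster} via Lemma~\ref{lem:uncrossing}, enumerate connected $v$-minimal pieces via Lemma~\ref{lem:enumsmall} and Observation~\ref{obs:unionminimal}, apply color coding with Proposition~\ref{prop:hashFam1} to enforce disjointness, and run a subset DP over color sets. The "fiddly bookkeeping" you flag is handled in the paper exactly as you anticipate: one notes $d(\{v\}\cup S)-d(v)=\overline{d}(S,v)-d(S,v)$ and that the additive formula $d(v)+\sum_i(\overline{d}(S_i,v)-d(S_i,v))$ is always an upper bound on the true cut (with equality when the chosen pieces are exactly the components of a $v$-minimal $C$), so accepting when the DP-computed value is $\le q$ is sound in both directions.
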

\begin{proof}
We are given as input a graph $G$ together with a vertex $v$ and integers $p$ and $q$. The task is to find a vertex set $C$ of size at most $p-1$ such that $d(\{v\} \cup C) \leq q$. It is sufficient to search for a $v$-minimal set $C$ satisfying these properties. The algorithm of Lemma~\ref{lem:enumsmall} can be used to list all {\em connected} $v$-minimal sets $S_1, \ldots, S_\ell$ of size at most $p-1$; we have $\ell \leq 4^pn$. For a subset $Z$ of $\{1, \ldots, \ell\}$, define $C(Z) = \{v\} \cup \bigcup_{i \in Z} S_i$. 
By Observation~\ref{obs:unionminimal}, for any $v$-minimal set $C$ of size at most $p-1$ there exists a $Z \subseteq \{1, \ldots, \ell\}$ such that $C(Z)=C \cup \{v\}$. This set $Z$ satisfies the following properties. 
\begin{enumerate}\setlength\itemsep{-.7mm}
\item For every $i ,j \in Z$  with $i \neq j$, we have $S_i \cap S_j = \emptyset$.
\item $|C(Z)| = 1 + \sum_{i \in Z} |S_i| \leq p$.
\item $d(C(Z)) = d(v) + \sum_{i \in Z} (\overline{d}(S_i,v) - {d}(S_i,v)) \leq q$.
\end{enumerate}
However, for any subset $Z$ of $\{1, \ldots, \ell\}$ we have $|C(Z)| \leq 1 + \sum_{i \in Z} |S_i|$, and for any $Z$ that satisfies property (1), we have 
\[
d\big{(}C(Z) \big{)} \leq d(v) + \sum_{i \in Z} (\overline{d}(S_i,v) - {d}(S_i,v)).
\]
Hence it is sufficient to search for a set $Z$ such that for every $i ,j \in Z$  with $i \neq j$, we have $S_i \cap S_j = \emptyset$, $1 + \sum_{i \in Z} |S_i| \leq p$ and $d(v) + \sum_{i \in Z} (\overline{d}(S_i,v) - {d}(S_i,v)) \leq q$.

%can be decomposed into $C=S_1 \cup S_2 \ldots \cup S_t$ such that $S_i$ is a connected $v$-minimal set for every $i$, $S_i \cap S_j = \emptyset$ for every $i \neq j$ and no edge of $G$ has one endpoint in $S_i$ and the other in $S_j$ for every $i \neq j$

%Observation~\ref{obs:unionminimal} implies that for any $v$-minimal set $C$ there exists a $Z \subseteq \{1, \ldots, \ell\}$ such that 
%\begin{enumerate}\setlength\itemsep{-.7mm}
%\item For every $i ,j \in Z$  with $i \neq j$, we have $S_i \cap S_j = \emptyset$.
%\item $1 + \sum_{i \in Z} |S_i| \leq p$.
%\item  d(v) + \sum_{i \in Z} (\overline{d}(S_i,v) - {d}(S_i,v)) \leq q$.
%\end{enumerate}

%Let $Z \subseteq \{1,\ldots, \ell\}$ be such that for every $i ,j \in Z$  with $i \neq j$, we have $S_i \cap S_j = \emptyset$. We have that $|C(Z)| = 1 + \sum_{i \in Z} |S_i|$ and
%\[
%d\big{(}C(Z) \big{)} \leq d(v) + \sum_{i \in Z} (\overline{d}(S_i,v) - {d}(S_i,v)).
%\]

%If there is no edge with one endpoint in $S_i$ and the other in $S_j$ for some $i \neq j$, $i,j \in Z$, then the inequality above holds with equality. Hence, if there exists a $v$-minimal $C$ of size at most $p-1$
%Our algorithm will select a $Z$ such that $C = \bigcup_{i \in Z} S_i$. 

To ensure that the algorithm picks $Z$ such that the sets $S_i$ and $S_j$ will be disjoint for every pair of distinct integers $i,j \in Z$ we will use color coding. In particular, we construct a family ${\cal F}$ of functions from $V(G) \setminus \{v\}$ to $\{1, \ldots, p-1\}$ as described in Proposition~\ref{prop:hashFam1}. The family ${\cal F}$ has size $O(e^p \cdot p^{O(\log p)} \cdot \log n)$.
For each function $f \in {\cal F}$ we will think of the function as a
coloring of $V(G) \setminus \{v\}$ with colors from $\{1, \ldots,
p-1\}$. We will only look for a $v$-minimal set $C$ whose vertices
have different colors. This will not only ensure that any two sets
$S_i$ and $S_j$ that we pick will be disjoint, it also automatically
ensures that $1 + \sum_{i \in Z} |S_i| \leq p$. If
the input instance was a \textsf{\small yes}-instance then a solution
set $C$ exists, and the construction of ${\cal F}$ ensures that there
will be a function $f \in {\cal F}$ which colors all vertices in $C$
with different colors.

  When considering a particular coloring $f$, we discard all sets from
  $S_1, \ldots, S_\ell$ which have two vertices of the same color, so from
  this point, without loss of generality, all sets in $S_1, \ldots,
  S_\ell$ have at most one vertex of each color. For a vertex set $S$,
  define $\textsf{\small colors}(S)$ to be the set of colors occuring
  on vertices on $G$. For every $0 \leq i \leq \ell$, $0 \leq j \leq
  m$ and $R \subseteq \{1, \ldots, p-1\}$, we define $T[i,j,S]$ to
  be \textsf{\small true} if there is a set $Z \subseteq \{1, \ldots,
  i\}$ such that all vertices of $C(Z)$ have distinct colors, $d(v) +
  \sum_{i \in Z} (\overline{d}(S_i,v) - {d}(S_i,v)) = j$ and
  $\textsf{\small colors}(C(Z)) \subseteq R$. Clearly, there is a
  $v$-minimal set $C$ such that $d(\{v\} \cup C) \leq q$ and all
  vertices of $C$ have different color if and only if
  $T[\ell,j,\{1,\ldots,p-1\}]$ is true for some $j \leq q$. We can fill
  the table $T$ using the following recurrence.
\begin{equation*}
T[i,j,R] =
\begin{cases}
T[i-1,j,R] & \text{if } \textsf{\small colors}(S_i) \setminus R \neq \emptyset    \\
T[i-1,j,R] \vee T[i-1,&\\j+d(S_i,v) 
- \overline{d}(S_i,v), R \setminus \textsf{\small colors}(S_i)] & \text{otherwise } 
\end{cases}
\end{equation*}
%Here we initialize $T[1,d(v),\emptyset]$ and $T[1,d(\{v\} \cup S_1),\textsf{\small colors}(S_1)]$ to true. 
Here we initialize $T[0,d(v),\emptyset]$ to true. The table has size $4^{p}n^{O(1)} \cdot 2^{p}n^{O(1)} = 8^{p}n^{O(1)}$ and can be filled in time proportional to its size. Hence the total running time for the algorithm is $(8e)^{p+o(p)}n^{O(1)}$.
\end{proof}

For \textsc{$(\size,p,q)$-Cluster} the size of the set $C$ we look for is already bounded by $p$.  For \textsc{$(\nonedge,p,q)$-Cluster} and \textsc{$(\nondeg,p,q)$-Cluster}, we cannot make this assumption, however the next lemma gives us a way to handle all large $v$-minimal sets $C$ for \textsc{$(\nonedge,p,q)$-Cluster} and \textsc{$(\nondeg,p,q)$-Cluster}.

\begin{lemma}\label{lem:enumlarge} For any graph $G$, vertex $v$ and integer $p$, there are at most $O(2^pn)$ $v$-minimal sets $C$ such that $\nondeg(C \cup \{v\}) \leq p$ and $|C| \geq 3p$. These sets can be listed in time $2^pn^{O(1)}$.
\end{lemma}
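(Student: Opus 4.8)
The plan is to show that such a $C$ is nearly determined by a single vertex together with a subset of an $O(p)$‑sized set. Since $C$ is $v$‑minimal, Observation~\ref{obs:cutsize} gives $\overline{d}(C,v)<|C|$, so, as already noted right after that statement, $C$ contains a vertex $u$ with $N[u]\subseteq C\cup\{v\}$; fix such an \emph{anchor} $u$ (the case $C=\emptyset$, possible only for $p=0$, is trivial). As $\nondeg(C\cup\{v\})\le p$, the vertex $u$ has at most $p$ non‑neighbours in $C\cup\{v\}$, and since $N[u]\subseteq C\cup\{v\}$ these are exactly the vertices of $X:=(C\cup\{v\})\setminus N[u]$; thus $|X|\le p$ and $C\cup\{v\}$ is the disjoint union of $N[u]$ and $X$. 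Every $x\in X$ also has at most $p$ non‑neighbours in $C\cup\{v\}$, hence at most $p$ in $N[u]$, so $X\subseteq\mathrm{Cand}(u)$, where
\[
\mathrm{Cand}(u):=\bigl\{\,z\in V(G)\setminus N[u]\ :\ z\text{ has at most }p\text{ non-neighbours in }N[u]\,\bigr\}.
\]

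I would then bound $|\mathrm{Cand}(u)|$ using $v$‑minimality applied to the empty subset of $C$, which gives $d(C\cup\{v\})<d(\{v\})$. The edges of $\Delta(C\cup\{v\})$ incident to $v$ are exactly those joining $v$ to its neighbours outside $C\cup\{v\}$, and there are $d(\{v\})-|N(v)\cap C|$ of them; hence the number of edges of $\Delta(C\cup\{v\})$ \emph{not} incident to $v$ equals $d(C\cup\{v\})-d(\{v\})+|N(v)\cap C|$, which is less than $|N(v)\cap C|\le|C|$. Now fix any $z\in\mathrm{Cand}(u)\setminus X$; then $z\notin C\cup\{v\}$, and $z$ sends at least $|N[u]|-p-1$ edges to $N[u]\setminus\{v\}$ (it has at least $|N[u]|-p$ neighbours in $N[u]$, at most one of them being $v$), each of which lies in $\Delta(C\cup\{v\})$ and is not incident to $v$; moreover such an edge has one endpoint outside $N[u]$ and one inside, so it determines $z$, whence these edge sets are pairwise disjoint over distinct $z$. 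Since $|N[u]|=|C\cup\{v\}|-|X|\ge|C|+1-p$ and $|C|\ge 3p$, we have $|N[u]|-p-1\ge|C|-2p\ge|C|/3$; therefore $|\mathrm{Cand}(u)\setminus X|\cdot(|C|/3)<|C|$, so $|\mathrm{Cand}(u)\setminus X|\le 2$, and consequently $|\mathrm{Cand}(u)|\le|X|+2\le p+2$.

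The listing algorithm follows. For every $u\in V(G)\setminus\{v\}$, compute $\mathrm{Cand}(u)$; if $|\mathrm{Cand}(u)|>p+2$, skip $u$; otherwise, for each of the at most $2^{p+2}$ subsets $X\subseteq\mathrm{Cand}(u)$, form $C:=(N[u]\cup X)\setminus\{v\}$ and test, in polynomial time, whether $C$ is $v$‑minimal, $\nondeg(C\cup\{v\})\le p$, and $|C|\ge 3p$; output $C$ if all three hold (and, when $p=0$, test $C=\emptyset$ as well). By the first two paragraphs, every $v$‑minimal $C$ with $\nondeg(C\cup\{v\})\le p$ and $|C|\ge 3p$ is produced on the branch where $u$ is its anchor and $X=(C\cup\{v\})\setminus N[u]$, so the algorithm lists exactly these sets. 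There are at most $n\cdot 2^{p+2}=O(2^pn)$ of them, and the running time is $2^pn^{O(1)}$.

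The only non‑routine step is the bound $|\mathrm{Cand}(u)|\le p+2$: a priori $\mathrm{Cand}(u)$ can be as large as $\Omega(n)$, and the point is that $v$‑minimality, through the innocuous consequence $d(C\cup\{v\})<d(\{v\})$, forbids many vertices outside $C\cup\{v\}$ that are almost completely joined to the large dense core $N[u]$. The existence of the anchor, the bound $|X|\le p$, and the verification of each output set are all immediate.
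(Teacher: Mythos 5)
Your proof is correct and follows essentially the same route as the paper's: pick the anchor vertex $u$ (guaranteed by Observation~\ref{obs:cutsize}) with $N[u]\subseteq C\cup\{v\}$, observe that $|C\setminus N[u]|\le p$, and then argue that the candidate pool for $(C\cup\{v\})\setminus N[u]$ has size $O(p)$ because having three or more vertices outside $C$ that are nearly completely joined to the large set $N[u]\setminus\{v\}\subseteq C$ would force $\overline{d}(C,v)\ge 3(|C|-2p)\ge|C|$, contradicting $v$-minimality. The only cosmetic differences are that you bound $|\mathrm{Cand}(u)\setminus X|\le 2$ directly rather than arguing $|S|\le p+2$ by contradiction, and you re-derive $\overline{d}(C,v)<|C|$ from $d(C\cup\{v\})<d(\{v\})$ instead of citing Observation~\ref{obs:cutsize} verbatim; the counting argument and the final enumeration over $u$ and $X\subseteq\mathrm{Cand}(u)$ are the same.
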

\begin{proof}
  By Observation~\ref{obs:cutsize}, any $v$-minimal set $C$ contains a
  vertex $u$ such that $N[u] \setminus \{v\} \subseteq C$. Thus, we go
  over every possibility for $u$, and we will enumerate all
  $v$-minimal sets $C$ of size at least $3p$ such that that $\nondeg(C
  \cup \{v\}) \leq p$ and $N[u] \setminus \{v\} \subseteq C$. Notice
  that $|C \setminus N[u]| \leq p$ since every vertex $w \in C
  \setminus N[u]$ is a non-neighbor of $u$. Thus, if $|C| \geq 3p$
  then $|N[u] \setminus \{v\}| \geq 2p$.  Hence, every vertex in $C
  \setminus N[u]$ has at least $|N[u] \setminus \{v\}|-p \geq p$ edges
  to $N[u] \setminus \{v\}$. Let $S$ be the set of vertices in $V(G)
  \setminus (N[u] \cup \{v\})$ which have at most $p$ non-neighbors
  in $N[u] \setminus \{v\}$.  If $|S| \geq p+3$, then no $v$-minimal set
  $C$ satisfying the constraints and the requirement $N[u] \setminus
  \{v\} \subseteq C$ can exist. To see this, suppose for contradiction
  that such a set $C$ exists, then $|S \setminus C| \geq 3$. Since each
  vertex in $S \setminus C$ has at most $p$ non-neighbors in $N[u]
  \setminus \{v\}$ it follows that
$$\overline{d}(C,v) \geq 3(|N[u] \setminus \{v\}|-p) \geq 3(|C|-2p) \geq |C|$$
contradicting Observation~\ref{obs:cutsize}. The second inequality
holds since $|N[u] \setminus \{v\}| \geq |C|-p$ and the third since
$|C| \geq 3p$. Finally, since $|S| \leq p+3$ and $C \setminus (N[u]
\setminus \{v\}) \subseteq S$ there are at most $2^{p+3}$ possible
sets $C$ for every choice of $u$. Thus there are at most
$O(2^pn)$ such sets, and they can be enumerated in
time $2^pn^{O(1)}$.
\end{proof}

Since every set $C$ such that $\nonedge(C) \leq p$ satisfies $\nondeg(C) \leq p$, we can use Lemma~\ref{lem:enumlarge} to find out whether there is a set $C$ such that $v \notin C$, $d(C \cup \{v\}) \leq q$, $\nonedge(C \cup \{v\}) \leq p$ and $|C| \geq 3p$ in time $2^pn^{O(1)}$. Thus we can concentrate on sets $C$ of size at most $3p$ for \textsc{$(\nonedge,p,q)$-Cluster} and for \textsc{$(\nondeg,p,q)$-Cluster}. For finding appropriate sets $C$ of size at most $3p$ for the two problems, we can give algorithms that are almost identical to the algorithm described in Lemma~\ref{lem:algpsize}. Since the two algorithms are so similar, we describe both in one go. 

\begin{lemma}\label{lem:algpnonedgedeg} There is a $2^{O(p)}n^{O(1)}$ time algorithm for the $(\nonedge,p,q)$-\textsc{Cluster} and the $(\nondeg,p,q)$-\textsc{Cluster} problems.
\end{lemma}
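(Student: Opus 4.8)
The plan is to adapt the color-coding plus dynamic-programming scheme of Lemma~\ref{lem:algpsize} to the two measures $\nonedge$ and $\nondeg$, handling the large sets separately via Lemma~\ref{lem:enumlarge}. First I would dispose of $v$-minimal solution sets $C$ with $|C|\ge 3p$: since $\nonedge(C\cup\{v\})\le p$ implies $\nondeg(C\cup\{v\})\le p$, Lemma~\ref{lem:enumlarge} gives us $O(2^pn)$ candidate sets $C$ with $|C|\ge 3p$, which we enumerate in time $2^pn^{O(1)}$ and test directly (compute $d(C\cup\{v\})$, and $\nonedge$ or $\nondeg$, in polynomial time, checking $\le q$ and $\le p$). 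So it remains to look for a $v$-minimal solution set $C$ with $|C|\le 3p$.

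For the small case, I would first use Lemma~\ref{lem:enumsmall} with parameter $3p$ to list all connected $v$-minimal sets $S_1,\dots,S_\ell$ with $|S_i|\le 3p$; there are $\ell\le 4^{3p}n$ of them. By Observation~\ref{obs:unionminimal}, any $v$-minimal $C$ of size $\le 3p$ is a disjoint union of some of the $S_i$'s together with $\{v\}$, so we search over subsets $Z\subseteq\{1,\dots,\ell\}$ with the $S_i$ ($i\in Z$) pairwise disjoint. As in Lemma~\ref{lem:algpsize}, disjointness is enforced by color coding: construct via Proposition~\ref{prop:hashFam1} a family $\F$ of colorings $V(G)\setminus\{v\}\to\{1,\dots,3p\}$ of size $O(e^{3p}\cdot p^{O(\log p)}\cdot\log n)$, so that a solution set $C$ of size $\le 3p$ becomes rainbow under some $f\in\F$; for a fixed $f$, discard sets $S_i$ with a repeated color, and then any chosen collection is automatically pairwise disjoint and has $|C(Z)|\le 3p$ controlled by the colors used.

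The one genuinely new ingredient is that the state of the dynamic program must now track enough information to evaluate $\nonedge$ or $\nondeg$ incrementally, not just $d(C)$. For $\nonedge$: when we add a disjoint connected piece $S_i$ to the current set $C'$, the number of new nonedges is $\nonedge(S_i)$ plus $|C'|\cdot|S_i|$ minus the number of edges between $C'$ and $S_i$; the latter quantity is what makes it non-local. I would handle this exactly as in Lemma~\ref{lem:nonedgeSatellite}: since $C'\cup\{v\}$ is $v$-minimal of size $\le 3p$, the number of edges leaving $C'\cup\{v\}$ is $O(p)$ (by Observation~\ref{obs:cutsize}), so the cross-edge count between $C'$ and $S_i$ is bounded by $O(p)$ and can be incorporated by also indexing the DP table by the current cut size $k=d(C(Z))$ (which is $\le q$ but in fact $O(p)$ for $v$-minimal sets of size $\le 3p$, by Observation~\ref{obs:cutsize}) — more precisely, we can compute the number of edges between $S_i$ and a set $C'$ described only by its color set $R'$ by noting that $C'$ is rainbow and that edges from $S_i$ go to specific colors. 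Actually the cleanest route: add the color set $R=\textsf{\small colors}(C(Z))$ and the quantity $d(C(Z))$ as DP coordinates (table $T[i,R,k,\ell]$ with $\ell=\nonedge(C(Z))\le p$, $k=d(C(Z))$ which is $O(p)$, $|R|\le 3p$), and precompute for each $S_i$ and each color subset $R'$ the number of edges from $S_i$ into the (unique, if it exists) partial set using colors $R'$; but since distinct pieces are color-disjoint, the edges between $S_i$ and $C'$ land on colors of $C'$, and we can just tabulate, for each $S_i$, the multiset of colors on the $C'$-endpoints of its boundary edges — this is determined by $f$ alone, so it is a static precomputation. Then the recurrence mirrors Equation~\eqref{eqn:nonedge2} and Equation~\eqref{eqn:nondeg}. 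For $\nondeg$: the non-degree of a vertex $u\in C\cup\{v\}$ is $|C|-1-\deg_C(u)$ (adjusted by boundary edges incident to $u$), which again for $v$-minimal sets of size $\le 3p$ can be maintained per-vertex since $|C|\le 3p$; I would track, per color class present, the current number of non-neighbors, or equivalently follow the edge-coloring idea of Lemma~\ref{lem:nondegSatellite}, noting that the only vertices whose non-degree is not automatically controlled are those incident to boundary edges, of which there are $O(p)$. The total table size is $2^{O(p)}n^{O(1)}$ and each entry is computed in $n^{O(1)}$ time, which times the $2^{O(p)}n^{O(1)}$ iterations over $\F$ gives the claimed bound.

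The main obstacle is the bookkeeping for the cross-edge and non-degree contributions: unlike in the Satellite Problem, where the pieces $V_i$ had no edges among them \emph{and} no edges to each other forced structure, here the connected $v$-minimal pieces $S_i$ can have edges to the rest of $C$, so the increment in $\nonedge$ or $\nondeg$ when absorbing $S_i$ genuinely depends on which other pieces are already present. The resolution is that Observation~\ref{obs:cutsize} caps the number of boundary edges of any $v$-minimal set of bounded size at $O(p)$, so all the "global" interaction is mediated by an $O(p)$-sized interface, which the coloring with $O(p)$ colors can fully resolve; making this precise and checking that the DP transition is computable in polynomial time is the crux of the argument. The details for $\nondeg$ are the fiddliest, and I would write them out following the matching-based good-coloring machinery of Lemma~\ref{lem:nondegSatellite} adapted to this setting.
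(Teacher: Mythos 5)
You have the right scaffolding: handle $|C|\ge 3p$ via Lemma~\ref{lem:enumlarge}, enumerate connected $v$-minimal pieces via Lemma~\ref{lem:enumsmall}, decompose via Observation~\ref{obs:unionminimal}, and use color coding with $3p$ colors plus dynamic programming over color sets. However, the paragraph you label ``the main obstacle'' is actually based on a misreading, and your proposed workaround has a real gap.

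You worry that ``the connected $v$-minimal pieces $S_i$ can have edges to the rest of $C$,'' so the increment in $\nonedge$ or $\nondeg$ depends on which other pieces are present. But Observation~\ref{obs:unionminimal} gives more than disjointness: the pieces are the \emph{connected components} of $G[C]$, so for the correct decomposition of a true $v$-minimal solution there are \emph{no} edges between distinct $S_i$ and $S_j$. The paper exploits this directly. It writes down one-sided inequalities for $d(C(Z))$, $\nonedge(C(Z))$ and $\nondeg(C(Z))$ --- e.g.\ $\nonedge(C(Z))\le |C(Z)|+\sum_{i\in Z}\bigl(\nonedge(S_i)-d(S_i,v)\bigr)+\sum_{i\in Z}\sum_{i<j\in Z}|S_i||S_j|$ --- which hold with equality precisely when there are no cross-edges, and only overcount otherwise. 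The DP then tracks these upper-bound quantities in closed form (indexed by $i$, a bound on $d$, a bound on $\nonedge$ or two bounds for $\nondeg$, and the color set $R$). Soundness is immediate because overcounting can only hurt; completeness holds because the component decomposition of a true $v$-minimal solution makes the inequalities tight. No cross-edge bookkeeping is needed at all, because $|C(Z)|=|R|$ is recoverable from the color set, and the $\nonedge$ update $\nonedge(S_i)+|S_i|-d(S_i,v)+|S_i|(|R|-|S_i|)$ depends only on $S_i$ and $|R|$.

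Your proposed fix --- ``precompute for each $S_i$ and each color subset $R'$ the number of edges from $S_i$ into the (unique, if it exists) partial set using colors $R'$'' --- does not go through: the color set $R'$ does \emph{not} uniquely determine the partial vertex set, since several vertices of the graph may share a color, and the DP state deliberately forgets which representative was chosen. So the number of $S_i$--$C'$ cross-edges genuinely is not a function of $(i,R')$. Similarly, ``tracking per color class the current number of non-neighbors'' for $\nondeg$ would require $p^{O(p)}$ states, not $2^{O(p)}$. The correct resolution is not to resolve cross-edges at all, but to observe they vanish for the canonical decomposition and to track only the overcounted quantities; once you internalize that, the $\nondeg$ case reduces to tracking $\max_{i\in Z}\bigl(\nondeg_v(S_i\cup\{v\})+\sum_{j\ne i}|S_j|\bigr)$ (for non-$v$ vertices) and $\sum_{i\in Z}\bigl(|S_i|-d(S_i,v)\bigr)$ (for $v$ itself) as two additional $O(p)$-bounded DP coordinates.
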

\begin{proof}
We are given as input a graph $G$ together with a vertex $v$ and integers $p$ and $q$. The task is to find a vertex set $C$ such that $d(\{v\} \cup C) \leq q$ and $\nonedge(\{v\} \cup C) \leq p$, or $\nondeg(\{v\} \cup C) \leq p$. It is sufficient to search for a $v$-minimal set $C$ satisfying these properties. Using Lemma~\ref{lem:enumlarge} we can check whether such a set of size at least $3p$ exists in time $2^pn^{O(1)}$. From now on, we only need to consider sets $C$ of size at most $3p$.

By Observation~\ref{obs:unionminimal}, $C$ can be decomposed into $C=S_1 \cup S_2 \ldots \cup S_t$ such that $S_i$ is a connected $v$-minimal set for every $i$, $S_i \cap S_j = \emptyset$ for every $i \neq j$, and no edge of $G$ has one endpoint in $S_i$ and the other in $S_j$ for every $i \neq j$. Using Lemma~\ref{lem:enumsmall} we list all connected $v$-minimal sets $S_1, \ldots, S_\ell$ of size at most $3p$ where $\ell \leq 4^{3p}n$. For a set $S$ and vertex $v$, define $\nondeg_v(S)$ to be the maximum number of non-edges to vertices in $S$ over all vertices in $S \setminus v$. For a subset $Z$ of $\{1, \ldots, \ell\}$ define $C(Z) = \{v\} \cup \bigcup_{i \in Z} S_i$. Now, let $Z \subseteq \{1, \ldots, \ell\}$ such that for every $i,j \in Z$ with $i \neq j$, we have $S_i \cap S_j = \emptyset$. We have that $|C(Z)| = 1 + \sum_{i \in Z} |S_i|$ and that
\begin{align}
\notag d(C(Z)) \leq d(v) + \sum_{i \in Z}(\overline{d}(S_i,v) - {d}(S_i,v)) \\
\label{eqn:nonedgeineq}\nonedge\big{(}C(Z) \big{)} \leq |C(Z)| + \sum_{i \in Z} \big{(}\nonedge(S_i) - d(S_i,v)\big{)} + \sum_{i \in Z} \sum_{i < j \in Z} |S_i||S_j| \\
\notag\nondeg\big{(}C(Z) \big{)} \leq \max\Big{\{} \sum_{i\in Z}\Big{(} |S_i|-d(S_i,v)\Big{)}, \max_{i \in Z}\Big{(} \nondeg_v(S_i \cup \{v\}) + \sum_{j \in Z \setminus \{i\}} |S_j|\Big{)}\Big{\}}
\end{align}

If there is no edge with one endpoint in $S_i$ and the other in $S_j$ for any $i \neq j$, $i \in Z$, $j \in Z$, then the inequalities hold with equality. Our algorithm will select a $Z$ such that $C = \bigcup_{i \in Z} S_i$. To ensure that the algorithm picks $Z$ such that the sets $S_i$ and $S_j$ will be disjoint for every pair of distinct integers $i,j \in Z$ we will use color coding. In particular, we construct a family ${\cal F}$ of functions from $V(G) \setminus \{v\}$ to $\{1, \ldots, 3p\}$ as described in Proposition~\ref{prop:hashFam1}. The family ${\cal F}$ has size $O(e^{3p} \cdot p^{O(\log p)} \cdot \log n)$.

For each function $f \in {\cal F}$ we will think of the function as a coloring of $V(G) \setminus \{v\}$ with colors from $\{1, \ldots, 3p\}$. We will only look for a $v$-minimal set $C$ whose vertices have different colors. This will ensure that any two sets $S_i$ and $S_j$ that we pick will be disjoint, and controls the total size of the set picked. If the input instance was a \textsf{\small yes}-instance, then a solution set $C$ exists, and the construction of ${\cal F}$ ensures that there will be a function $f \in {\cal F}$ which colors all vertices in $C$ with different colors. When considering a particular coloring $f$, we discard all sets from $S_1, \ldots S_\ell$ which have two vertices of the same color, so from this point, without loss of generality, all sets in $S_1, \ldots S_\ell$ have at most one vertex of each color. For a vertex set $S$, define $\textsf{\small colors}(S)$ to be the set of colors occuring on vertices on $G$. 

To solve \textsc{$(\nonedge,p,q)$-Cluster} we define a table $T_1$. For every $0 \leq i \leq \ell$, $0 \leq j \leq m$, $0 \leq k \leq p$ and $R \subseteq \{1, \ldots, 3p\}$ we define $T_1[i,j,k,S]$ to be \textsf{\small true} if there is a set $Z \subseteq \{1, \ldots, i\}$ such that all vertices of $C(Z)$ have distinct colors and
\begin{gather*}
d(v) + \sum_{i \in Z} (d(S_i,v) - \overline{d}(S_i,v)) \leq j,\\
 |R| + \sum_{i \in Z}\big{(} \nonedge(S_i) - d(S_i,v)\big{)} + \sum_{i \in Z} \sum_{i < j \in Z} |S_i||S_j| \leq k,\\
\textsf{\small colors}(C(Z)) = R.
\end{gather*}
For
\textsc{$(\nondeg,p,q)$-Cluster} we define a table $T_2$ in a similar
manner. That is, for every $0 \leq i \leq \ell$, $0 \leq j \leq m$, $0 \leq k \leq p$, $0 \leq x \leq p$, and $R \subseteq \{1, \ldots, 3p\}$ we define $T_2[i,j,k,x,S]$ to be \textsf{\small true} if there is a set $Z \subseteq \{1, \ldots, i\}$ such that all vertices of $C(Z)$ have distinct colors and
\begin{gather*}
d(v) + \sum_{i \in Z} (d(S_i,v) - \overline{d}(S_i,v)) \leq j,\\
\max_{i \in Z}\Big{(} \nondeg_v(S_i \cup \{v\}) + \sum_{j \in Z \setminus \{i\}} |S_j|\Big{)} \leq k,\\
\sum_{i \in Z}\Big{(} |S_i|-d(S_i,v)\Big{)} \leq x,\\
\textsf{\small colors}(C(Z)) = R.
\end{gather*}

There is a $v$-minimal set $C$ such that $d(\{v\} \cup C) \leq q$, $\nonedge(\{v\} \cup C) \leq p$ and all vertices of $C$ have different color if and only if $T_1[\ell,j,k,R]$ is true for some $j \leq q$, $k \leq p$ and $R \subseteq \{1, \ldots, 3p\}$. This follows directly from the definition of $T_1$ and the fact that Equation~\ref{eqn:nonedgeineq} holds with equality when there is no edge with one endpoint in $S_i$ and the other in $S_j$ for some $i \neq j$, $i \in Z$, $j \in Z$. By an identical argument, there is a $v$-minimal set $C$ such that $d(\{v\} \cup C) \leq q$, $\nondeg(\{v\} \cup C) \leq p$, and all vertices of $C$ have different color if and only if $T_2[\ell,j,k,x,R]$ is true for some $j \leq q$, $k \leq p$, $x \leq p$ and $R \subseteq \{1, \ldots, 3p\}$. We can fill the tables $T_1$ and $T_2$ using the following recurrences.
\begin{equation*}%\label{eqn:recnonedge}
T_1[i,j,k,R] =
\begin{cases}
T_1[i-1,j,k,R] & \text{if } \textsf{\small colors}(S_i) \nsubseteq R\\
T_1[i-1,j,k,R] \vee T_1[i-1,j',k_1',R']  & \text{otherwise } 
\end{cases} 
\end{equation*}
\begin{equation*}%\label{eqn:recnondeg}
T_2[i,j,k,\ell,R] =
\begin{cases}
T_2[i-1,j,k,x,R] & \text{if } \textsf{\small colors}(S_i) \nsubseteq R \\
T_2[i-1,j,k,x,R] & \text{if } |R|-|S_i|+\nondeg_v(S_i \cup \{v\}) > k \\
T_2[i-1,j,k,x,R] \vee T_2[i-1,j',k_2',x',R']  & \text{otherwise } 
\end{cases}
\end{equation*}
where $j' = j-d(S_i,v) + \overline{d}(S_i,v)$, $k_1' = k - \nonedge(S_i) - |S_i| + d(S_i,v) - |C_i|(|R|-|C_i|)$, $k_2' = k-|S_i|$, and $x'=x-|S_i|+d(S_i,v)$ and $R' = R \setminus \textsf{\small colors}(S_i)$. 

The recurrences above are correct for the following reason. Let $Z$ be a subset of $Z \subseteq \{1, \ldots, i-1\}$ such that all vertices of $C(Z)$ have distinct colors, $d(v) + \sum_{i \in Z} d(S_i,v) - \overline{d}(S_i,v) = j$, $|R| + \sum_{i \in Z} \nonedge(S_i,v) - d(S_i,v) + \sum_{i \in Z} \sum_{i < j \in Z} |S_i||S_j| = k,$ and $\textsf{\small colors}(C(Z)) = R$. Since every vertex in $C(Z)$ has a different color, it follows that $|C(Z)|=|R|$. Inserting $i$ into $Z$ is only possible if $\textsf{\small colors}(C(Z)) = R$. In that case, when we insert $i$ into $Z$, $d(v) + \sum_{p \in Z} (d(S_p,v) - \overline{d}(S_p,v))$ increases by $d(S_i,v) - \overline{d}(S_i,v)$. The sum $|R| + \sum_{i \in Z} \nonedge(S_i,v) - d(S_i,v) + \sum_{i \in Z} \sum_{i < j \in Z} |S_i||S_j|$ increases by $\nonedge(S_i) + |S_i| - d(S_i,v) + |C_i|(|R|-|C_i|)$. The expression $\max_{p \in Z}\big{(} \nondeg_v(S_p \cup \{v\}) + \sum_{j \in Z \setminus \{p\}} |S_j|\big{)}$ increases by $|S_i|$ or to $\nondeg_v(S_i \cup \{v\})+\sum_{j \in Z} |S_j|$, whichever yields the largest result. The expression $\sum_{p \in Z}\big{(} |S_p|-d(S_p,v)\big{)}$ increases by $|S_i|-d(S_i,v)$. Finally, the set of colors used, will now be $R \cup \textsf{\small colors}(S_i)$.

We initialize the tables $T_1$ and $T_2$ as follows. $T_1[0,j,k,\emptyset]$ is set to \textsf{\small true} for every $j \geq d(v)$, $k \geq 0$. 
%Also we set $T_1[1,j,k,\textsf{\small colors}(S_1)]$ to \textsf{\small true} for every $j \geq d(\{v\} \cup S_1)$ and $k \geq \nonedge(\{v\} \cup S_1)$. 
$T_2[0,j,k,x,\emptyset]$ is set to \textsf{\small true} for every $j \geq d(v)$, $k \geq 0$, $x \geq 0$. 
%Also we set $T_2[1,j,k,\ell,\textsf{\small colors}]$ to \textsf{\small true} for every $j \geq d(\{v\} \cup S_1)$ and $k \geq \nondeg_v(\{v\} \cup S_1)$, $\ell \geq |S_1|-d(S_1,v)$. 
Then we fill the tables using the recurrences above. The tables have size $4^{3p}n^{O(1)} \cdot 2^{3p}n^{O(1)} = 8^{3p}n^{O(1)}$ and can be filled in time proportional to their size. Hence the total running time for the algorithms is $(8e)^{3p+o(p)}n^{O(1)}$.
\end{proof}

Lemma~\ref{lem:uncrossing} together with Lemma~\ref{lem:algpsize} and Lemma~\ref{lem:algpnonedgedeg} yield Theorem~\ref{thm:algp}.

\subsection{Hardness results}\label{sec:hardness-results}
The algorithmic results in Section~\ref{sec:parq} still hold when parallel edges are allowed. Interestingly, the results in Section~\ref{sec:parpalg} do not: in particular, Observation~\ref{obs:cutsize} breaks down if there are parallel edges.  The following hardness result shows that allowing parallel edges indeed make the problems more difficult:
%We show that if parallel edges are allowed, then \partitionsc{\nonedge}{p}{q} and \partitionsc{\nondeg}{p}{q} are NP-complete for $p=0$, while \partitionsc{\size}{p}{q} is W[1]-hard parameterized by $p$.

\begin{theorem}\label{thm:hardness} \partitionsc{\nonedge}{p}{q} and
  \partitionsc{\nondeg}{p}{q} are \textup{NP}-complete for $p=0$ on graphs with
  parallel edges. The problem \partitionsc{\size}{p}{q} is \textup{W[1]}-hard parameterized
  by $p$ on graphs with parallel edges (but in \textup{P} for every fixed $p$). \end{theorem}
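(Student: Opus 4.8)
The plan is to handle the three assertions separately; the positive one follows at once from the uncrossing machinery, and each hardness claim will be a weighted gadget reduction in which the parallel edges do the essential work. For \partitionsc{\size}{p}{q} in \textup{P} for every fixed $p$: since $\size$ is monotone, Lemma~\ref{lem:uncrossing} reduces the problem to deciding, for each vertex $v$, whether some $(\size,p,q)$-cluster contains $v$; as any such cluster has at most $p$ vertices, I would enumerate all $O(n^p)$ subsets of $V(G)$ of size at most $p$ and check $d(C)\le q$ for each (a check oblivious to parallel edges), and if a witness $C_v$ is found for every $v$ apply the polynomial-time procedure of Lemma~\ref{lem:uncrossing} to assemble them into a \partition{\size}{p}{q}. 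This runs in time $n^{O(p)}$, also on multigraphs.

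For the \textup{NP}-completeness of \partitionsc{\nonedge}{0}{q} and \partitionsc{\nondeg}{0}{q} on multigraphs, membership in NP is clear, and since for $p=0$ the conditions $\nonedge(C)=0$, $\nondeg(C)=0$ and ``$C$ induces a clique of the underlying simple graph'' all coincide, both problems ask to partition $V(G)$ into cliques whose leaving edges have total multiplicity at most $q$, so one construction serves both. I would reduce from a classical NP-complete packing problem such as 3-Dimensional Matching: each triple becomes a constant-size ``triple gadget'' that is a clique, ground elements become vertices, incidences become edges, and the graph is arranged so that the usable cliques are only subsets of a single gadget together with attached element vertices. Parallel edges then carry large multiplicities on exactly the edges whose endpoints must remain together, so that an honest clustering --- the gadgets of a perfect matching, with unused gadgets and element vertices falling into forced small clusters --- meets the bound $q$ with no slack, while any clustering that splits a gadget incorrectly or uses the trivial whole-vertex-set or all-singletons options has a cluster of boundary exceeding $q$. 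The proof then has the standard three parts: describe the instance and the value of $q$; turn a perfect matching into a valid clustering; and decode an arbitrary valid clustering back into a matching.

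For the \textup{W[1]}-hardness of \partitionsc{\size}{p}{q} parameterized by $p$ on multigraphs, I would give a parameterized reduction from a W[1]-hard problem parameterized by $k$, for instance $k$-Clique or Multicolored Clique, with $p=k+\binom{k}{2}=\Theta(k^2)$. The multigraph has one vertex per vertex and per edge of the host graph $H$ (plus auxiliary vertices), wired so that a full-size $(\size,p,q)$-cluster must consist of $k$ vertex-representatives together with the $\binom{k}{2}$ edge-representatives they span --- possible only if those $k$ vertices form a clique of $H$ --- while all other vertices are forced into singleton clusters because heavy parallel edges (or pendant gadgets) push their degrees above $q$. As the construction is polynomial and $p$ depends only on $k$, this is a parameterized reduction, and the P-membership for fixed $p$ above gives the parenthetical remark.

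The step I expect to be the main obstacle is, in all three constructions, ruling out ``cheating'' clusterings: for $p=0$ the whole vertex set and the all-singletons partition are themselves clusterings, so everything except the intended clusterings has to be blocked purely through the bound $q$, which is exactly why parallel edges are indispensable here (and, dually, why the problems are tractable on simple graphs, cf.\ Section~\ref{sec:parpalg}). The delicate part is calibrating the edge multiplicities and the value of $q$ so that the intended clustering attains $q$ exactly while every deviation strictly overshoots it; establishing this ``no slack, all deviations overshoot'' dichotomy is the technical heart of the argument.
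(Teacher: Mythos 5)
Your handling of the parenthetical ``in \textup{P} for every fixed $p$'' is correct and coincides with the paper's: since $\size$ is monotone, Lemma~\ref{lem:uncrossing} reduces to \textsc{$(\size,p,q)$-Cluster}, and a cluster of size at most $p$ can be found by brute force in $n^{O(p)}$ time even with parallel edges. That part is fine.

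The two hardness claims, however, are where the gap is. You have proposed two \emph{separate} constructions (a 3-Dimensional Matching gadget reduction for the $p=0$ NP-hardness, and a vertex-plus-edge-representative Multicolored Clique reduction for the W[1]-hardness of \partitionsc{\size}{p}{q}), and in both cases you stop at the level of a plan: you name the gadgets, note that parallel edges must ``push degrees above $q$,'' and then explicitly defer the central difficulty --- the calibration that makes the intended clustering feasible while every deviation strictly overshoots $q$. That calibration \emph{is} the proof; without a concrete multigraph, a concrete value of $q$, and a verified two-way correspondence between matchings/cliques and valid partitions, no hardness has been established. Identifying the obstacle is not the same as overcoming it, and the gadget sketches as written (e.g.\ ``unused gadgets and element vertices falling into forced small clusters'') leave open exactly the ``cheating'' partitions you flag: for $p=0$ the all-singletons partition is always a legal \partition{\nonedge}{0}{q} unless every single vertex already has boundary exceeding $q$, which is a strong global condition your description does not enforce.

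For contrast, the paper handles all three hardness statements with a single, very lean reduction from $k$-\textsc{Clique} restricted to $d$-regular graphs (NP-complete and W[1]-complete in $k$). It adds one universal vertex $v$ joined to each of the $n$ original vertices by $m+1$ parallel edges and sets $q=(n-k)(m+1)+k(d-k)$. Because the host graph is $d$-regular, the boundary of any candidate clique containing $v$ is a clean function of its size, and the ``no slack'' dichotomy you worry about falls out of a one-line inequality (Lemma~\ref{lem:hardness2}): $(n-k+1)(m+1)>(n-k)(m+1)+k(d-k)$. The same construction gives the W[1]-hardness for \partitionsc{\size}{p}{q} with $p=k+1$, rather than the $p=\Theta(k^2)$ your edge-representative scheme would entail. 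So even if you completed your reductions they would be considerably heavier than necessary; more importantly, as submitted they are not complete and do not yet prove the theorem.
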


To prove the Theorem~\ref{thm:hardness}, we reduce from the $k$-{\sc
  Clique} problem in $d$-regular graphs. Here we are given a graph $G$
in which all vertices have degree $d$, and an integer $k$. The task is
to find a clique $C$ in $G$ of size at least $k$. This problem is
NP-complete and W[1]-complete when parameterized by
$k$~(cf.~\cite{MathiesonS08}). Given a $d$-regular graph $G$ with $n$
vertices and $m=dn/2$ edges and an integer $k$, we construct a multigraph $G'$ by adding a vertex $v$ incident to all vertices of $G$ with
$m+1$ parallel edges to each vertex of $G$.

\begin{lemma} \label{lem:hardness2} There is a clique $C$ in $G$ of size $k$ if and only if there is a clique $C'$ in $G'$ such that $v \in C$ and $d(C) \leq (n-k)(m+1) + k(d-k)$. \end{lemma}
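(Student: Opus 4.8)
The plan is to reduce the statement to a purely quantitative fact about cliques of $G$ and the cut function of $G'$, and then settle that fact with one monotonicity observation.

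\emph{Step 1: cliques of $G'$ through $v$.} Note that $G'$ restricted to $V(G)$ is exactly $G$ (the only parallel edges added are the $v$-to-$V(G)$ bundles), and $v$ is adjacent to every vertex of $G$. Hence a set $C'$ is a clique of $G'$ with $v\in C'$ if and only if $C'=\{v\}\cup C$ for some clique $C$ of $G$ (possibly $C=\emptyset$). So the claim is equivalent to: $G$ has a clique of size $k$ iff there is a clique $C$ of $G$ with $d_{G'}(\{v\}\cup C)$ at most the stated bound.

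\emph{Step 2: the cut of $\{v\}\cup C$ depends only on $|C|$.} Fix a clique $C$ of $G$ and put $j=|C|$. The edges of $G'$ leaving $\{v\}\cup C$ fall into two groups. First, the $v$-bundles to $V(G)\setminus C$: there are $n-j$ such vertices, each contributing $m+1$ parallel edges, for a total of $(n-j)(m+1)$. Second, the edges of $G$ with exactly one endpoint in $C$: since $G$ is $d$-regular and $C$ is a clique, each vertex of $C$ has $d-(j-1)$ neighbours of $G$ outside $C$, giving $j(d-j+1)$ such edges, while the $v$-edges incident to $C$ are internal to $\{v\}\cup C$ and are not counted. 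Therefore $d_{G'}(\{v\}\cup C)=g(j)$, where $g(j):=(n-j)(m+1)+j(d-j+1)$; evaluating at $j=k$ yields the right-hand side of the inequality in the statement (the last term is $k(d-k+1)$).

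\emph{Step 3: $g$ is strictly decreasing, and conclude.} For any $0\le j<j'\le n$, the first term of $g$ drops by $(j'-j)(m+1)\ge m+1$ in passing from $j$ to $j'$, whereas the second term always lies in $[0,m]$ (it counts a subset of the $m$ edges of $G$) and so changes by at most $m$ in absolute value; hence $g(j')-g(j)\le -(m+1)+m=-1<0$. This is the one place where using exactly $m+1$ parallel edges matters: it makes the $v$-bundles dominate, so enlarging the clique strictly decreases the cut. The forward direction is then immediate: a $k$-clique $C$ gives $C'=\{v\}\cup C$ with $d(C')=g(k)$. Conversely, given a clique $C'$ of $G'$ with $v\in C'$ and $d(C')\le g(k)$, Step 1 writes $C'=\{v\}\cup C$ with $C$ a clique of $G$; then $g(|C|)=d(C')\le g(k)$, so by strict monotonicity $|C|\ge k$, and since every subset of a clique is a clique, $G$ has a clique of size $k$.

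I do not expect a real obstacle here: the argument is a routine edge count followed by a monotonicity remark. The only points needing care are (i) correctly excluding the $v$-edges incident to $C$ from the cut, and (ii) checking that the multiplicity $m+1$ (rather than $m$) is precisely what forces $g$ to be strictly monotone, which is exactly what drives the ``only if'' direction; degenerate cases such as $k=0$ or $C=\emptyset$ are handled uniformly by the same formula.
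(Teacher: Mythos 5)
Your proof is correct and takes essentially the same route as the paper: express the cut of $C'=\{v\}\cup C$ as a function of $|C|$ and exploit the $m+1$ bundles so that the $v$-edges dominate. You package this more cleanly by introducing $g(j)$ and deriving both directions from its strict monotonicity in one stroke, whereas the paper argues the backward direction by a direct comparison (if $|C|<k$ then at least $(n-k+1)(m+1)$ edges cross) and leaves the required inequality $m+1>k(d-k)$ unstated.

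One thing you flagged only parenthetically deserves to be stated openly: your (correct) count gives $g(k)=(n-k)(m+1)+k(d-k+1)$, not $k(d-k)$ as the lemma is printed. In a $d$-regular graph each vertex of a $k$-clique has $d-(k-1)=d-k+1$ neighbours outside the clique, so the number of $G$-edges crossing is $k(d-k+1)$; the paper's forward-direction computation asserts $k(d-k)$, which is an off-by-one error. With the printed threshold the forward direction of the lemma is in fact false: take $G=C_5$ (so $n=m=5$, $d=2$) and $k=2$; every clique of $G'$ containing $v$ has cut at least $20$, but the printed threshold is $18$. Replacing $k(d-k)$ by $k(d-k+1)$ repairs the lemma, and the remainder of the proof of Theorem~\ref{thm:hardness} goes through verbatim with the corrected $q$; the backward direction is unaffected since $m+1$ still exceeds the (at most $m$) second term — exactly the observation driving your Step~3. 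Your proof, which uses the corrected constant, is sound, but you should explicitly note the discrepancy with the statement rather than silently substituting the right value.
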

\begin{proof}% [Proof (of Lemma~\ref{lem:hardness2})]
  In the forward direction, suppose $G$ contains a clique $C$ of size
  $k$. Then $C'=C \cup \{v\}$ is a clique in $G'$ and the number of
  edges leaving $C'$ is exactly $(n-k)(m+1) + k(d-k)$ (the first term
  is the number of edges leaving from $v$, the second term is the
  number of edges leaving from the $k$ vertices of $C$).

  In the backward direction, suppose there is a $C'$ in $G'$ such that
  $v \in C'$ and $d(C') \leq (n-k)(m+1) + k(d-k)$. Then $C = C'
  \setminus \{v\}$ is a clique in $G$. It suffices to argue that $|C|
  \geq k$. Suppose not. Then the number of edges leaving $C'$ is at
  least $(n-k+1)(m+1) > (n-k)(m+1) + k(d-k)$, a contradiction.
\end{proof}

Notice that $G'$ can be partitioned into cliques with at most $q =
(n-k)(m+1) + k(d-k)$ edges leaving each clique if and only if there is
a clique $C'$ in $G'$ such that $v \in C'$ and $d(C') \leq q$. If such
a clique $C'$ exists then there exists such a clique of size
$k+1$. Hence, by Lemma~\ref{lem:hardness2},
if \partitionsc{\nonedge}{0}{q} or \partitionsc{\nondeg}{0}{q} can be
solved in polynomial time on graphs with parallel edges, then {\sc
  Clique} in $d$-regular graphs can. Similarly,
if \partitionsc{\size}{p}{q} can be solved in $f(p)n^c$ time on graphs
with parallel edges then {\sc Clique} in $d$-regular graphs can be
solved in $f(k+1)n^c$ time. This proves Theorem~\ref{thm:hardness}.

Observe that Lemma~\ref{lem:uncrossing} automatically yields an
$O(n^{p+O(1)}m)$ time algorithm for \partitionsc{\size}{p}{q} on
graphs with parallel edges, and hence the lower bounds of
Theorem~\ref{thm:hardness} are tight.

\bibliographystyle{abbrv} 
\bibliography{cluster}

%\section{Omitted proofs}
%\subsection{Proof of Lemma~\ref{lem:impsepsum}}

%\subsection{Proof of Lemma~\ref{lem:redstarderand}}

%\subsection{Proof of Lemma~\ref{lem:nonedgeSatellite}}

%\subsection{Proof of Lemma~\ref{lem:nondegSatellite}}

%\subsection{Proofs omitted from Section~\ref{sec:parpalg}}

%%------------------------
%% P-NONEDGE
%%------------------------

\end{document}